%
%
%

\documentclass[10pt]{article}   
\usepackage{latexsym}     
\usepackage{enumerate}
\usepackage{ams}  
\usepackage{amsthm}     
\usepackage{amssymb}     
\usepackage{amsmath}     
\usepackage{proof}     
\usepackage{graphics}     
\usepackage{color}     
\usepackage[all]{xy}    
\newtheorem{theorem}{Theorem}[section]
\newtheorem{lemma}[theorem]{Lemma}

\newtheorem{proposition}[theorem]{Proposition}

\theoremstyle{definition}


\newcommand{\bbP}{\ensuremath{\mathbb{P}}}  

\newcommand{\bP}{\ensuremath{\mathbf{P}}}   

\newcommand{\cL}{\ensuremath{\mathcal{L}}}  
\newcommand{\cH}{\ensuremath{\mathcal{H}}}  
\newcommand{\cS}{\ensuremath{\mathcal{S}}}  

\newcommand{\fA}{\ensuremath{\mathfrak{A}}} 
\newcommand{\fB}{\ensuremath{\mathfrak{B}}} 
\newcommand{\fC}{\ensuremath{\mathfrak{C}}} 
\newcommand{\fD}{\ensuremath{\mathfrak{D}}} 

\newcommand{\sX}{{\sf X}}                   
\newcommand{\sH}{{\sf H}}                   

\newcommand{\set}[1]{\{#1\}}
\renewcommand{\phi}{\varphi}

\newcommand{\PTIME}{\textsc{PTime}}
\newcommand{\NPTIME}{\textsc{NPTime}}

\title{The Hamiltonian Syllogistic}
\author{ Ian Pratt-Hartmann} \date{}
\begin{document}
\maketitle
\begin{abstract}
\noindent
This paper undertakes a re-examination of Sir William Hamilton's
doctrine of the {\em quantification of the predicate}. Hamilton's
doctrine comprises two theses. First, the predicates of traditional
syllogistic sentence-forms contain implicit existential quantifiers,
so that, for example, {\em All $p$ are $q$} is to be understood as
{\em All $p$ are {\em some} $q$}. Second, these implicit quantifiers can
be meaningfully dualized to yield novel sentence-forms, such as, for
example, {\em All $p$ are {\em all} $q$}. Hamilton attempted to
provide a deductive system for his language, along the lines of the
classical syllogisms. We show, using techniques unavailable to
Hamilton, that such a system does exist, though with qualifications
that distinguish it from its classical counterpart.
\end{abstract}
\section{Introduction}
\label{sec:introduction}
By the {\em classical syllogistic}, we understand the set of English
sentences of the forms
\begin{equation}
\begin{array}{lll}
\mbox{Every $p$ is a $q$ \hspace{2.5cm}} 
&  & 
\mbox{Some $p$ is a $q$ \hspace{2.8cm}}\\
\mbox{No $p$ is a $q$} & & 
\mbox{Some $p$ is not a $q$},
\end{array}
\label{eq:EnglishS}
\end{equation}
where $p$ and $q$ are common (count) nouns.  These sentence-forms are
evidently logically equivalent to the following more cumbersome forms:
\begin{equation}
\begin{array}{lll}
\mbox{Every $p$ is identical to some $q$} 
\hspace{0cm} & &
\mbox{Some $p$ is identical to some $q$}\\
\mbox{No $p$ is identical to any $q$} & & 
\mbox{Some $p$ is not identical to any $q$}.
\end{array}
\label{eq:EnglishH1}
\end{equation}
By the {\em Hamiltonian syllogistic}, we understand the set of
sentences of the forms~\eqref{eq:EnglishH1}, together with the set of
sentences of the forms
\begin{equation}
\begin{array}{lll}
\mbox{Every $p$ is identical to every $q$} 
\hspace{0cm} & &
\mbox{Some $p$ is identical to every $q$}\\
\mbox{No $p$ is identical to every $q$} & &
\mbox{Some $p$ is not identical to every $q$},
\end{array}
\label{eq:EnglishH2}
\end{equation}
obtained from~\eqref{eq:EnglishH1} by dualizing the second quantifier.
Taking account of the equivalence of the forms~\eqref{eq:EnglishS}
and~\eqref{eq:EnglishH1}, we informally regard the classical
syllogistic as a subset of the Hamiltonian syllogistic. The
sentence-forms~\eqref{eq:EnglishH2} have no idiomatic English
equivalents. We take their respective first-order translations to be
\begin{equation}
\begin{array}{lll}
\forall x(p(x) \rightarrow \forall y(q(y) \rightarrow x = y)) \hspace{0.5cm} & & 
\exists x(p(x) \wedge \forall y(q(y) \rightarrow x = y)) \hspace{0.5cm}\\
\forall x(p(x) \rightarrow \neg \forall y(q(y) \rightarrow x = y)) & &  
\exists x(p(x) \wedge \neg \forall y(q(y) \rightarrow x = y)).
\end{array}
\label{eq:FOLH2}
\end{equation}
Thus, for example, {\em Every $p$ is identical to every $q$} is true
just in case there are no \mbox{$p$\hspace{0.5mm}s}, or no $q$s, or
there is exactly one $p$ and one $q$, and they are identical. Observe
that determiners in subjects are taken to outscope those in
predicates.  Observe also that verb negation is taken to outscope the
following predicate determiner.  Thus, for example, {\em Some $p$ is
  not identical to every $q$} is true just in case some $p$ is
distinct from some $q$.

By a {\em classical syllogism} we understand any of the valid
two-premise argument patterns for sentences of the
forms~\eqref{eq:EnglishS}, for example:
\begin{center}
\begin{minipage}{5cm}
\begin{tabbing}
Every $p$ is a $q$\\
\underline{Every $q$ is an $r$}\\
Every $p$ is an $r$
\end{tabbing}
\end{minipage}
\hspace{4cm}
\begin{minipage}{5cm}
\begin{tabbing}
Every $q$ is an $r$\\
\underline{Some $p$ is a $q$}\\
Some $p$ is an $r$.
\end{tabbing}
\end{minipage}
\end{center}
It is known that the classical syllogisms---with one or two minor
additions---constitute a sound and complete proof system for the
classical syllogistic. Does there exist a comparable system
of rules for the Hamiltonian syllogistic?

The Hamiltonian syllogistic is so called because of its
more-than-passing resemblance to Sir William Hamilton's doctrine of
the {\em quantification of the predicate}. (That is: Sir
William Hamilton, Bart., the Scottish philosopher, not Sir William
Hamilton, Kt., the Irish mathematician who discovered quaternions.)
According to that doctrine, the predicates of the traditional
syllogistic forms
\begin{equation*}
\begin{array}{lll}
\mbox{All $p$ are $q$} \hspace{3cm} 
& & 
\mbox{Some $p$ are $q$} \hspace{3cm} \\
\mbox{No $p$ are $q$} & & 
\mbox{Some $p$ are not $q$},
\end{array}
\end{equation*}
contain a suppressed existential quantifier (present, as Hamilton put
it, {\em in thought}), which can meaningfully be dualized to yield the forms
\begin{equation*}
\begin{array}{lll}
\mbox{All $p$ are all $q$} \hspace{1.5cm} 
& \hspace{1cm} & 
\mbox{Some $p$ are all $q$}  \hspace{2cm} \\
\mbox{No $p$ are all $q$} & & 
\mbox{Some $p$ are not all $q$}.
\end{array}
\end{equation*}
These latter sentences are of course, grammatically marginal, and we
are owed an account of their purported meaning. Unfortunately,
Hamilton's presentation is hopelessly obscure in this regard: the
closest we get to a formal account are the collections of diagrams
in~\cite{qpr:hamilton53}, pp.~682--683 and~\cite{qpr:hamilton60},
p.~277.  However, it is certain, from the accompanying text, that
Hamilton took {\em All $p$ are all $q$} to assert that $p$ and $q$ are
coextensive---different from the meaning of the formula $\forall
x(p(x) \rightarrow \forall y(q(y) \rightarrow x = y))$, and logically
uninteresting.

Hamilton originally expounded his theory in an 1846 edition of the
works of Sir Thomas Reid, in the form of a Prospectus of ``An Essay
towards a New Analytic of Logical Forms'', reproduced, with some
variations, in the two sources mentioned above. The essay itself was
never written. Nevertheless, Hamilton's theory generated a heated
debate with Augustus De Morgan, and was the subject of a critical
commentary by John Stuart Mill (\cite{mill79}, Ch.~XXII).  The present
paper attempts neither to resurrect that debate, nor to adjudicate its
outcome. Unlike Hamilton's quantified predicates, the sentence-forms
in~\eqref{eq:EnglishH1} and~\eqref{eq:EnglishH2} are clearly
grammatical, and the question of the existence of sound and complete
proof procedures for this language consequently well-formed. Of
course, to be well-formed is one thing; to be well-motivated,
another. We mention just one striking historical fact by way of
justification. Notwithstanding their dubious grammatical status,
copula sentences with quantified predicates make regular appearances
in discussions of the syllogism, beginning with Aristotle himself
(see~\cite{qpr:aristotlePA}, A27, 43a12--43b22). Indeed, Hamilton's
own extensive survey of this literature can be found
in~\cite{qpr:hamilton60}, pp.~298-317. Why, if we happily judge {\em
  No pacifist admires every quaker} to be grammatical, are we much
less comfortable with {\em No pacifist {\em is} every quaker}?  What,
if anything, is this non-grammaticality judgement preventing us from
expressing? And if there is something, what would be the logical
consequences of expressing it anyway?  Thus, our investigation of the
Hamiltonian syllogistic addresses a venerable, if currently quiescent,
issue.

On the basis of the equivalence of the forms~\eqref{eq:EnglishS}
and~\eqref{eq:EnglishH1}, we take the best candidates for syllogistic
forms with universally quantified predicates to
be~\eqref{eq:EnglishH2}, interpreted as~\eqref{eq:FOLH2}.  We show in
the sequel that, under this interpretation, no finite set of
syllogistic rules can be sound and complete for the Hamiltonian
syllogistic, a fact which distinguishes it from its classical
subset. However, we do provide a finite set of such rules which is
sound and {\em refutation}-complete---i.e.~becomes complete if the
rule of {\em reductio ad absurdum} is permitted as a final step. We go
on to consider the effect of adding noun-level negation to the
Hamiltonian syllogistic, yielding such sentence-forms as {\em No
  non-$p$ is identical to every non-$q$}. We show that, unless
\PTIME=\NPTIME, no finite set of syllogistic rules can be sound and
refutation-complete for this extended language. However, we do provide
a finite set of such rules which are sound and complete if the rule of
{\em reductio ad absurdum} may be used without restriction. Such
sensitivity to noun-level negation again distinguishes the Hamiltonian
syllogistic from its classical subset.

By replacing the words {\em is identical to} in~\eqref{eq:EnglishH1}
and~\eqref{eq:EnglishH2} by a suitably inflected transitive verb $v$
({\em admire}, {\em despise} \ldots), we obtain the forms
\begin{align*}
& \begin{array}{lll}
\mbox{Every $p$ $v$s some $q$} 
& \hspace{2cm} & 
\mbox{Some $p$ $v$s  some $q$}\\
\mbox{No $p$ $v$s  any $q$} & & 
\mbox{Some $p$ does not $v$ any $q$}\\
\mbox{Every $p$ $v$s  every $q$} 
& & 
\mbox{Some $p$ $v$s  every $q$}\\
\mbox{No $p$ $v$s  every $q$} & & 
\mbox{Some $p$ does not $v$ every $q$}.
\end{array}
\end{align*}
This language was analysed in Pratt-Hartmann and
Moss~\cite{qpr:p-h+m09}, where it was called the {\em relational
  syllogistic}.  It was shown there that no finite set of syllogistic
rules in the relational syllogistic is sound and complete, though
there is a finite set of rules that is sound and
refutation-complete. It was shown in the same paper that, when the
relational syllogistic is extended with noun-level negation, there is
no finite set of syllogistic rules that is sound and complete for the
resulting language, even when the rule of {\em reductio ad absurdum}
may be used without restriction. Thus, the Hamiltonian syllogistic
differs in its proof-theoretic properties from the relational
syllogistic as well.

\section{Syntax and semantics}
\label{sec:synsem}
In this section, we define five formal languages: ({\em i}) $\cS$, a
formalization of the classical syllogistic, ({\em ii}) $\cS^\dagger$,
its extension with noun-level negation, ({\em iii}) $\cH$, a
formalization of the Hamiltonian syllogistic, ({\em iv})
$\cH^\dagger$, its extension with noun-level negation, and ({\em v})
$\cH^{*\dagger}$, an extension of $\cH^\dagger$ motivated chiefly by
the formalism used below.

Fix a countably infinite set $\bP$.  We refer to any element of $\bP$
as an {\em atom}.  A {\em literal} is an expression of either of the
forms $p$ or $\bar{p}$, where $p$ is an atom.  
A literal which is an atom is called {\em positive},
otherwise, {\em negative}. If $\ell = \bar{p}$ is a negative literal,
then we denote by $\bar{\ell}$ the positive literal $p$. A {\em
  structure} is a pair $\fA = \langle A, \{ p^\fA \}_{p \in \bP}
\rangle$, where $A$ is a non-empty set, and $p^\fA \subseteq A$, for
every $p \in \bP$. The set $A$ is called the {\em domain} of $\fA$. We
extend the map $p \mapsto p^\fA$ to negative literals by setting, for
any atom $p$,
\begin{equation*}
\bar{p}^\fA = A \setminus p^\fA.
\end{equation*}
Intuitively, we may think of the elements of $\bP$ as common
count-nouns, such as {\em pacifist}, {\em quaker}, {\em republican},
etc., and we think of $p^\fA$ as the set of things falling under the
noun $p$ according to the structure $\fA$.  Thus, we may read $\bar{p}$ as
either {\em non-$p$} or {\em not a $p$}, depending on grammatical
context.

An $\cS$-{\em formula} is any expression of the forms
\begin{equation}
\begin{array}{lllll}
  \forall(p,q) & \hspace{1cm} &   \forall(p,\bar{q}) & \hspace{1cm} & 
  \forall(\bar{p},\bar{q})\\
\exists(p,q) & \hspace{1cm} & \exists(p,\bar{q}) & \hspace{1cm} & \exists(\bar{p},q),
\end{array}
\label{eq:syntaxS}
\end{equation}
where $p$ and $q$ are atoms; and an $\cS^\dagger$-{\em formula} is any
expression of the forms
\begin{equation}
\begin{array}{lll}
\forall(\ell,m) & \hspace{2cm} & \exists(\ell,m),
\end{array}
\label{eq:syntaxSdag}
\end{equation}
where $\ell$ and $m$ are literals. Thus, every $\cS$-formula is an
$\cS^\dagger$-formula. If $\fA$ is a structure, we write $\fA \models
\forall(\ell,m)$ if $\ell^\fA \subseteq m^\fA$, and $\fA \models
\exists(\ell,m)$ if $\ell^\fA \cap m^\fA \neq \emptyset$.  We think of
$\fA \models \phi$ as asserting that $\phi$ is true in the structure
$\fA$.  Thus, we may read $\forall(\ell,m)$ as {\em Every $\ell$ is an
  $m$} and $\exists(\ell,m)$ as {\em Some $\ell$ is an $m$}.  Under
these semantics, the formulas $\exists(\ell,m)$ and $\exists(m,\ell)$
are true in exactly the same structures; and similarly for the
formulas $\forall(\ell,m)$ and $\forall(\bar{m},\bar{\ell})$. In the
sequel, we identify these pairs of formulas, silently converting one
into the other where needed.  Taking account of these identifications,
$\cS$ contains four different formulas (not six), which may be glossed
by the sentence-forms~\eqref{eq:EnglishS} or, equivalently,
\eqref{eq:EnglishH1}. Likewise, $\cS^\dagger$ contains six different
formulas (not eight), where $\forall(\bar{p}, q)$ is to be glossed as
{\em Every non-$p$ is a $q$}, and $\exists(\bar{p}, \bar{q})$ as {\em
  Some non-$p$ is not a $q$}.

Turning now to the languages $\cH$, $\cH^\dagger$ and $\cH^{*\dagger}$, a
{\em c-term} is either a literal or any expression of the forms
$\forall p$ or $\overline{\forall p}$, where $p$ is an atom; and an
{\em e-term} is either a literal or any expression of the forms
$\forall \ell$ or $\overline{\forall \ell}$, where $\ell$ is a
literal. Thus, all literals are c-terms, and all c-terms are
e-terms. If $e$ is an e-term of the form $\overline{\forall \ell}$, we
take $\bar{e}$ to be the corresponding e-term $\forall \ell$. It
follows that $\bar{e}$ is a c-term if and only if $e$ is, and
$\bar{e}$ is a literal if and only if $e$ is; moreover, for any e-term
$e$, $\bar{\bar{e}} = e$.  If $\fA$ is a structure, we extend the map
$\ell \mapsto \ell^\fA$ to non-literal e-terms by setting, for any
literal $\ell$,
\begin{align*}
(\forall \ell)^\fA & = \{a \in A \mid \mbox{$a = b$ for all  $b \in \ell^\fA$}\}\\
(\overline{\forall \ell})^\fA & = \{a \in A \mid \mbox{$a \neq b$ for some  $b \in \ell^\fA$}\},
\end{align*}
Thus, we may read $\forall \ell$ as {\em thing that is identical to
  every $\ell$}, and $\overline{\forall \ell}$ as {\em thing that is
  not identical to every $\ell$} (that is: {\em thing that is distinct
  from some $\ell$}). Because terms of the form $\forall \ell$ can be
confusing to parse in certain contexts, we sometimes enclose them in
parentheses, thus: $(\forall \ell)$.

An $\cH$-{\em formula} is any expression of the forms
\begin{equation*}
\begin{array}{lll}
\forall(p, c) & \hspace{2cm} & \forall(c, \bar{p})\\
\exists(p, c) & & \exists(c,p),
\end{array}
\end{equation*}
where $p$ is an atom and $c$ is a c-term; an $\cH^\dagger$-{\em
  formula} is any expression of the forms
\begin{equation*}
\begin{array}{lll}
\forall(\ell, e) & \hspace{2cm} & \exists(\ell, e),
\end{array}
\end{equation*}
where $\ell$ is a literal and $e$ is an e-term; and an
$\cH^{*\dagger}$-{\em formula} is any expression of the forms
\begin{equation*}
\begin{array}{lll}
\forall(e,f) & \hspace{2cm} & \exists(e,f),
\end{array}
\end{equation*}
where $e$ and $f$ are e-terms.  Thus, every $\cS^{\dagger}$-formula is
an $\cH$ formula, every $\cH$-formula is an $\cH^{\dagger}$ formula,
and every $\cH^{\dagger}$-formula is an $\cH^{*\dagger}$ formula.  We
define $\fA \models \phi$ for $\cH^{*\dagger}$-formulas in the same
way as for $\cS^{\dagger}$-formulas, again silently converting
$\exists(e,f)$ to $\exists(f,e)$, and $\forall(e,f)$ to
$\forall(\bar{f},\bar{e})$, as needed.  Taking account of these
conversions, the eight forms of $\cH$-formula may be glossed using the
sentence-forms~\eqref{eq:EnglishH1} and~\eqref{eq:EnglishH2}; and
$\cH^\dagger$-formulas may be similarly glossed, but using negated
noun-phrases such as {\em non-pacifist}, {\em non-quaker}, etc.~in the
obvious way.  Formulas of the language $\cH^{*\dagger}$, by contrast,
require more elaborate English translations: for example,
$\forall(\forall \bar{p},\overline{\forall q})$ may be glossed as:
\begin{quote}
  Everything that is identical to every non-$p$ is distinct from some
  $q$. 
\end{quote}
The primary motivation for considering the system $\cH^{*\dagger}$ is
the greater simplicity of its associated deduction system.

We denote the set of all $\cS$-formulas by $\cS$, and similarly for
the other languages considered here. Where the language is clear from
context, we speak simply of {\em formulas}.  If $\phi = \forall(e,f)$,
we write $\bar{\phi}$ to denote $\exists(e,\bar{f})$; and if $\phi =
\exists(e,f)$, we write $\bar{\phi}$ to denote
$\forall(e,\bar{f})$. Thus, $\bar{\bar{\phi}} = \phi$, and, in any
structure $\fA$, $\fA \models \phi$ if and only if $\fA \not
\models \bar{\phi}$. It is simple to check that, if $\cL$ is
any of the languages $\cS$, $\cS^\dagger$, $\cH$ or $\cH^\dagger$,
then $\phi \in \cL$ implies $\bar{\phi} \in \cL$: that is, all the
languages introduced above are, in effect, closed under negation.  If
$\Theta$ is a set of formulas, we write $\fA \models \Theta$ if, for
all $\theta \in \Theta$, $\fA \models \theta$.  A formula $\theta$ is
{\em satisfiable} if there exists a structure $\fA$ such that $\fA
\models \theta$; a set of formulas $\Theta$ is {\em satisfiable} if
there exists $\fA$ such that $\fA \models \Theta$.  If, for all
structures $\fA$, $\fA \models \Theta$ implies $\fA \models \theta$,
we say that $\Theta$ {\em entails} $\theta$, and write $\Theta \models
\theta$. We call a formula of the form $\exists(e,\bar{e})$ an {\em
  absurdity}, and use $\bot$ to denote, indifferently, any
absurdity. Evidently, $\bot$ is unsatisfiable.

We illustrate the logics $\cH$ and $\cH^\dagger$ with some sample
entailments.  In the former case, we have, for example:
\begin{equation}
\set{\exists(p,\forall q), \exists(q,o)} \models \forall(q, o).
\label{eq:validityH}
\end{equation}
For suppose that some $p$ is identical to every $q$, and there is a $q$
which is also an $o$.  Then there is exactly one $q$, and it is $o$;
therefore, every $q$ is an $o$. In the latter case, we have, for example:
\begin{align}
& \set{\forall(p,\forall p), \forall(\bar{p}, p), \exists(q_1, q_1)}
  \models \forall(q_2, q_1)
\label{eq:validityHdagger1}\\
& \set{\forall(p,\forall p), \forall(\bar{p},\forall \bar{p}), 
\exists(q_1, \bar{q}_2), \exists(q_2, \bar{q}_3)} \models 
\forall(q_3, q_1).
\label{eq:validityHdagger2}
\end{align}
The validity~\eqref{eq:validityHdagger1} follows from the fact that
any model of the premises has a 1-element domain. Likewise,
in~\eqref{eq:validityHdagger2}, any model of the premises has a
2-element domain.  Thus, in the language $\cH^\dagger$, it is possible to
write satisfiable sets of formulas whose only models are of size 1 or
2. (This is trivially impossible in $\cH$.) On the other hand, we
shall see in Theorem~\ref{theo:largeModels} that, if a set of
$\cH^{*\dagger}$-formulas has a model with three or more elements,
then it has arbitrarily large models.

To ease readability in proofs, we employ the following variable-naming
conventions.  The variables $o$, $p$ and $q$ (possibly with decorations)
are to be understood as ranging only over atoms, the variables $\ell$
and $m$ only over literals, the variables $c$ and $d$ only over c-terms,
and the variables $e$, $f$ and $g$ over e-terms. Thus, for example, if
$S$ is a set of e-terms, the statement ``there exists $\ell \in S$
\ldots'' should be read as ``there exists a literal $\ell \in S$
\ldots'', and so on.

\section{Proof Theory}
\label{sec:proofTheory}
By a {\em syllogistic language}, we mean any of the languages $\cS$,
$\cS^\dagger$, $\cH$, $\cH^\dagger$ or $\cH^{*\dagger}$. This
enumerative definition could be replaced by a more general
characterization; however, the details are not relevant to the concerns
of this paper, and we avoid them.  The problem of finding sound and
complete rule-systems for the language $\cS$ was solved
(independently) in Smiley~\cite{qpr:smiley73},
Corcoran~\cite{qpr:corcoran} and Martin~\cite{qpr:martin97}. This
result is strengthened marginally in Pratt-Hartmann and
Moss~\cite{qpr:p-h+m09} (as explained below), and extended to the
language $\cS^\dagger$.  Here, we seek a system of syllogistic rules
which generate exactly the entailments in the languages $\cH$,
$\cH^\dagger$ and $\cH^{*\dagger}$. Because our results will be partly
negative in character, we adopt a relatively formal approach.

Let $\cL$ be a syllogistic language.  A {\em syllogistic rule} in
$\cL$ is a pair $\Theta/\theta$, where $\Theta$ is a finite set
(possibly empty) of $\cL$-formulas, and $\theta$ an $\cL$-formula.  We
call $\Theta$ the {\em antecedents} of the rule, and $\theta$ its {\em
  consequent}.  We generally display rules in `natural-deduction'
style. For example,
\begin{equation}
\begin{array}{ll}
\infer{\exists (p,o)}
                   {\exists (p,q) & 
             \qquad \forall (q,o)}    
\hspace{3cm} & 
\infer[,]{\exists (p,\bar{o})}
                   {\exists (p,q) & 
             \qquad \forall (q,\bar{o})}    
\end{array}
\label{eq:dariiFerio}
\end{equation}
where $p$, $q$ and $o$ are atoms, are syllogistic rules in $\cS$,
(hence in any larger syllogistic language) corresponding to the
traditional syllogisms {\em Darii} and {\em Ferio}, respectively. A
rule is {\em valid} if its antecedents entail its consequent. Thus, the
rules~\eqref{eq:dariiFerio} are valid.  As a further example, the
following obvious generalizations of~\eqref{eq:dariiFerio} are valid
syllogistic rules in $\cH$:
\begin{equation}
\begin{array}{ll}
\infer{\exists (p,\forall o)}
                   {\exists (p,q) & 
             \qquad \forall (q, \forall o)}    
\hspace{3cm} & 
\infer[.]{\exists (p,\overline{\forall o})}
                   {\exists (p,q) & 
             \qquad \forall (q,\overline{\forall o})}
\end{array}
\label{eq:nonClassical}
\end{equation}

Let $\cL$ be a syllogistic language and $\sX$ a set of syllogistic
rules in $\cL$; and denote by $\bbP(\cL)$ the set of subsets of
$\cL$. A {\em substitution} is a function $g: \bP \rightarrow \bP$; we
extend $g$ to $\cL$-formulas and to sets of $\cL$-formulas in the
obvious way.  An {\em instance} of a syllogistic rule $\Theta/\theta$
is the syllogistic rule $g(\Theta)/g(\theta)$, where $g$ is a
substitution.  We define the {\em direct syllogistic derivation
  relation} $\vdash_\sX$ to be the smallest relation on $\bbP(\cL)
\times \cL$ satisfying:
\begin{enumerate}
\item if $\theta \in \Theta$, then $\Theta \vdash_\sX \theta$;
\item if $\{\theta_1, \ldots, \theta_n\}/\theta$ is a syllogistic rule
  in $\sX$, $g$ a substitution, $\Theta = \Theta_1 \cup \cdots \cup
  \Theta_n$, and $\Theta_i \vdash_\sX g(\theta_i)$ for all $i$ ($1
  \leq i \leq n$), then $\Theta \vdash_\sX g(\theta)$.
\end{enumerate}
Where the language $\cL$ is clear from context, we omit
reference to it; further, we typically contract {\em syllogistic rule}
to {\em rule}.  Instances of the relation $\vdash_{\sX}$ can
always be established by {\em derivations} in the form of finite trees
in the usual way. For instance, the derivation
\begin{equation*}
\infer[(\mbox{\small D1})]{\exists(p,\bar{r})}
        {\infer[(\mbox{\small D1})]{\exists(p,o)}
               {\exists(p,q)
                &
                \forall(q,o)}
        &
        \forall(o,\bar{r})}
\end{equation*}
establishes that, for any set of syllogistic rules ${\sX}$
containing the rules~\eqref{eq:dariiFerio}, 
\begin{equation*}
\set{\exists(p,q), \forall(q,o), \forall(o,\bar{r})} \vdash_{\sX} 
   \exists(p,\bar{r}).
\end{equation*}
In the sequel, we reason freely about derivations in order to
establish properties of derivation relations. The tags (D1) merely
serve to indicate the rule employed in each step of the derivation:
both the rules in~\eqref{eq:dariiFerio} fall under a group which we shall
later call (D1).

The syllogistic derivation relation $\vdash_\sX$ is said to be {\em
  sound} if $\Theta \vdash_\sX \theta$ implies $\Theta \models
\theta$, and {\em complete} (for $\cL$) if $\Theta \models \theta$
implies $\Theta \vdash_\sX \theta$.  A set $\Theta$ of formulas is
\emph{inconsistent} ({\em with respect to} $\vdash_\sX$) if $\Theta
\vdash_\sX \bot$ for some absurdity $\bot$; otherwise,
\emph{consistent}. It is obvious that, for any set of rules $\sX$,
$\vdash_\sX$ is sound if and only if every rule in $\sX$ is valid. A
weakening of completeness called \emph{refutation-completeness} will
prove important in the sequel: $\vdash_\sX$ is {\em
  refutation-complete} if any unsatisfiable set $\Theta$ is
inconsistent with respect to $\vdash_\sX$.  Completeness trivially
implies refutation-completeness, but not conversely.

The languages $\cH^\dagger$ and $\cH^{*\dagger}$ turn out to require a
stronger form of proof-system than that provided by direct derivation
relations.  Let $\cL$ be a syllogistic language and $\sX$ a set of
syllogistic rules in $\cL$. We define the {\em indirect syllogistic
  derivation relation} $\Vdash_\sX$ to be the smallest relation on
$\bbP(\cL) \times \cL$ satisfying:
\begin{enumerate}
\item if $\theta \in \Theta$, then $\Theta \Vdash_\sX \theta$;
\item if $\{\theta_1, \ldots, \theta_n\}/\theta$ is a syllogistic rule
  in $\sX$, $g$ a substitution, $\Theta = \Theta_1 \cup \cdots \cup
  \Theta_n$, and $\Theta_i \Vdash_\sX g(\theta_i)$ for all $i$ ($1
  \leq i \leq n$), then $\Theta \Vdash_\sX g(\theta)$.
\item if
  $\Theta \cup \set{\theta} \Vdash_\sX \bot$, where $\bot$ is any absurdity,
  then $\Theta \Vdash_\sX \bar{\theta}$.
\end{enumerate}
The only difference is the addition of the final clause, which allows
us to derive a formula $\bar{\theta}$ from premises $\Theta$ if we can
derive an absurdity from $\Theta$ together with $\theta$.  Instances
of the indirect derivation relation $\Vdash_{\sX}$ may also be
established by constructing derivations, except that we need a little
more machinery to keep track of premises. This may be done as
follows. Suppose we have a derivation (direct or indirect) showing
that $\Theta \cup \{\theta\} \Vdash_{\sX} \bot$, for some absurdity
$\bot$. Let this derivation be displayed as
\begin{equation*}
\infer*{\hspace{1mm} \bot,} 
       {\theta_1 & \cdots & \theta_n & \theta & \cdots & \theta}
\end{equation*}
where $\theta_1, \ldots, \theta_n$ is a list of formulas of $\Theta$
(not necessarily exhaustive, and with repeats allowed).
Applying Clause~3 of the definition of $\Vdash_{\sX}$, we have
$\Theta \Vdash_{\sX} \bar{\theta}$, which we take to be established
by the derivation
\begin{equation*}
        \infer[\mbox{\small (RAA)}^1.]{\bar{\theta}} {\hspace{2mm}  \infer*{\bot}
          {\theta_1 &  \cdots &  \theta_n & [\theta]^1
            \cdots & [\theta]^1} \hspace{2mm}}
\end{equation*}
The tag (RAA) stands for {\em reductio ad absurdum}; the square
brackets indicate that the enclosed instances of $\theta$ have been
{\em discharged}, i.e.~no longer count among the premises; and the
numerical indexing is simply to make the derivation history clear.
Note that there is nothing to prevent $\theta$ from occurring among
the $\theta_1, \ldots, \theta_n$; that is to say, we do not have to
discharge all (or indeed any) instances of the premise $\theta$ if we
do not want to. Again, it should be obvious that, for any set of rules
${\sX}$, $\Vdash_\sX$ is sound if and only if every rule in $\sX$ is
valid, and $\Vdash_{\sX}$ is complete if it is refutation complete. It
is important to understand that {\em reductio ad absurdum} cannot be
formulated as a syllogistic rule in the technical sense defined here;
rather, it is part of the proof-theoretic machinery that converts any
set of rules ${\sX}$ into the derivation relation $\Vdash_{\sX}$.

Syllogistic rules that differ only by renaming of atoms have the same
sets of instances, and so may be regarded as identical. That is, in
rules such as~\eqref{eq:dariiFerio} and~\eqref{eq:nonClassical}, we
may informally think of the atoms $o$, $p$ and $q$ as metavariables
ranging over the set of atoms $\bP$. This suggests the following
notational convention. Taking the meta-variable $c$ to range over
c-terms, we may comprehend the four rules in~\eqref{eq:dariiFerio}
and~\eqref{eq:nonClassical} under the single schema
\begin{equation*}
\infer[(\mbox{\small D1}).]{\exists(p, c)}
                        {\exists(p,q) \hspace{0.25cm} \forall(q, c)}
\end{equation*}
We shall employ this schematic notation in the sequel.  Note, however,
that such schemata are always shorthand for a {\em finite} number of
rules. In the sequel, we generally refer to rule schemata simply as
{\em rules}.

The following complexity-theoretic observations on derivation
relations will prove useful in this paper.
\begin{lemma}
Let $\cL$ be a syllogistic language, $\theta \in \cL$ and $\Theta
\subseteq \cL$.  If there is a derivation \textup{(}direct or
indirect\textup{)} of $\theta$ from $\Theta$ using some set of rules
$\sX$, then there is such a derivation involving only the atoms
occurring in $\Theta \cup \set{\theta}$.
\label{lma:substitute}
\end{lemma}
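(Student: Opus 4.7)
The plan is to take any derivation of $\theta$ from $\Theta$ and homomorphically collapse it under a suitable substitution that leaves atoms occurring in $\Theta \cup \set{\theta}$ fixed while pushing all foreign atoms into this finite set. Let $P \subseteq \bP$ denote the set of atoms appearing in $\Theta \cup \set{\theta}$; since $\theta$ is a formula it contains at least one atom, so $P \neq \emptyset$. Pick any $p_0 \in P$ and define a substitution $h \colon \bP \to \bP$ by setting $h(p) = p$ for $p \in P$ and $h(p) = p_0$ otherwise. Then $h$ fixes every formula of $\Theta \cup \set{\theta}$ pointwise. Extending $h$ to literals, c-terms, e-terms, formulas, and sets of formulas in the obvious manner (so that, in particular, $h(\bar{\phi}) = \overline{h(\phi)}$ and $h$ sends absurdities to absurdities), we claim that applying $h$ to every formula occurring in the given derivation produces a derivation of the same shape that only uses atoms from $P$.

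The verification goes by induction on the construction of $\vdash_{\sX}$ (for the direct case) or $\Vdash_{\sX}$ (for the indirect case). For Clause~1, if $\theta' \in \Theta'$ then $h(\theta') \in h(\Theta')$, so the axiom step survives. For Clause~2, an application of a rule $\{\theta_1, \ldots, \theta_n\}/\theta_0$ via substitution $g$ is transformed into an application of the same rule via the composed substitution $h \circ g$; the key observation is that ``instance'' is closed under further substitution, so $h \circ g$ is again a valid substitution for the rule, and the inductive hypotheses on the subderivations of $g(\theta_i)$ yield derivations of $h(g(\theta_i))$ from $h(\Theta_i)$. For Clause~3 in the indirect case, suppose the final step is an RAA discharging $\psi$ from a subderivation witnessing $\Theta' \cup \set{\psi} \Vdash_\sX \bot$. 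By induction we obtain a derivation of $h(\bot)$, which is again an absurdity, from $h(\Theta') \cup \set{h(\psi)}$; applying RAA to discharge $h(\psi)$ yields $h(\Theta') \Vdash_\sX \overline{h(\psi)} = h(\bar{\psi})$, as required.

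Finally, specializing to the original endpoint, the transformed derivation witnesses $h(\Theta) \vdash_\sX h(\theta)$ (respectively $\Vdash_\sX$); but since $h$ is the identity on $\Theta \cup \set{\theta}$, this is literally $\Theta \vdash_\sX \theta$ (respectively $\Vdash_\sX$), and by construction every atom appearing anywhere in the new derivation lies in $P$. I expect the only mildly delicate point to be the RAA clause, since one must verify both that $h$ commutes with the formation of $\bar{\psi}$ and that discharged occurrences of $\psi$ all get replaced uniformly by $h(\psi)$; but both facts are immediate from the compositional definition of $h$ on formulas and the bookkeeping of discharges in natural-deduction-style derivations.
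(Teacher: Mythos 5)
Your proposal is correct and is essentially the paper's own argument spelled out in full: the paper simply says to uniformly replace every atom not occurring in $\Theta \cup \set{\theta}$ with one that does, which is exactly your substitution $h$, and your inductive verification (including the RAA clause) is the routine check the paper leaves implicit.
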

\begin{proof}
Given a derivation of $\theta$ from $\Theta$, uniformly replace any
unary atom that does not occur in $\Theta \cup \set{\theta}$ with one
that does.
\end{proof}
\begin{proposition}
Let $\cL$ be a syllogistic language, and $\sX$ a finite set of
syllogistic rules in $\cL$. The problem of determining whether $\Theta
\vdash_\sX \theta$, for a given set of $\cL$-formulas $\Theta$ and
$\cL$-formula $\theta$, is in \PTIME. Hence, if $\vdash_\sX$ is sound
and refutation-complete, the satisfiability problem for $\cL$ is in
\PTIME.
\label{prop:PTIME}
\end{proposition}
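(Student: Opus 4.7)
The plan is to implement a standard bottom-up saturation algorithm, appealing to Lemma \ref{lma:substitute} to bound the search space. First, I would let $P$ denote the finite set of atoms occurring in $\Theta \cup \set{\theta}$, and set $n = |P|$. The key observation is that, in each of the five syllogistic languages, terms are built from literals by at most one application of a constant-size set of constructors ($\forall(\cdot)$ or $\overline{\forall (\cdot)}$); hence the number of distinct e-terms over $P$ is $O(n)$, and, after the silent conversions identifying $\exists(e,f)$ with $\exists(f,e)$ and $\forall(e,f)$ with $\forall(\bar f, \bar e)$, the total number of $\cL$-formulas over $P$ is $O(n^2)$.

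Next, I would compute the set $\Sigma$ of $\cL$-formulas over $P$ derivable from $\Theta$ via $\sX$ by saturation: initialize $\Sigma \leftarrow \Theta$, and repeatedly, for each rule $\{\theta_1,\ldots,\theta_k\}/\theta$ in $\sX$ (finitely many) and each substitution $g$ mapping the atoms of that rule into $P$ (at most $n^{c}$ many, where $c$ is a constant bounded by the maximum number of atoms per rule), check whether $g(\theta_1),\ldots,g(\theta_k) \in \Sigma$, and if so add $g(\theta)$ to $\Sigma$; stop when no new formula is produced. Since $\Sigma$ grows monotonically and is bounded in size by $O(n^2)$, this halts after polynomially many rounds, each round taking polynomial time. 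By construction and Lemma \ref{lma:substitute}, $\Sigma$ contains precisely those $\cL$-formulas over $P$ derivable from $\Theta$; so we decide $\Theta \vdash_\sX \theta$ by testing $\theta \in \Sigma$.

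For the second claim, if $\vdash_\sX$ is sound and refutation-complete, then $\Theta$ is satisfiable iff it is consistent, i.e.\ iff $\Theta \not\vdash_\sX \bot$ for every absurdity $\bot$. Invoking Lemma \ref{lma:substitute} once more, it suffices to rule out absurdities $\exists(e,\bar e)$ whose atoms lie in $P$; there are only $O(n)$ such formulas. Running the saturation procedure once to produce $\Sigma$ and scanning it for an absurdity therefore decides satisfiability in polynomial time. The only point demanding any real care is the $O(n)$ bound on e-terms over $P$, which depends essentially on the fact that the syllogistic languages considered here permit no nested term-constructors; beyond that, the argument is entirely routine.
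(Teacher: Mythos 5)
Your proposal is correct and follows essentially the same route as the paper: invoke Lemma~\ref{lma:substitute} to restrict attention to the atoms of $\Theta \cup \set{\theta}$, observe that only quadratically many formulas can then occur in a derivation, and saturate bottom-up in polynomially many polynomial-time rounds, finally reducing satisfiability to checking for a derivable absurdity over those atoms. The only difference is presentational (you enumerate substitutions into $P$ where the paper enumerates $\ell$-tuples of already-derived premises), which does not affect the argument.
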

\begin{proof}
By Lemma~\ref{lma:substitute}, we may confine attention to derivations
featuring only the atoms in $\Theta \cup \set{\theta}$.  The number
$m$ of $\cL$-formulas featuring these atoms is bounded by a quadratic
function of $|\Theta \cup \set{\theta}|$; evidently, we need only
consider derivations with $m$ or fewer steps. Suppose that the maximum
number of premises of any rule in $\sX$ is $\ell$.  If $k < m$, and
the set of formulas derivable in $k$ steps has been computed, then we
may evidently compute the number of formulas derivable in $k+1$ steps
in time $O(m^{\ell+1})$. 
\end{proof}
Note that Proposition~\ref{prop:PTIME} does not apply to indirect
derivation relations. However, we do have a weaker global complexity
bound, even in this case.  If $\Phi$ is set of $\cL$-sentences, we say
that $\Phi$ is {\em complete} if, for every $\cL$-sentence $\phi$
featuring only the atoms occurring in $\Phi$, either $\phi \in \Phi$
or $\bar{\phi} \in \Phi$. Trivially, every satisfiable set of
$\cL$-sentences can be extended to a complete, satisfiable set of
$\cL$-sentences.  (Do not confuse this observation with
Lemma~\ref{lma:lindenbaum}.)
\begin{proposition}
Let $\cL$ be a syllogistic language, $\sX$ a finite set of syllogistic
rules in $\cL$, and $\Psi$ a complete set of $\cL$-sentences.  If
$\Psi \Vdash_\sX \bot$, then $\Psi \vdash_\sX \bot$.  Hence, if
$\Vdash_\sX$ is sound and complete, the satisfiability problem for
$\cL$ is in \NPTIME.
\label{prop:NPTIME}
\end{proposition}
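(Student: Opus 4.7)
The plan is to prove the first claim by a careful induction on the indirect derivation, with a strengthened invariant, and then combine with Proposition~\ref{prop:PTIME} for the complexity corollary. Specifically, I would show that for any derivation $D$ witnessing $\Theta \Vdash_\sX \phi$, and any complete set $\Psi$ of $\cL$-sentences with $\Theta \subseteq \Psi$ and with atoms including all atoms appearing anywhere in $D$, either $\Psi \vdash_\sX \phi$ or $\Psi \vdash_\sX \bot$. Invoking Lemma~\ref{lma:substitute} at the outset, I may assume $D$ uses only atoms of $\Theta \cup \set{\phi}$; taking $\Theta = \Psi$ and $\phi = \bot$ then recovers the stated proposition.

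The induction proceeds on the structure of $D$ following the three clauses defining $\Vdash_\sX$. The axiom case gives $\phi \in \Theta \subseteq \Psi$ immediately. For a rule $\{\theta_1,\dots,\theta_n\}/\theta$ applied with substitution $g$ to subderivations $D_i$ of $g(\theta_i)$, the inductive hypothesis applied to each $D_i$ (carrying the same ambient $\Psi$) either yields $\Psi \vdash_\sX \bot$ for some $i$ (and we are done), or yields $\Psi \vdash_\sX g(\theta_i)$ for every $i$, whereupon the same rule produces $\Psi \vdash_\sX g(\theta)$. The decisive case is \emph{reductio ad absurdum}: here $\phi = \bar{\theta}$ and the subderivation $D'$ establishes $\Theta \cup \set{\theta} \Vdash_\sX \bot_0$ for some absurdity $\bot_0$. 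Because the atoms of $\theta$ lie in $\Psi$, completeness of $\Psi$ forces $\bar\theta \in \Psi$ or $\theta \in \Psi$. In the former case $\Psi \vdash_\sX \bar\theta$ by the axiom clause; in the latter $\Psi \cup \set{\theta} = \Psi \supseteq \Theta \cup \set{\theta}$, so the inductive hypothesis applied to $D'$ delivers $\Psi \vdash_\sX \bot_0$ or $\Psi \vdash_\sX \bot$, either a direct witness of inconsistency. The main obstacle is calibrating this invariant: the premise sets at interior nodes of $D$ can shrink strictly, so the ambient complete $\Psi$ must be carried along rather than pinned to the current premises; it is precisely this slack that lets the \emph{reductio} case collapse $\Psi \cup \set{\theta}$ back to $\Psi$ when $\theta \in \Psi$, so that the induction recurses on a set that is still complete.

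For the complexity corollary, the number of $\cL$-sentences over the atoms of $\Theta$ is polynomial in $|\Theta|$, so a nondeterministic algorithm can guess a complete extension $\Psi \supseteq \Theta$ over those atoms in polynomial time. Since every satisfiable set of $\cL$-sentences extends to a complete satisfiable one on its own atoms, and since soundness plus completeness of $\Vdash_\sX$ equates unsatisfiability of $\Psi$ with $\Psi \Vdash_\sX \bot$, the first part of the proposition reduces satisfiability of $\Psi$ to $\Psi \not\vdash_\sX \bot$, which is decidable in \PTIME\ by Proposition~\ref{prop:PTIME}. Guessing $\Psi$ and running this polynomial check therefore places satisfiability of $\Theta$ in \NPTIME.
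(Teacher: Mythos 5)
Your argument is correct, and your treatment of the second half (guess a complete extension over the given atoms, then invoke Proposition~\ref{prop:PTIME}) is exactly the paper's. For the first half you and the paper exploit the same core observation but organize it differently. The shared core is the dichotomy furnished by completeness of $\Psi$: a premise $\theta$ discharged by (RAA) satisfies either $\theta \in \Psi$, in which case the subderivation above that (RAA) step is already a refutation of $\Psi$ and the step is superfluous, or $\bar{\theta} \in \Psi$, in which case the entire subtree concluding $\bar{\theta}$ can be replaced by the trivial one-node derivation. The paper packages this as an extremal argument: among all refutations of $\Psi$ choose one minimizing the number $k$ of (RAA) applications, inspect the last such application, and use the dichotomy to produce a refutation with fewer applications unless $k=0$. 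You instead run a structural induction over the whole derivation tree with the strengthened disjunctive invariant ``$\Psi \vdash_\sX \phi$ or $\Psi \vdash_\sX \bot$,'' carrying the ambient complete set $\Psi$ down through the shrinking premise sets --- which, as you rightly observe, is precisely what lets the (RAA) case recurse on a set that is still complete. Your version is a little longer but more explicit: it yields an effective recursive transformation of an arbitrary indirect derivation into a direct derivation of the same conclusion or of an absurdity, and it localizes exactly where completeness is consumed (once per (RAA) node), whereas the paper's version is shorter but relies on an informal surgery on derivation trees. One small point common to both proofs: completeness of $\Psi$ only speaks about sentences over the atoms occurring in $\Psi$, so the discharged premises and the absurdity must be over those atoms; your appeal to Lemma~\ref{lma:substitute} at the outset covers this, provided you note that the substitution may also rename the atom in $\bot$ itself, which is harmless since any absurdity will do.
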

\begin{proof}
For the first statement, suppose that there is an indirect derivation
of some absurdity $\bot$ from $\Psi$, using the rules $\sX$. Let the
number of applications of (RAA) employed in this derivation be $k$;
and assume without loss of generality that $\bot$ is chosen so that
this number $k$ is minimal. If $k > 0$, consider the last application
of (RAA) in this derivation, which derives a formula, say,
$\bar{\psi}$, discharging a premise $\psi$. Then there is an
(indirect) derivation of some absurdity $\bot'$ from $\Psi \cup
\{\psi\}$, employing fewer than $k$ applications of (RAA).  By
minimality of $k$, $\psi \not \in \Psi$, and so, by the completeness
of $\Psi$, $\bar{\psi} \in \Psi$. But then we can replace our original
derivation of $\bar{\psi}$ with the trivial derivation, so obtaining a
derivation of $\bot$ from $\Psi$ with fewer than $k$ applications of
(RAA), a contradiction. Therefore, $k = 0$, or, in other words, $\Psi
\vdash_\sX \bot$.  For the second statement, let a set of
$\cL$-sentences $\Phi$ be given. Now guess a complete superset $\Psi$
involving only those atoms occurring in $\Phi$.  Evidently, $|\Psi|$
is bounded by a polynomial function of $|\Phi|$.  By
Proposition~\ref{prop:PTIME}, we can check in polynomial time whether
$\Psi \vdash_\sX \bot$.
\end{proof}

We mentioned above that the existence of sound and complete
syllogistic systems for the languages $\cS$ and $\cS^\dagger$ has been
solved. More specifically, it is shown in Pratt-Hartmann and
Moss~\cite{qpr:p-h+m09}, that, for both languages, a finite set of
rules exist for which the associated direct derivation relation is
sound and complete. (The earlier work cited above showed only the
existence of sound and refutation-complete systems for $\cS$.) We are
now in a position to state the technical results of this paper:
\begin{enumerate}
\item
There is no finite set $\sX$ of syllogistic rules in $\cH$ such that
$\vdash_\sX$ is sound and complete (Theorem~\ref{theo:noComplete}).  
\item
There is a finite set ${\sH}$ of syllogistic rules in $\cH$ such that
$\vdash_{\sH}$ is sound and refutation-complete
(Theorem~\ref{theo:refComplete}).
\item
The problem of determining whether a set of $\cH^\dagger$-formulas is
satisfiable is \NPTIME-complete, and similarly for the problem of
determining whether a set of $\cH^{*\dagger}$-formulas is satisfiable
(Theorem~\ref{theo:NPcomplete}).  Hence, by Proposition~\ref{prop:PTIME},
unless \PTIME=\NPTIME, there
is no finite set $\sX$ of syllogistic rules in either $\cH^\dagger$ or
$\cH^{*\dagger}$ such that $\vdash_\sX$ is sound and
refutation-complete.
\item There is a finite set ${\sH^\dagger}$ of syllogistic rules in
  $\cH^\dagger$ such that $\Vdash_{\sH^\dagger}$ is sound and complete
  (Theorem~\ref{theo:completeDagger}).
\item
There is a finite set ${\sH^{*\dagger}}$ of syllogistic rules in
$\cH^{*\dagger}$ such that $\Vdash_{\sH^{*\dagger}}$ is sound and
complete (Theorem~\ref{theo:completeStarDagger}).
\end{enumerate}
The following sections of this paper are devoted to proofs of these
results. We round of the present section by establishing a version of
the Lindenbaum Lemma for indirect derivation relations. This result
will be used in Section~\ref{sec:complete}.
\begin{lemma}
Let $\cL$ be a syllogistic language, $\sX$ a finite set of syllogistic
rules in $\cL$, and $\Phi$ a set of $\cL$-formulas. If $\Phi$ is
$\Vdash_\sX$-consistent, then $\Phi$ has a $\Vdash_\sX$-consistent,
complete extension.
\label{lma:lindenbaum}
\end{lemma}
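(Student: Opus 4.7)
The plan is to carry out a standard Lindenbaum-style construction.  Since every atom occurring in $\Phi$ lies in the countable set $\bP$, the set of $\cL$-sentences built from atoms of $\Phi$ is at most countable; enumerate these sentences as $\phi_0, \phi_1, \ldots$\ and set $\Phi_0 = \Phi$.  Define inductively $\Phi_{n+1} = \Phi_n \cup \{\phi_n\}$ if this set is $\Vdash_\sX$-consistent, and $\Phi_{n+1} = \Phi_n \cup \{\bar{\phi}_n\}$ otherwise.  The desired extension will be $\Psi = \bigcup_n \Phi_n$.

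The whole argument hinges on the inductive claim that each $\Phi_n$ remains $\Vdash_\sX$-consistent, which reduces to showing: for any consistent $\Theta$ and any sentence $\phi$, at least one of $\Theta \cup \{\phi\}$, $\Theta \cup \{\bar{\phi}\}$ is consistent.  Suppose both fail.  Applying Clause~3 of the definition of $\Vdash_\sX$ to $\Theta \cup \{\phi\} \Vdash_\sX \bot$ yields $\Theta \Vdash_\sX \bar{\phi}$.  One then wants to paste this derivation of $\bar{\phi}$ into the derivation of $\bot$ from $\Theta \cup \{\bar{\phi}\}$, in place of every undischarged occurrence of the assumption $\bar{\phi}$, thereby obtaining $\Theta \Vdash_\sX \bot$ and contradicting consistency.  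Making this pasting precise amounts to establishing a \emph{cut property} for $\Vdash_\sX$: if $\Theta \Vdash_\sX \psi$ and $\Theta \cup \{\psi\} \Vdash_\sX \sigma$, then $\Theta \Vdash_\sX \sigma$.  I expect verifying this to be the main obstacle.  The proof is by induction on the derivation witnessing $\Theta \cup \{\psi\} \Vdash_\sX \sigma$, with an auxiliary monotonicity lemma (if $\Theta \Vdash_\sX \sigma$ and $\Theta \subseteq \Theta'$, then $\Theta' \Vdash_\sX \sigma$) established by a similar induction; in the Clause~3 case one first applies the inductive hypothesis to the shorter subderivation of the absurdity and only then reapplies Clause~3, which keeps the structure of the induction clean.

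To finish, I would check that $\Psi$ has the required properties.  Consistency of $\Psi$ is immediate: any derivation of an absurdity from $\Psi$ uses only finitely many premises, which therefore lie inside some $\Phi_n$, contradicting its consistency.  For completeness, observe that by construction $\Psi$ introduces no atoms beyond those of $\Phi$, so every $\cL$-sentence featuring only atoms occurring in $\Psi$ coincides with some $\phi_n$ in the enumeration; and at stage $n+1$ exactly one of $\phi_n$, $\bar{\phi}_n$ was adjoined to $\Phi_{n+1} \subseteq \Psi$, as required.
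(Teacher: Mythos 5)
Your proposal is correct and follows essentially the same route as the paper: a Lindenbaum construction whose inductive consistency step combines an application of (RAA) with the grafting of a derivation of $\bar{\phi}_n$ into a derivation of an absurdity from $\Phi_n \cup \{\bar{\phi}_n\}$. The only difference is presentational --- the paper splits on whether $\Phi_n \Vdash_\sX \bar{\phi}_n$ rather than on whether $\Phi_n \cup \{\phi_n\}$ is consistent, and it treats the cut/pasting step as immediate by simply displaying the chained derivation tree, whereas you rightly flag it as the point requiring a (routine) induction.
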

\begin{proof}
Enumerate the $\cL$-formulas as $\phi_0, \phi_1, \ldots$. Define
$\Phi_0 = \Phi$, and
\begin{equation*}
\Phi_{i+1} = 
\begin{cases}
\Phi \cup \set{\phi_{i}} & \text{if $\Phi \not \Vdash_\sX \bar{\phi}_i$}\\
\Phi \cup \set{\bar{\phi}_{i}} & \text{otherwise},
\end{cases}
\end{equation*}
for all $i \geq 0$.  We show by induction that each $\Phi_i$ is
consistent. From this it follows that $\Phi^* = \bigcup_{0 \leq i}
\Phi_i$ is consistent, thus proving the lemma. The case $i = 0$ is
true by hypothesis; so we suppose that $\Phi_i$ is consistent, but
$\Phi_{i+1}$ inconsistent, and derive a contradiction. Assume first
that $\Phi_i \not \Vdash_\sX \bar{\phi}_i$. Thus, $\Phi_{i+1} = \Phi
\cup \set{\phi_i} \Vdash_\sX \bot$, whence, by the rule (RAA), $\Phi_i
\Vdash_\sX \bar{\phi_i}$, contrary to assumption. 
On the other hand, assume $\Phi_i \Vdash_\sX
\bar{\phi}_i$, so that $\Phi_{i+1} = \Phi_i \cup \set{\bar{\phi}_i}$.
Take derivations establishing that $\Phi_i \Vdash_\sX
\bar{\phi}_i$ and that 
$\Phi_i \cup \set{\bar{\phi}_i} \Vdash_\sX \bot$; and chain these together
to form a single derivation, thus:
\begin{equation*}
\infer*{\bot}
   {\Phi_i,  \infer*{\bar{\phi}_i}{\Phi_i}}.
\end{equation*}
This establishes that $\Phi_i \Vdash_\sX \bot$, contrary to the
supposed consistency of $\Phi$.
\end{proof}

\section{No complete syllogistic systems for $\cH$}
\label{sec:noComplete}
The objective of this section is to prove
\begin{theorem}
There is no finite set $\sX$ of syllogistic rules in $\cH$ such that
$\vdash_\sX$ is sound and complete.
\label{theo:noComplete}
\end{theorem}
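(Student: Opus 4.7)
The plan is to proceed by contradiction: suppose $\sX$ is a finite sound and complete rule system for $\cH$ under $\vdash_\sX$, and let $N$ bound the number of distinct atom-metavariables appearing in any single rule schema of $\sX$. I aim to exhibit, for each $n > N$, a satisfiable set $\Theta_n\subseteq\cH$ and a non-absurd formula $\theta_n\in\cH$ with $\Theta_n\models\theta_n$, such that no direct derivation of $\theta_n$ from $\Theta_n$ in $\vdash_\sX$ can exist.

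The semantic phenomenon to be exploited is the singleton-forcing behaviour of the $\forall q$ terms, as illustrated by~\eqref{eq:validityH}: the premise $\exists(p, \forall q)$ forces $|q^\fA|\le 1$, and this interacts non-monotonically with other $\cH$-premises to yield conclusions not reachable by classical syllogistic reasoning. The family $(\Theta_n, \theta_n)$ would be built over $n$ distinct atoms by interlocking several singleton-forcing premises through shared atoms, so as to simultaneously guarantee two properties: (i)~$\Theta_n \models \theta_n$; and (ii)~the entailment is \emph{essentially $n$-ary}, meaning that any set obtained by identifying two distinct atoms in $\Theta_n$ fails to entail the correspondingly rewritten $\theta_n$.

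With such a family in hand, the contradiction is extracted as follows. By Lemma~\ref{lma:substitute}, any direct derivation of $\theta_n$ from $\Theta_n$ involves only the atoms of $\Theta_n\cup\set{\theta_n}$. Since $n > N$, the rule applied at the root of any such derivation mentions at most $N$ of these atoms and so misses at least one essential atom; a uniform substitution throughout the derivation tree, replacing a missing essential atom with one that is mentioned, produces a direct derivation---still using only rules in $\sX$---of the rewritten conclusion from the rewritten premise set. By property~(ii) this rewritten entailment fails semantically, contradicting the soundness of $\vdash_\sX$.

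The principal obstacle is the construction of such a family. Naive candidates like the chain $\{\exists(p_i,\forall q_i), \forall(q_i, q_{i+1}) : 1 \le i < n\}\cup\{\exists(q_1, q_1), \exists(q_n, o)\}\models \forall(q_n, o)$ typically admit short derivations via repeated application of a single non-classical rule along the lines of $\{\exists(p,\forall q), \exists(q, o)\}/\forall(q, o)$, and hence fail to be essentially $n$-ary. A successful construction must encode a genuinely non-decomposable $n$-atom semantic interaction that no finite schema can locally reproduce by chaining; establishing essential $n$-arity of the construction, and ensuring that the substitution-based surgery actually produces the required non-entailment, will be the technical crux of the proof.
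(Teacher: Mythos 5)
Your plan founders on a point of principle: property~(ii) is unsatisfiable. Entailment in $\cH$ (as in any of these languages) is preserved under substitution: if $\Theta_n \models \theta_n$ and $g$ is any substitution---in particular one identifying two atoms---then $g(\Theta_n) \models g(\theta_n)$. (Given $\fA \models g(\Theta_n)$, define $\fA'$ over the same domain by $p^{\fA'} = (g(p))^{\fA}$; then $\fA' \models \Theta_n$, hence $\fA' \models \theta_n$, hence $\fA \models g(\theta_n)$.) So no family can be both a valid entailment and ``essentially $n$-ary'' in your sense, and the final step of your argument---producing a rewritten derivation whose conclusion is \emph{not} entailed by its rewritten premises---can never yield a contradiction with soundness. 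This is exactly why rules are closed under instantiation and why Lemma~\ref{lma:substitute} holds; the substitution surgery you propose only ever converts one sound derivation into another.

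The paper's actual argument exploits locality in a different parameter: the number of \emph{antecedents} $r$ of any rule in $\sX$, not the number of atoms it mentions. It fixes $n = r+3$ and a premise set $\Gamma^n$ (a chain of formulas $\forall(p_i,\overline{\forall p_{i+1}})$ together with $\forall(p_1,\forall p_n)$, $\forall(p_n,\forall p_1)$ and some padding) entailing $\gamma^n = \forall(p_1,p_n)$, and proves that $\Gamma^n$ is a fixed point of the derivation process: by induction on derivation length, every derivable formula lies in $\Gamma^n$. The crux is that the final rule application has at most $r$ antecedents, all (by the inductive hypothesis) in $\Gamma^n$, so by pigeonhole they omit at least one chain formula $\forall(p_h,\overline{\forall p_{h+1}})$; the paper's Lemma~\ref{lma:c1} then shows that $\Gamma^n_h = \Gamma^n \setminus \set{\forall(p_h,\overline{\forall p_{h+1}})}$ entails nothing outside $\Gamma^n$, so by soundness the rule's conclusion is already in $\Gamma^n$. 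Since $\gamma^n \notin \Gamma^n$, completeness fails. Your instinct that the chain of singleton-forcing premises is the right raw material is sound (the paper's $\Gamma^n$ is such a chain, and your worry about short derivations via a single non-classical rule is exactly what the padding formulas and the removal-of-one-link lemma are designed to defeat), but the mechanism for turning a finite bound on rule size into a counterexample must go through bounded antecedent sets and a deductively inert premise set, not through substitution.
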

We use a variant of a technique from Pratt-Hartmann and
Moss~\cite{qpr:p-h+m09}. For $n \geq 3$, let $\Gamma^n$ be the
set of formulas
\begin{align}
& \forall(p_i, \overline{\forall p_{i+1}}) & & (1 \leq i < n)
\label{eq:completeness1}\\
& \forall(p_1, \forall p_n) & & 
\label{eq:completeness2}\\
& \forall(p_n, \forall p_1) & & 
\label{eq:completeness2a}\\
& \forall(p_i,p_i) & & (1 \leq i \leq n)
\label{eq:completeness3}\\
& \forall(p_1,\bar{p}_{n-1}) & &
\label{eq:completeness4}
\end{align} 
and let $\gamma^n$ be the formula $\forall(p_1, p_n)$. Note that the
Formulas~\eqref{eq:completeness2} and~\eqref{eq:completeness2a} are
logically equivalent, that Formulas~\eqref{eq:completeness3} are true
in every structure, and that Formula~\eqref{eq:completeness4} is an
immediate consequence of~\eqref{eq:completeness1} (putting $i = n-1$)
and~\eqref{eq:completeness2}. Further, $\Gamma^n \models \gamma^n$. To
see this, suppose for contradiction that $\fA \models \Gamma^n$, but
$a \in p_1^\fA \setminus p_n^\fA$. Since $p_1^\fA \neq \emptyset$, the
formulas~\eqref{eq:completeness1} ensure that $p_i^\fA \neq \emptyset$
for all $i$ ($1 \leq i \leq n$).  By~\eqref{eq:completeness2}, then,
$a$ is the unique element of $p_n^\fA$.  But this contradicts the fact
that $a \not \in p_n^\fA$.  We proceed to show that, for any finite
set $\sX$ of syllogistic rules, if $\vdash_\sX$ is sound, then there
exists a value of $n$ such that $\Gamma^n \not \vdash_\sX \gamma^n$.

For any $h$, $1 \leq h \leq n-2$, define $\Gamma^n_h  = \Gamma^n \setminus
\set{\forall(p_h,\overline{\forall p_{h+1}})}$.
\begin{lemma}
Let $\phi$ be an $\cH$-formula featuring only the atoms $p_1, \ldots,
p_n$, and let $1 \leq h \leq n-2$. Then either $\Gamma^n_h \not \models
\phi$ or $\phi \in \Gamma^n$.
\label{lma:c1}
\end{lemma}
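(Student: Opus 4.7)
The plan is to establish the contrapositive: for every $\cH$-formula $\phi$ in the atoms $p_1,\ldots,p_n$ with $\phi \notin \Gamma^n$, I construct a structure $\fA$ satisfying $\Gamma^n_h$ but not $\phi$. The key leverage is that $\Gamma^n_h$ is missing the single chain-link $\forall(p_h,\overline{\forall p_{h+1}})$, so in models of $\Gamma^n_h$ emptiness may propagate forward from $p_{h+1}$ or backward from $p_h$, breaking the rigid structure that forces $\Gamma^n \models \gamma^n$.

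First, I dispose of existentials in a single shot. Let $\fA_\emptyset$ be any structure with $p_i^{\fA_\emptyset}=\emptyset$ for every $i$. Every universal $\cH$-formula is vacuously true in $\fA_\emptyset$, so $\fA_\emptyset \models \Gamma^n$, a fortiori $\fA_\emptyset \models \Gamma^n_h$; and no existential $\cH$-formula holds in $\fA_\emptyset$. Since $\Gamma^n$ contains no existentials, this single model suffices to show $\Gamma^n_h \not\models \phi$ for every existential $\phi$.

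For the universal case I may assume, using the identifications $\forall(e,f)\equiv\forall(\bar f,\bar e)$, that $\phi=\forall(p_i,c)\notin\Gamma^n$. I maintain a toolbox of four model families, each verified to satisfy $\Gamma^n_h$ using the elementary facts that $(\forall \ell)^\fA$ equals $A$ when $\ell^\fA = \emptyset$, equals $\ell^\fA$ when $|\ell^\fA|=1$, and equals $\emptyset$ when $|\ell^\fA|\geq 2$:
\begin{itemize}
\item[\textup{(A)}] the \emph{forward-break} $\fA^f$ with domain $\{a,b\}$, $p_k=\{a,b\}$ for $k\leq h$ and $p_k=\emptyset$ for $k>h$;
\item[\textup{(B)}] the \emph{backward-break} $\fA^b$, obtained by swapping the roles of $k\leq h$ and $k>h$;
\item[\textup{(C)}] the standard three-element model $\fA^\ast$ of $\Gamma^n$: domain $\{a,b,c\}$, $p_1=p_n=\{c\}$, $p_k=\{a,b\}$ for $1<k<n$;
\item[\textup{(D)}] a \emph{separating} refinement of \textup{(C)}: domain $\{a_1,\ldots,a_n,c\}$, $p_1=p_n=\{c\}$, $p_k=\{a_\ell:\ell\neq k\}$ for $1<k<n$.
\end{itemize}
A case analysis on $c\in\{p_j,\bar p_j,\forall p_j,\overline{\forall p_j}\}$ together with $(i,j)$ then matches each excluded formula to an element of the toolbox. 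For example, $\forall(p_i,p_j)$ with $1<i,j<n$, $i\neq j$ is defeated by \textup{(D)} since $a_j\in p_i^{(D)}\setminus p_j^{(D)}$; $\forall(p_1,p_n)$ and $\forall(p_n,p_1)$ are defeated by \textup{(A)} or \textup{(B)}; $\forall(p_i,\bar p_j)$ with $\{i,j\}\neq\{1,n-1\}$ reduces to exhibiting a nontrivial intersection $p_i\cap p_j$, supplied by \textup{(C)} in the interior and by \textup{(A)} or \textup{(B)} at the boundary; $\forall(p_i,\forall p_j)$ with $(i,j)\notin\{(1,n),(n,1)\}$ and $\forall(p_i,\overline{\forall p_j})$ with $j\neq i+1$ are dealt with similarly.

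The main obstacle is ensuring the toolbox is exhaustive. A handful of corner cases --- for instance $\forall(p_1,\bar p_j)$ with $h<j<n-1$, or configurations specific to $h\in\{1,n-2\}$ and small $n$ --- are not immediately falsified by \textup{(A)}--\textup{(D)} and demand a bespoke tweak: typically a small enlargement of the domain of $\fA^f$ or $\fA^b$ together with a reassignment of one $p_k^\fA$ that exploits the free hand at the removed link. Each such tweak is again a routine inclusion check, because the delicate global constraints $\forall(p_1,\forall p_n)$, $\forall(p_n,\forall p_1)$ and $\forall(p_1,\bar p_{n-1})$ are all automatic as soon as $p_1^\fA$ and $p_n^\fA$ are either both singletons containing the same witness, or at least one of them is empty.
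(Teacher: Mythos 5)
Your overall strategy---exhibiting, for each $\cH$-formula $\phi\notin\Gamma^n$ over $p_1,\ldots,p_n$, a model of $\Gamma^n_h$ falsifying $\phi$---is exactly the paper's, and your handling of the existentials (the all-empty structure), of $\forall(p_i,p_j)$, and of $\forall(p_i,\forall p_j)$ is sound. The genuine gap is in the case $\phi=\forall(p_i,\overline{\forall p_j})$, and it is much larger than the ``corner cases'' you flag. In every structure of your toolbox \textup{(A)}--\textup{(D)}, each interior predicate $p_k$ ($1<k<n$) is interpreted either as $\emptyset$ or as a set with at least two elements, so $(\overline{\forall p_k})^{\fA}$ is always $\emptyset$ or all of $A$. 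Consequently no toolbox model falsifies $\forall(p_i,\overline{\forall p_i})$ for interior $i$ (this requires $p_i^{\fA}$ to be a nonempty \emph{singleton}), nor $\forall(p_i,\overline{\forall p_j})$ when $i<j$, $j\neq i+1$, and $i,j$ lie on the same side of the break (both $\leq h$ or both $>h$): there all the chain links~\eqref{eq:completeness1} between $i$ and $j$ are present, so $p_i^{\fA}\neq\emptyset$ forces $p_j^{\fA}\neq\emptyset$, and the only way to falsify the formula is to make $p_j^{\fA}$ a singleton contained in $p_i^{\fA}$. These configurations occur for every $h$ and arbitrarily large $n$; they are not ``specific to $h\in\{1,n-2\}$ and small $n$,'' and your blanket claim that ``$\forall(p_i,\overline{\forall p_j})$ with $j\neq i+1$ are dealt with similarly'' is false for your toolbox.

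Nor is the repair a ``small enlargement of $\fA^f$ or $\fA^b$ with a reassignment of one $p_k$.'' Once $p_j^{\fA}$ is a singleton $\{a\}$, the link $\forall(p_{j-1},\overline{\forall p_j})$ forces $p_{j-1}^{\fA}\subseteq A\setminus\{a\}$, and the constraint cascades backwards along the chain; on a two-element domain this collapses. What is actually needed is a family of models in which every nonempty $p_k$ is a singleton $\{a_k\}$ with the $a_k$ pairwise distinct---precisely the structures $\fA$, $\fC_h$, $\fD_h$ and their one-element perturbations $\fC_{h,i,j}$, $\fD_{h,i,j}$, $\fA_{i,j}$ that carry the main weight of the paper's proof and are entirely absent from your toolbox. (Two smaller points: the symmetric counterpart $\forall(p_n,\bar p_j)$ with $1<j\leq h$ is also uncovered by \textup{(A)}--\textup{(D)}; and your closing remark that $\forall(p_1,\bar p_{n-1})$ is automatic whenever $p_1^{\fA},p_n^{\fA}$ are equal singletons or one is empty is wrong---it also requires the common witness to avoid $p_{n-1}^{\fA}$, and fails, e.g., when $p_n^{\fA}=\emptyset$ but $p_1^{\fA}\cap p_{n-1}^{\fA}\neq\emptyset$.)
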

\begin{proof}
We consider the possible forms of $\phi$ in turn.

\bigskip

\noindent
1. $\phi = \forall(p_i,p_j)$: 
Let $A = \set{a_1, \ldots, a_{n-1}}$, and define the structure
$\fA$ over $A$ by setting
\begin{equation*}
p_k^\fA = \set{a_k} \quad  (1 \leq k < n), \qquad p_n^\fA  =  \set{a_1}.
\end{equation*}
A routine check shows that $\fA \models \Gamma^n_h$, but, for $i \neq
j$ and $\set{i,j} \neq \set{1,n}$, $\fA \not \models \phi$. On the
other hand, if $i = j$ then, from~\eqref{eq:completeness3}, $\phi \in
\Gamma^n$. This means we need only deal with the case $\set{i,j} =
\set{1,n}$.  For all $h$ ($1 \leq h \leq n-2$), let $C_h = \set{a_1,
  \ldots, a_h}$ and $D_h = \set{a_{h+1}, \ldots, a_n}$, and define the
structures $\fC_h$ and $\fD_h$ by setting:
\begin{align*}
p_k^{\fC_h} &= \set{a_k} & & (1 \leq k \leq h) & p_k^{\fC_h} & = \emptyset & & (h < k \leq n)\\ 
p_k^{\fD_h} &= \emptyset & & (1 \leq k \leq h) & p_k^{\fD_h} & = \set{a_k} & & (h < k \leq n).
\end{align*}
A routine check shows that $\fC_h \models \Gamma^n_h$ and $\fD_h
\models \Gamma^n_h$, but that $\fC_h \not \models \forall(p_1,p_n)$
and $\fD_h \not \models \forall(p_n,p_1)$.

\bigskip

\noindent
2. $\phi = \exists(p_i,c)$: It is immediate that
\begin{align*}
\fC_h \not \models \exists(p_i, c) & & 
     \text{($h < i < n$ and $c$ any c-term)}\\
\fD_h \not \models \exists(p_i, c) & & 
     \text{($1 \leq i \leq h$ and $c$ any c-term)}
\end{align*}
Hence, if $i \leq h$, $\fD_h \models \Gamma^n_h$, but $\fD_h \not
\models \phi$; if $i > h$, $\fC_h \models \Gamma^n_h$, but $\fC_h \not
\models \phi$.

\bigskip

\noindent
3. $\phi = \forall(p_i,\bar{p}_j)$: Given that the formulas
$\forall(p_i,\bar{p}_j)$ and $\forall(p_j,\bar{p}_i)$ are identified
in this paper, we may assume without loss of generality that $i \leq
j$.  If $i =j$, or if $i = 1$ and $j = n$, then $\fA \not \models
\phi$. If $i = 1$ and $j = n-1$, then, from~\eqref{eq:completeness4},
we have $\phi \in \Gamma^n$. We next suppose that either $1 \leq i < j
\leq n -2$, or $2 \leq i < j \leq n -1$. For such values of $i$ and
$j$, define $\fA_{i,j}$ over $A$ by setting
\begin{equation*}
p_k^\fA = \set{a_k} \quad  (1 \leq k < n) \mbox{ and } k \neq i, \qquad
p_i^\fA = \set{a_i, a_j}, \qquad p_n^\fA  =  \set{a_1}.
\end{equation*}
Thus, $\fA_{i,j}$ is just like $\fA$, except that the element $a_i$
additionally realizes the predicate $p_j$.  A routine check shows that
$\fA_{i,j} \models \Gamma^n_h$, but $\fA_{i,j} \not \models
\phi$. (Note that $\fA_{i,j}$ is not defined if $j = n$ or if $i = 1$
and $j = n-1$.) We next suppose that $2 \leq i \leq n -2$ and $j = n$.
Define $\fA_i$ to be just like $\fA$, except that the first element
$a_1$ additionally realizes the predicate $p_i$. A routine check
shows that $\fA_i \models \Gamma^n_h$, but $\fA_i \not \models \phi$.
(Note that $\fA_i$ is not defined if $i=1$ or $n-1 \leq i \leq n$.)  The only
remaining case is where $i = n-1$ and $j = n$.  Define the structure
$\fD'_h$ to be just like $\fD_h$, except that $a_{n-1}$
additionally satisfies the predicate $p_n$. Again, a routine check
shows that $\fD'_h \models \Gamma^n_h$, but $\fD \not \models \phi$.

\bigskip

\noindent
4. $\phi = \forall(p_i,\forall p_j)$: Given that
Formulas~\eqref{eq:completeness2} and~\eqref{eq:completeness2a} are
the only formulas of this form in $\Gamma^n$, we may assume without
loss of generality that $i \leq j$, and also that either $1 < i$ or $j
< n$.  If $1 \leq i < j \leq n$ and $\set{i,j} \neq \set{1,n}$, then
$\fA \not \models \phi$. This leaves only the case where $i =
j$. Denote by $2 \times \fC_h$ the resulting of taking two disjoint
copies of $\fC_h$, and similarly for $2\times \fD_h$. A routine check
shows that $2 \times \fC_h \models \Gamma^n_h$ and $2 \times \fD_h
\models \Gamma^n_h$. On the other hand
\begin{align*}
2 \times \fC_h \not \models \forall(p_i, \forall p_i) & & 
     \text{$1 \leq i \leq h$}\\
2 \times \fD_h \not \models \forall(p_i, \forall p_i) & & 
     \text{$h < i < n$.}
\end{align*}
Thus, if $i = j \leq h$, then $2 \times \fC_h \not \models \phi$,
and if $i = j > h$, then $2 \times \fD_h \not \models \phi$.

\bigskip

\noindent
5. $\phi = \forall(p_i,\overline{\forall p_j})$: If $i = j$, then $\fA
\not \models \phi$. If $j = i+1$, then, from~\eqref{eq:completeness1},
$\phi \in \Gamma^n$. If $1 \leq i \leq h < j \leq n$, then $\fC_h \not
\models \phi$.  If $1 \leq j \leq h < i \leq n$, then $\fD_h \not
\models \phi$.  If $1 \leq i \leq h$, $1 \leq j \leq h$ and $j \neq
i+1$, let the structure $\fC_{h,i,j}$ be just like $\fC_h$, except that
$a_j$ additionally satisfies the predicate $p_i$.  A routine check
shows that $\fC_{h,i,j} \models \Gamma^n_h$, but $\fC_{h,i,j} \not
\models \phi$. (Note that $\fC_{h,i,j}$ is not defined if $j = i+1$.)
Similarly, if $h < i \leq n$, $h < j \leq n$ and $j \neq i+1$, let the
structure $\fD_{h,i,j}$ be just like $\fD_h$, except that $a_j$
additionally satisfies the predicate $p_i$.  Again, we have
$\fD_{h,i,j} \models \Gamma^n_h$, but $\fD_{h,i,j} \not \models \phi$.
\end{proof}
\begin{proof}[Proof of Theorem~\ref{theo:noComplete}]
Let $\sX$ be a finite (non-empty) set of syllogistic rules such that
$\vdash_\sX$ is sound.  Let the maximum number of antecedents in any
of the rules of $\sX$ be $r \geq 0$, fix $n = r+3$, and let $\theta$
be any $\cH$-formula featuring only the atoms $p_1, \ldots, p_n$.  We
claim that $\Gamma^{(n)} \vdash_\sX \theta$ implies $\theta \in
\Gamma^{n}$. Since $\gamma^n \not \in \Gamma^n$, this proves the
theorem.

\bigskip

\noindent
We prove the claim by induction on the lengths of derivations.  By
Lemma~\ref{lma:substitute}, if there is a derivation of $\gamma^n$
from $\Gamma^n$, then there is such a derivation using only the atoms
$p_1, \ldots, p_n$. Henceforth, then, we confine ourselves to
derivations featuring only these atoms. Now, for derivations employing
no steps of inference---i.e.~for $\theta \in \Gamma^n$---the claim is
trivial.  So suppose that the claim holds for derivations employing at
most $p$ steps, and that $\theta$ is derived from $\Gamma$ in $p+1$
steps. By inductive hypothesis, the antecedents of the final
rule-instance will all be in $\Gamma^n$; therefore, since $n = r+3$,
the antecedents of the final rule-instance will all be in
$\Gamma_h^n$, for some $h$ ($1 \leq h \leq n-2$). Since $\vdash_\sX$
is sound, $\Gamma_h^n \models \theta$, whence, by Lemma~\ref{lma:c1},
$\theta \in \Gamma^n$.  This completes the inductive step, and the
proof of the theorem.
\end{proof}
\section{A refutation-complete syllogistic system for $\cH$}
\label{sec:refutationComplete}
The objective of this section is to prove
\begin{theorem}
There is a finite set ${\sH}$ of syllogistic rules in $\cH$ such that
the direct derivation relation $\vdash_{\sH}$ is sound and
refutation-complete.
\label{theo:refComplete}
\end{theorem}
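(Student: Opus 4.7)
The plan is first to write down a finite set $\sH$ of sound syllogistic rules for $\cH$, and then to establish refutation completeness by proving the contrapositive: every $\vdash_{\sH}$-consistent set $\Theta$ of $\cH$-formulas is satisfiable. For the rule set, I would include all the familiar classical moves lifted to c-terms (in particular transitivity of $\forall$, and the Darii/Ferio schema (D1) already displayed as~\eqref{eq:nonClassical}), together with further rules tailored to e-terms of the form $\forall p$ that capture the semantic fact that a formula such as $\exists(q,\forall p)$ forces $p^\fA$ to be a singleton. Representative rules of this kind I would include (or close variants thereof) are
\[
\infer{\forall(p,o)}{\exists(q,\forall p) & \exists(p,o)}
\qquad
\infer{\forall(p,q)}{\exists(q,\forall p)}
\]
together with a rule that turns any derived pair $\forall(p,c)$, $\exists(p,\bar{c})$ into an absurdity. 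Soundness of $\vdash_{\sH}$ is then a routine semantic check on each rule in isolation.

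For refutation completeness, fix a $\vdash_{\sH}$-consistent $\Theta$ and let $T$ be the finite set of atoms appearing in it. By Lemma~\ref{lma:substitute} we may restrict attention to derivations using only atoms in $T$; let $\Theta^+$ be the set of $\cH$-formulas over $T$ that are $\vdash_{\sH}$-derivable from $\Theta$. The construction of a model $\fA$ proceeds by partitioning $T$ into \emph{singleton} atoms, namely those $p$ for which $\exists(q,\forall p) \in \Theta^+$ for some $q$, and \emph{free} atoms, where no such witness is derived. For each singleton atom one allocates a single canonical element, identifying those that the rules collapse together; for each free atom one provides several independent witnesses, enough to defeat any $\forall(\cdot,\forall p)$ not present in $\Theta^+$. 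The interpretation $p^\fA$ of an atom $p$ is then defined as the union of those witnesses whose indexing atom $q$ satisfies $\forall(q,p)\in\Theta^+$.

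The main obstacle will be verifying that $\fA \models \Theta$. This amounts to a case analysis over the eight syntactic forms of $\cH$-formula, showing both that each formula in $\Theta^+$ is true in $\fA$ and that no unwanted formula is inadvertently forced. The delicate case is the interaction between an existential formula $\exists(p,\forall q)$---which must be witnessed by a singleton realization of $q$ lying in $p$---and independent existentials such as $\exists(q,o)$: coherence between these demands is precisely what the two $\forall p$-rules above are designed to enforce. I would organize the verification as a sequence of claims, each eliminating a potential mismatch by appealing to the specific rule of $\sH$ that derives the missing ingredient, and falling back to $\vdash_{\sH}$-inconsistency of $\Theta$ (contradicting the hypothesis) whenever no such rule suffices. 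The art lies in choosing $\sH$ rich enough that every potential mismatch is intercepted, yet still finite; this is where the bulk of the technical work will reside.
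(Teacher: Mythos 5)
Your overall strategy---prove the contrapositive by building a canonical model from the deductive closure of a consistent $\Theta$---is the same as the paper's, and several of your proposed rules (the lifted (D1), your two $\forall p$-rules, which are essentially the paper's (HH1) and (H1), and your absurdity rule, which is an instance of the paper's (D3)) do appear in the paper's set $\sH$. But there is a genuine gap in the model construction. You index witnesses by \emph{atoms} and stipulate that a witness indexed by $q$ lies in $p^\fA$ exactly when $\forall(q,p)$ is derivable. This cannot satisfy a derivable existential $\exists(p,c)$ when neither $\forall(p,c)$ nor $\forall(c,p)$ is derivable: already for $\Theta = \set{\exists(p,q)}$ with $p$, $q$ otherwise unrelated, no witness lands in $p^\fA \cap q^\fA$, so your $\fA$ falsifies a member of $\Theta$. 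The paper avoids this by indexing its ``worlds'' by derivable existential \emph{formulas} rather than by atoms: for each $\exists(p,c)$ with $\Phi \vdash \exists(p,c)$ it forms the closure $\set{p,c}^*$ of the pair $\set{p,c}$ under derivable $\forall$-consequences (conditions (C1), (C2)), and the element built from such a world is made to satisfy every c-term the world contains. Your scheme needs to be reorganized along these lines before the case analysis you outline can go through.

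Two further points that your sketch leaves implicit but that carry real weight. First, witnesses must also be created for atoms forced to be nonempty only indirectly, via terms $\overline{\forall p}$ occurring in other worlds (the paper's iterated stages $W_1, W_2, \ldots$, justified by rule (H3)); an element can satisfy $\overline{\forall p}$ only if $p^\fA$ contains something \emph{distinct from it}, so both the existence and the multiplicity of $p$-witnesses matter. Second, the paper gives each ``special'' world (one containing some $\forall q$ with $\exists(q,q)$ derivable) exactly one copy and every other world exactly two, and the coherence lemma showing that any two worlds containing $\forall q$ coincide (Lemma~\ref{lma:6}) is what makes the singleton-collapse you allude to actually work; this is where the rules (H2) and (HH2)--(HH4) earn their keep. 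Your phrase ``several independent witnesses, enough to defeat any $\forall(\cdot,\forall p)$'' points in the right direction, but as stated the proposal neither guarantees the witnesses needed for joint existentials nor pins down the duplication discipline that the $\forall p$ and $\overline{\forall p}$ terms require.
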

We display $\sH$ in schematic form, with $o$, $p$ and $q$ ranging over
atoms, and $c$ over $c$-terms, as usual. The rule-schemata fall naturally
into four groups.
\begin{enumerate}
\item `little' rules:
\begin{equation*}
\infer[(\mbox{\small I})]{\exists(p,p)}{\exists(p,c)}
\hspace{1cm}
\infer[(\mbox{\small T});]{\forall(p,p)}{}
\end{equation*}
\item rules similar to familiar syllogisms:
\begin{equation*}
\begin{array}{lll}
\infer[(\mbox{\small B})]{\forall(p, c)}
                        {\forall(p,q) \hspace{0.25cm} \forall(q,c)}
\\
\\
\infer[(\mbox{\small D1})]{\exists(p, c)}
                        {\exists(p,q) \hspace{0.25cm} \forall(q, c)}
\hspace{0.2cm}
& 
\infer[(\mbox{\small D2})]{\exists(q, c)}
                        {\exists(p,c) \hspace{0.25cm} \forall(p, q)}
\hspace{0.2cm}
&
\infer[(\mbox{\small D3});]{\exists(p, \bar{q})}
                        {\exists(p,c) \hspace{0.25cm} \forall(q, \bar{c})}
\end{array}
\end{equation*}
\item `little' rules for universally quantified predicates:
\begin{equation*}
\infer[(\mbox{\small H1})]{\forall(q,p)}{\exists(p,\forall q)}
\hspace{1cm}
\infer[(\mbox{\small H2})]{\forall(q,\forall q)}{\exists(p,\forall q)}
\hspace{1cm}
\infer[(\mbox{\small H3});]{\exists(q,\overline{\forall p})}
                           {\exists(p,\overline{\forall q})}
\end{equation*}
\item syllogism-like rules for universally quantified predicates
\begin{equation*}
\begin{array}{ll}
\infer[(\mbox{\small HH1})]{\forall(q, c)}
                        {\exists(q,c) & \exists(p,\forall q)}
\hspace{1cm} &
\infer[(\mbox{\small HH2})]{\forall(q,c)}
                        {\exists(p,c) & \forall(p,\forall q)}\\
\ \\
\infer[(\mbox{\small HH3})]{\forall(q, c)}
                        {\forall(p,c) & \exists(p,\forall q)}
&
\infer[(\mbox{\small HH4}).]{\forall(p, q)}
                        {\forall(p, \forall q) & \exists(q, q)}
\end{array}
\end{equation*}
\end{enumerate}
Note that these rule-schemata define a finite set of rules, as
explained above.  Our choice of labels (I), (T), etc.~is essentially
arbitrary, though (B), (D1), (D2) and (D3) allude vaguely to the
classical syllogisms {\em Barbara} and {\em Darii}.  Recalling our
decision silently to identify the formulas $\forall(p,c)$ and
$\forall(\bar{c},\bar{p})$, (D3) could be alternatively written as
$\set{\exists(p,c), \forall(c, \bar{q})}/ \exists(p, \bar{q})$.
Validity of these rules is transparent: Rule (HH1) is a
straightforward generalization of the validity~\eqref{eq:validityH}
considered above; the other `Hamiltonian' rules are dealt with
similarly.

Let $\Phi$ be a set of $\cH$-formulas containing at least one
existential formula, such that $\Phi$ is consistent with respect to
$\vdash_\sH$.  In the following lemmas, we build a structure $\fA$,
and show that $\fA \models \Phi$.  Since $\sH$ is the only set of
rules we shall be concerned with in this section, we write $\vdash$
for the direct proof-relation $\vdash_{\sH}$.  We remind the reader
that the variables $o$, $p$ and $q$ are silently assumed to range only
over atoms, and the variables $c$ and $d$ over c-terms.

Let $S$ be a set of c-terms. We define $S^*$ to be the smallest set of
c-terms including $S$ such that, for all atoms $p$, $q$ and all
c-terms $c$:
\begin{align*}
& p \in S^* \mbox{ and } \Phi \vdash \forall(p,c) \Rightarrow c \in S^*
\tag{C1}\\
& (\forall p) \in S^* \mbox{ and } \Phi \vdash \exists(p,p) \Rightarrow p \in S^*.  
\tag{C2}
\end{align*}
Evidently, we may regard $S^*$ as the limit of a process in which,
starting with $S$, c-terms are added one by one to ensure fulfillment
of the above conditions.  More precisely, we may write $S^* =
\bigcup_{0 \leq i < \alpha} S^{(i)}$, where $S^{(0)} = S$, $\alpha
\leq \omega$, and, for all $i$ ($i+1 < \alpha$), $S^{(i+1)} = S^{(i)}
\cup \set{c}$ for some $c \not \in S^{(i)}$ satisfying either of the
following conditions:
\begin{align*}
& \mbox{there exists 
$p \in S^{(i)}$ such that $\Phi \vdash \forall(p,c)$}; 
\tag{K1}\\
& \mbox{$c = p$ is an atom such that 
   $(\forall p) \in S^{(i)}$, and $\Phi \vdash \exists(p,p)$}.
\tag{K2}
\end{align*}
Define the set $W$ to be the following set of c-terms:
\begin{eqnarray*}
W_0    & = & \set{ \set{p,c}^* \mid \Phi \vdash \exists(p,c)}\\
W_{i+1} & = & \set{ \set{p}^* \mid \overline{\forall p} \in w 
                                    \mbox{ for some } w \in W_i} \qquad (i \geq 0)\\
W      & = & \bigcup_{i \geq 0} W_i.
\end{eqnarray*}
Since $\Phi$ contains at least one existential formula, $W$ is
non-empty.  We use letters $u$, $v$, $w$ to range over elements of
$W$.  Lemmas~\ref{lma:1}--\ref{lma:6} establish some properties of
$W$.
\begin{lemma}
Let $c \in w \in W$. Then there exists $o \in w$ such that $\Phi
\vdash \exists(o,c)$.
\label{lma:1}
\end{lemma}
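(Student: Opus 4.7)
The plan is to proceed by a double induction. The outer induction runs on the least $i$ such that $w \in W_i$; the inner induction runs on the stage $j$ in the construction $w = S^{*} = \bigcup_{j} S^{(j)}$. In each case I must exhibit an atom $o \in w$ together with a derivation $\Phi \vdash \exists(o,c)$ in $\vdash_{\sH}$.

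For the inner base case $c \in S^{(0)} = S$, I would split on the form of $w$. If $w = \{p,d\}^{*} \in W_0$, then $\Phi \vdash \exists(p,d)$ holds by the definition of $W_0$, and rule (I) yields $\Phi \vdash \exists(p,p)$; hence $o := p$ serves as a witness whether $c = p$ or $c = d$. If instead $w = \{p\}^{*} \in W_{i+1}$, then $\overline{\forall p} \in v$ for some $v \in W_i$. The outer induction hypothesis, applied to $v$, supplies an atom $o' \in v$ with $\Phi \vdash \exists(o', \overline{\forall p})$. Rule (H3) then gives $\Phi \vdash \exists(p, \overline{\forall o'})$, and a further application of (I) gives $\Phi \vdash \exists(p,p)$. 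Since $p$ is the sole element of $S$, taking $o := p$ discharges this case.

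For the inner inductive step, assume $c \in S^{(j+1)} \setminus S^{(j)}$ and split on which clause put $c$ into $S^{*}$. If (K1) was used, then some $p \in S^{(j)}$ satisfies $\Phi \vdash \forall(p,c)$; by the inner induction hypothesis there exists $o \in S^{(j)} \subseteq w$ with $\Phi \vdash \exists(o,p)$, and rule (D1) immediately yields $\Phi \vdash \exists(o,c)$. If (K2) was used, then $c = p$ is an atom with $\Phi \vdash \exists(p,p)$, so we may simply take $o := p \in w$.

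The only non-routine point is the inner base case for $w \in W_{i+1}$: we must extract the positive assertion $\exists(p,p)$ from only the purely ``negative'' information that some earlier $v$ contains $\overline{\forall p}$. This is exactly what the pairing of rules (H3) and (I) is designed for, and recognising that this combination matches the step from $W_i$ to $W_{i+1}$ is the crux of the argument. Everything else is a direct application of the closure conditions defining $S^{*}$ together with the rules (I), (D1), and (H3).
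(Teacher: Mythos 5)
Your proposal is correct and follows essentially the same route as the paper: the same stratification (outer induction on the level $W_i$, inner induction on the stages of the closure $S^*$), with the same use of (I) and (D1) for the closure steps and the same (H3)-plus-(I) combination to recover $\exists(p,p)$ when passing from $W_i$ to $W_{i+1}$. The only cosmetic differences are that the paper treats $W_0$ fully first and then reduces the $W_k$ case to it, whereas you phrase it as a single double induction, and your remark that the witness lies in $S^{(j)}$ is slightly stronger than the inductive hypothesis literally gives — but only membership in $w$ is needed, so nothing is lost.
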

\begin{proof}
Assume first that $w \in W_0$. Thus, $w = \set{o,d}^*$, where $\Phi
\vdash \exists(o,d)$. Using the representation $w = \bigcup_{0 \leq i
  < \alpha} S^{(i)}$, where $S^{(0)} = \set{o,d}$, we show by
induction on $i$ that, if $c \in S^{(i)} \in W$, there exists $o \in
w$ such that $\Phi \vdash \exists(o,c)$.

\bigskip

\noindent
For $i = 0$, we have $c = d$ or $c = o$. In the former case, $\Phi
\vdash \exists(o,c)$, by assumption.  In the latter, we have the
derivation
\begin{equation*}
\infer[\mbox{\small (I)},]{\exists(o,o)}
   {\infer*{\exists(o,d)}{}}
\end{equation*}
so that, either way, $\Phi \vdash \exists(o,c)$. For $i \geq 1$, we
consider the following cases, corresponding to the
conditions~(K1)--(K2).

\bigskip

\noindent
1. $\Phi \vdash \forall(q,c)$ for some $q \in S^{(i-1)}$: By inductive
hypothesis, there exists $o \in w$ such that $\Phi \vdash
\exists(o,q)$, so we have the derivation
\begin{equation*}
\infer[{\mbox{\small (D1)}.}]{\exists(o,c)}
    {\infer*{\exists(o,q)}{} & \infer*{\forall(q,c)}{}}
\end{equation*}

\bigskip

\noindent
2. $c = q$, $\forall q \in S^{(i-1)}$, and $\Phi \vdash \exists(q,q)$:
But then there is nothing to show, since we may put $o = q$.

\bigskip

\noindent
This completes the proof of the lemma for $w \in W_0$. We now prove
the result for $w \in W_k$, for all $k \geq 0$, proceeding by
induction on $k$. For $k > 0$, we have $w = \set{o}^*$, where, for
some $v \in W_{k-1}$, $\overline{\forall o} \in v$.  By inductive
hypothesis, there exists $p \in v$ such that $\Phi \vdash \exists(p,
\overline{\forall o})$, so we have the derivation
\begin{equation*}
\infer[{\mbox{\small (I).}}]
   {\exists(o,o)}
   {\infer[{\mbox{\small (H3)}}]
       {\exists(o,\overline{\forall p})}
       {\infer*{\exists(p,\overline{\forall o})}{}}}
\end{equation*}
Having established that $\Phi \vdash \exists(o,o)$, we can proceed
exactly as for the case $k = 0$, writing $w = \bigcup_{0 \leq i <
  \alpha} S^{(i)}$, where $S^{(0)} = \set{o}$.
\end{proof}
\begin{lemma}
Let $p \in w \in W$. Then $\Phi \vdash \exists(p,p)$.
\label{lma:2}
\end{lemma}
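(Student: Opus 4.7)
The plan is to obtain this lemma as an immediate corollary of Lemma~\ref{lma:1} together with the `little' rule (I). Recall that atoms are c-terms, so $p \in w$ is itself a legitimate value of the variable $c$ in the statement of Lemma~\ref{lma:1}.

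First, I apply Lemma~\ref{lma:1} with $c := p$ to extract an atom $o \in w$ such that $\Phi \vdash \exists(o,p)$. Here it is essential that, per the variable-naming convention established in Section~\ref{sec:synsem}, the $o$ produced by Lemma~\ref{lma:1} is actually an atom and not an arbitrary c-term; the lemma's statement already guarantees this. Next, I invoke the paper's silent identification of $\exists(o,p)$ with $\exists(p,o)$ to rewrite the derived formula as $\Phi \vdash \exists(p,o)$. Finally, a single application of rule~(I), with the atom $p$ in the subject position and the c-term $o$ in the predicate position, yields $\Phi \vdash \exists(p,p)$, as displayed in the one-step derivation
\begin{equation*}
\infer[\mbox{\small (I)}.]{\exists(p,p)}{\infer*{\exists(p,o)}{}}
\end{equation*}

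There is no genuine obstacle here: the entire argument is a one-line reduction to the previous lemma, and the only point worth flagging is the symmetric reading of $\exists$, which lets us turn $\exists(o,p)$ into a formula whose first argument is $p$, as required by the schematic form of (I).
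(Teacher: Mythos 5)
Your proof is correct and is essentially identical to the paper's: both apply Lemma~\ref{lma:1} with $c:=p$ to obtain an atom $o$ with $\Phi \vdash \exists(o,p)$, then use the symmetric identification of $\exists(o,p)$ with $\exists(p,o)$ and one application of rule (I). Your explicit remark about the silent conversion is a point the paper leaves implicit, but the argument is the same.
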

\begin{proof}
By Lemma~\ref{lma:1}, let $o$ be such that $\Phi \vdash \exists(o,p)$.
Then we have the derivation
\begin{equation*}
\infer[\mbox{\small (I)}.]{\exists(p,p)}
      {\infer*{\exists(o,p)}{}}
\end{equation*}
\end{proof}
\begin{lemma}
If $c \in \set{p}^*$, then $\Phi \vdash \forall(p,c)$.
\label{lma:3}
\end{lemma}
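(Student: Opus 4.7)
The plan is to induct on the stage at which $c$ enters the closure $\set{p}^*$, using the stratification $\set{p}^* = \bigcup_{0 \leq i < \alpha} S^{(i)}$ with $S^{(0)} = \set{p}$ described in the text. I will show: for every $i$ and every $d \in S^{(i)}$, $\Phi \vdash \forall(p, d)$.

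For the base case $i = 0$, the only element of $S^{(0)}$ is $p$ itself, and $\Phi \vdash \forall(p, p)$ is immediate from the axiom rule (T).

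For the inductive step, I would consider the c-term $c$ added to $S^{(i)}$ to form $S^{(i+1)}$, and split according to which of the closure clauses (K1), (K2) licensed its addition. In case (K1), some atom $q \in S^{(i)}$ satisfies $\Phi \vdash \forall(q, c)$; since $q$ is in particular a c-term lying in $S^{(i)}$, the inductive hypothesis gives $\Phi \vdash \forall(p, q)$, and a single application of (B) to $\forall(p, q)$ and $\forall(q, c)$ yields $\Phi \vdash \forall(p, c)$. In case (K2), $c$ is an atom $q$ with $(\forall q) \in S^{(i)}$ and $\Phi \vdash \exists(q, q)$; the inductive hypothesis applied to the c-term $\forall q$ gives $\Phi \vdash \forall(p, \forall q)$, and then rule (HH4), whose antecedents are exactly $\forall(p, \forall q)$ and $\exists(q, q)$, delivers $\Phi \vdash \forall(p, q) = \forall(p, c)$.

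There is essentially no obstacle beyond matching the two closure clauses to the two appropriate rule schemata: (K1) is the transitivity clause, handled by (B); (K2) is the clause that turns a universally quantified c-term back into its underlying atom under a nonemptiness assumption, handled by (HH4). The only mild care needed is a notational one — the bound atom $p$ in the statement of the lemma must be kept distinct from the metavariable $p$ used in the general closure conditions — but this causes no difficulty, and the induction goes through cleanly.
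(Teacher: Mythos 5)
Your proof is correct and follows exactly the paper's own argument: induction on the stage at which $c$ enters $\set{p}^*$, with rule (T) for the base case, rule (B) for clause (K1), and rule (HH4) for clause (K2). Nothing further is needed.
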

\begin{proof}

Write $\set{p}^* = \bigcup_{0 \leq i < \alpha} S^{(i)}$, with $S^{(0)} =
\set{p}$, as in the proof of Lemma~\ref{lma:1}; we show that the lemma
holds for $c \in S^{(i)}$, proceeding by induction on $i$.

\bigskip

\noindent
If $i = 0$, then $c = p$, so $\Phi \vdash \forall(p,c)$ by rule~(T).
If $i \geq 1$, we again have two cases corresponding to the
conditions~(K1) and (K2).

\bigskip

\noindent
1. $\Phi \vdash \forall(q,c)$ for some $q \in S^{(i-1)}$: By inductive
hypothesis, $\Phi \vdash \forall(p,q)$, so we have the derivation
\begin{equation*}
\infer[{\mbox{\small (B)}.}]{\forall(p,c)}
    {\infer*{\forall(p,q)}{} & \infer*{\forall(q,c)}{}}
\end{equation*}

\bigskip

\noindent
2. $c = q$, $\forall q \in S^{(i-1)}$, and $\Phi \vdash \exists(q,q)$:
By inductive hypothesis, $\Phi \vdash \forall(p, \forall q)$, so we
have the derivation
\begin{equation*}
\infer[(\mbox{\small HH4}).]{\forall(p, q)}
                        {\infer*{\forall(p, \forall q)}{} & 
                         \infer*{\exists(q, q)}{}}
\end{equation*}
In both cases, $\Phi \vdash \forall(p,c)$, as required.
\end{proof}
In the next lemma, we take $\overline{\forall}$ to be the symbol $\exists$
and $\overline{\exists}$ to be the symbol $\forall$.
\begin{lemma}
Suppose $c, d \in w \in W$ with $c$, $d$ distinct. 
Then there exist $o \in w$ and $Q \in \set{\forall,\exists}$ such that
$\Phi \vdash Q(o,c)$ and $\Phi \vdash \overline{Q}(o,d)$. Hence,
if $c \in w$, then $\bar{c} \not \in w$.
\label{lma:4}
\end{lemma}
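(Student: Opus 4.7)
The \emph{hence} clause is a direct consequence of the main statement: if $c, \bar{c} \in w$, applying the main part to the (automatically distinct) pair $(c, \bar{c})$ produces $o \in w$ and $Q$ with $\Phi \vdash Q(o,c)$ and $\Phi \vdash \overline{Q}(o,\bar{c})$, whereupon rule (D3) derives $\exists(o, \bar{o}) = \bot$ regardless of which value $Q$ takes, contradicting consistency of $\Phi$. So I focus on the main assertion.

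My plan is to locate an atom $o \in w$ such that $\Phi \vdash \forall(o, c')$ holds for one $c' \in \set{c,d}$ and $\Phi \vdash \exists(o, c')$ holds for the other; the corresponding choice of $Q$ then completes the proof. The easy case is when $w = \set{o}^*$ for some atom $o$: Lemma~\ref{lma:3} gives $\Phi \vdash \forall(o, c')$ for every $c' \in w$; Lemma~\ref{lma:2} and rule (D1) together give $\Phi \vdash \exists(o, c')$ for every $c' \in w$; so $Q = \forall$ works. This disposes of every $w \in W_k$ with $k \geq 1$, as well as every $w = \set{p, c_0}^* \in W_0$ for which $c_0 \in \set{p}^*$ (in which case $w = \set{p}^*$).

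The substantive case is $w = \set{p, c_0}^* \in W_0$ with $c_0 \notin \set{p}^*$. I first prove an auxiliary strengthening of Lemma~\ref{lma:1}: $\Phi \vdash \exists(p, c')$ for \emph{every} $c' \in w$. The proof parallels that of Lemma~\ref{lma:1}: induction on the construction of $w$, with the base step dispatched by rule (I) applied to $\exists(p, c_0)$, the (K1)-step by (D1), and the (K2)-step by combining (H1) with (D2) so as to keep the existential anchor at $p$. Given this, whenever at least one of $c, d$ lies in $\set{p}^*$ the choice $o = p$ works: Lemma~\ref{lma:3} supplies the required $\forall(p, c')$ and the auxiliary claim supplies the matching $\exists(p, c')$. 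Otherwise neither of $c, d$ lies in $\set{p}^*$, and I case-split on the form of $c_0$: if $c_0$ is a negative literal or of the form $\overline{\forall q}$, no closure fires on $c_0$, so $w \setminus \set{p}^* = \set{c_0}$ and $c \neq d$ yields an immediate contradiction; if $c_0 = p'$ is a positive atom, a closure-tracing argument gives $\set{p, p'}^* = \set{p}^* \cup \set{p'}^*$, hence $c, d \in \set{p'}^*$ and $o = p'$ puts us back in the easy case; if $c_0 = \forall q$, then either $\Phi \not\vdash \exists(q, q)$ and the same singleton-residue argument applies, or $\Phi \vdash \exists(q, q)$, in which case rules (H1) and (H2) combined with the closure operation yield $\set{p, \forall q}^* = \set{q}^*$, and $o = q$ again puts us in the easy case.

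The principal obstacle is the auxiliary claim together with the two closure identities $\set{p, p'}^* = \set{p}^* \cup \set{p'}^*$ and $\set{p, \forall q}^* = \set{q}^*$ (when $\Phi \vdash \exists(q,q)$); both require careful tracking of how (C1) and (C2) steps interact with the syllogistic rules (H1) and (H2), but become routine once the correct intermediate derivations are isolated. Beyond these, the argument collapses to the Lemma~\ref{lma:3}-plus-(D1) template of the easy case.
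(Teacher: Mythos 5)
Your proof is correct, but it takes a genuinely different route from the paper's. The paper handles $w = \set{o,c'}^* \in W_0$ by a single induction on $i+j$ for $c \in S^{(i)}$, $d \in S^{(j)}$, carrying the \emph{pair} of dual claims $\Phi \vdash Q(o,c)$, $\Phi \vdash \overline{Q}(o,d)$ through each closure step (via (B)/(D1) for (K1)-steps and (HH4)/(D1)+(H1) for (K2)-steps), with the witness $o$ always taken from $S^{(0)}$; the $W \setminus W_0$ case is dispatched exactly as in your ``easy case.'' You instead prove a strengthened form of Lemma~\ref{lma:1} (the existential anchored uniformly at the generator $p$) and then determine the global shape of the closure by a case analysis on $c_0$: either $w \setminus \set{p}^*$ is a singleton (so distinctness of $c,d$ forces one of them into $\set{p}^*$), or $w$ decomposes as $\set{p}^* \cup \set{p'}^*$, or collapses to $\set{q}^*$ via (H1) and (H2). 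All the closure identities and the auxiliary claim check out (the (K2)-step of your auxiliary claim does go through with (H1) followed by (D2), and $\set{p,\forall q}^* = \set{q}^*$ does follow from $\Phi \vdash \exists(p,\forall q)$ together with (H1), (H2) and (C1)). What the paper's induction buys is uniformity — one argument, no case split on the syntactic form of $c_0$ — at the cost of tracking two derivability facts simultaneously; what your decomposition buys is an explicit structural picture of the sets $w \in W_0$ and a reusable anchored version of Lemma~\ref{lma:1}, at the cost of several closure-identity verifications that the paper never needs to make explicit.
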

\begin{proof}
We consider first the case $w \in W \setminus W_0$. By construction of
$W$, $w = \set{o}^*$ for some atom $o$.  By Lemma~\ref{lma:2}, $\Phi
\vdash \exists(o,o)$; and by Lemma~\ref{lma:3}, $\Phi \vdash
\forall(o,c)$ and $\Phi \vdash \forall(o,d)$. But then we have the derivation:
\begin{equation*}
\infer[{\mbox{\small (D1)}.}]{\exists(o,d)}
    {\infer*{\exists(o,o)}{} & \infer*{\forall(o,d)}{}}
\end{equation*}

\noindent
Henceforth, then, we may suppose $w \in W_0$, and we again write $w =
\bigcup_{0 \leq i < \alpha} S^{(i)}$, as in the proof of Lemma~\ref{lma:1}.
Note that $S^{(0)} = \set{o,c'}$, for some atom $o$ and c-term $c'$
such that $\Phi \vdash \exists(o,c')$. We prove the lemma for $c \in
S^{(i)}$ and $d \in S^{(j)}$, proceeding by induction on $i+j$,
showing in fact that the required $o$ lies in $S^{(0)}$.

\bigskip

\noindent
If $i+j = 0$---i.e., $c, d \in S^{(0)}$---then, since $c$, $d$ are
distinct, we have $\set{c,d} = \set{o,c'}$ and $w \in W_0$.  The
result then follows immediately from the fact that, by rule (T),
$\Phi \vdash \forall(o,o)$. If $i+j >0$, assume without loss of generality
that $i > 0$.  We again have two cases corresponding to the
conditions~(K1) and (K2).

\bigskip

\noindent
1. $\Phi \vdash \forall(q,c)$ for some $q \in S^{(i-1)}$: By inductive
hypothesis, there exist $o \in S^{(0)}$ and $Q \in \set{\forall,\exists}$
such that $\Phi \vdash Q(o,q)$ and $\Phi \vdash \overline{Q}(o,d)$.
We then have one of the derivations:
\begin{equation*}
\infer[{\mbox{\small (B)}}]{\forall(o,c)}
    {\infer*{\forall(o,q)}{} & \infer*{\forall(q,c)}{}}
\hspace{2cm}
\infer[{\mbox{\small (D1)}.}]{\exists(o,c)}
    {\infer*{\exists(o,q)}{} & \infer*{\forall(q,c)}{}}
\end{equation*}
so that $\Phi \vdash Q(o,c)$, as required.

\bigskip

\noindent
2. $c = q$, $\forall q \in S^{(i-1)}$ and $\Phi \vdash \exists(q,q)$:
By inductive hypothesis, there exists $o \in S^{(0)}$ and $Q \in
\set{\forall,\exists}$ such that $\Phi \vdash Q(o, \forall q)$ and
$\Phi \vdash \overline{Q}(o,d)$. Then we have one of the
derivations
\begin{equation*}
\infer[(\mbox{\small HH4})]{\forall(o, q)}
                        {\infer*{\forall(o, \forall q)}{} & 
                         \infer*{\exists(q, q)}{}}
\hspace{2cm}
\infer[{\mbox{\small (D1)}.}]{\exists(o,q)}
    {\infer*{\exists(q,q)}{} &
     \infer[{\mbox{\small (H1)}}]{\forall(q,o)}
                                 {\infer*{\exists(o,\forall q)}{}}}
\end{equation*}
For the final statement of the lemma, suppose $c \in w$ and $\bar{c} \in w$.
Exchanging $c$ and $\bar{c}$ if necessary, let $o$ be such that
$\Phi \vdash \exists(o,c)$ and $\Phi \vdash \forall(o, \bar{c})$. Then we
have the derivation
\begin{equation*}
\infer[\mbox{(\small{D3})},]{\exists(o,\bar{o})}
  {\infer*{\exists(o,c)}{}
   &
   \infer*{\forall(o,\bar{c})}{}}
\end{equation*}
contradicting the supposed consistency of $\Phi$.
\end{proof}
\begin{lemma}
Let $p, c \in w \in W$.  Then $\Phi \vdash \exists(p,c)$.
\label{lma:5}
\end{lemma}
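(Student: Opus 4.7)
My plan is to exploit Lemma~\ref{lma:4}, which for any two distinct c-terms of $w$ provides a common witnessing atom $o \in w$ relating them via dual quantifiers, together with a single application of one of the syllogism-like rules (D1) or (D2).

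First, if $p = c$, then the conclusion $\Phi \vdash \exists(p,c)$ reduces to $\Phi \vdash \exists(p,p)$, which is immediate from Lemma~\ref{lma:2}. So assume $p \neq c$, and apply Lemma~\ref{lma:4} to the distinct elements $p, c \in w$ to obtain an atom $o \in w$ and $Q \in \set{\forall, \exists}$ with $\Phi \vdash Q(o,c)$ and $\Phi \vdash \bar{Q}(o,p)$. I then split on the value of $Q$.

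If $Q = \forall$, then $\Phi$ derives $\forall(o,c)$ and, after silently symmetrizing, $\exists(p,o)$, so a single application of (D1) with middle atom $o$ yields $\exists(p,c)$. If $Q = \exists$, then $\Phi$ derives $\exists(o,c)$ and $\forall(o,p)$, so a single application of (D2) with middle atom $o$ yields $\exists(p,c)$.

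The only technical point worth watching is the typing side-conditions of the schemata (D1) and (D2): both require their middle term to be an atom, and this is exactly what forces the case split on $Q$. In the first sub-case, $o$ plays the role of the object of the $\forall$-formula; in the second, its subject. Since the variable-naming convention makes the $o$ furnished by Lemma~\ref{lma:4} an atom by default, both applications are well-formed. I do not expect any real obstacle here; the heavy inductive work about how elements enter $w$ was already packaged into Lemma~\ref{lma:4}, and Lemma~\ref{lma:5} is essentially an easy corollary.
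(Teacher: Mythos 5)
Your proof is correct and follows essentially the same route as the paper: dispose of the case $p=c$ via Lemma~\ref{lma:2}, then invoke Lemma~\ref{lma:4} on the distinct pair $p,c$ and finish with a single application of (D1) or (D2) according to the value of $Q$. The paper's own proof is exactly this, including the observation that the witness $o$ is an atom so that the rule instances are well-formed.
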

\begin{proof}
If $p = c$, we can apply Lemma~\ref{lma:2}. Otherwise, by
Lemma~\ref{lma:4}, let $o \in w$ and $Q \in \set{\forall,\exists}$ be
such that $\Phi \vdash Q(o,p)$ and $\Phi \vdash \overline{Q}(o,c)$. Then we
have one of the derivations
\begin{equation*}
\infer[(\mbox{\small D1})]{\exists(p,c)}
      {\infer*{\exists(p,o)}{} & \infer*{\forall(o,c)}{}}
\hspace{2cm}
\infer[(\mbox{\small D2}).]{\exists(p,c)}
      {\infer*{\exists(o,c)}{} & \infer*{\forall(o,p)}{}}
\end{equation*}
\end{proof}

\begin{lemma}
Suppose $u, v, w \in W$ with $(\forall q) \in u$, $(\forall q) \in v$ 
and $q \in w$. Then $u = v$.
\label{lma:6}
\end{lemma}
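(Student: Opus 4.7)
The plan is to establish $u\subseteq v$ and then invoke symmetry (the hypotheses on $u$ and $v$ are identical) to conclude $v\subseteq u$, and hence $u=v$. The strategy for the inclusion $u\subseteq v$ proceeds in two moves: first, enlarge $v$ by one useful element using the closure conditions; second, show that every c-term in $u$ can be reached from that enlarged seed.

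The first move is cheap. Since $q\in w\in W$, Lemma~\ref{lma:2} delivers $\Phi\vdash \exists(q,q)$. Because $\forall q\in v$ and $v$ is closed under condition~(C2), this immediately forces $q\in v$. Now, once $q\in v$, closure under condition~(C1) tells us that every c-term $c$ with $\Phi\vdash \forall(q,c)$ belongs to $v$. So the whole problem reduces to the single claim: for every $c\in u$, we have $\Phi\vdash \forall(q,c)$.

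This claim is the heart of the argument, and I expect it to be the main obstacle, but the machinery of Section~5 matches the need remarkably well. The trick is to apply Lemma~\ref{lma:4} to the two c-terms $\forall q$ and $c$ inside $u$ (assuming they are distinct). Lemma~\ref{lma:4} yields an atom $o\in u$ and a quantifier $Q\in\{\forall,\exists\}$ with $\Phi\vdash Q(o,c)$ and $\Phi\vdash \overline{Q}(o,\forall q)$. The two cases align exactly with rules (HH2) and (HH3): if $Q=\exists$, so that $\Phi\vdash \exists(o,c)$ and $\Phi\vdash \forall(o,\forall q)$, then (HH2) gives $\forall(q,c)$; if $Q=\forall$, so that $\Phi\vdash \forall(o,c)$ and $\Phi\vdash \exists(o,\forall q)$, then (HH3) gives $\forall(q,c)$. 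The remaining degenerate case $c=\forall q$ is handled separately: Lemma~\ref{lma:1} yields some atom $o\in u$ with $\Phi\vdash \exists(o,\forall q)$, and rule (H2) then delivers $\Phi\vdash \forall(q,\forall q)$ directly.

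Putting these pieces together, every $c\in u$ satisfies $\Phi\vdash \forall(q,c)$, and with $q\in v$ already secured, the closure condition (C1) for $v$ places $c$ into $v$. Thus $u\subseteq v$. The argument is entirely symmetric in $u$ and $v$ (both hypotheses $\forall q\in u$ and $\forall q\in v$ are used only separately, and the witness $q\in w$ is common), so the same reasoning with the roles of $u$ and $v$ swapped yields $v\subseteq u$, completing the proof.
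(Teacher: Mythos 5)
Your proposal is correct and follows essentially the same route as the paper: get $q\in v$ via Lemma~\ref{lma:2} and (C2), then for each $c\in u$ use Lemma~\ref{lma:4} to produce the witness $o$ and apply (HH2) or (HH3) to derive $\forall(q,c)$, and finish with (C1) and symmetry. The only (harmless) difference is your handling of the degenerate case $c=\forall q$ via Lemma~\ref{lma:1} and (H2), where the paper simply observes that $\forall q\in v$ is already a hypothesis.
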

\begin{proof}
By Lemma~\ref{lma:2}, $\Phi \vdash \exists(q,q)$.  By~(C2), then, $q
\in v$.  Suppose $c \in u$, where $c \neq \forall q$.  (We already
know that $(\forall q) \in v$.)  By Lemma~\ref{lma:4}, there exists $o
\in u$ and $Q \in \set{\forall,\exists}$ such that $\Phi \vdash
Q(o,c)$ and $\Phi \vdash \overline{Q}(o,\forall q)$. Thus, we have one
of the derivations
\begin{equation*}
\infer[(\mbox{\small HH2})]{\forall(q,c)}
      {\infer*{\exists(o,c)}{} 
       & 
       \infer*{\forall(o,\forall q)}{}}
\hspace{2cm}
\infer[(\mbox{\small HH3}),]{\forall(q,c)}
      {\infer*{\forall(o,c)}{}
       & 
       \infer*{\exists(o,\forall q)}{}}
\end{equation*}
whence, by~(C1), $c \in v$. Thus, $u \subseteq v$. The reverse
inclusion follows symmetrically.
\end{proof}
Say that $w \in W$ is {\em special} if $w$ contains a c-term of the
form $\forall q$ such that $\Phi \vdash \exists(q,q)$. Intuitively,
special elements are the unique instances of some property $q$. We now
build the structure $\fA$ as follows:
\begin{eqnarray*}
A & = &\set{\langle w, 0\rangle \mid w \in W \mbox{ is special}} \cup\\
 & & \qquad
  \set{\langle w, i \rangle \mid w \in W \mbox{ is non-special, } i \in \set{-1,1}}\\ 
p^\fA & = &\set{ \langle w,i \rangle \in A \mid p \in w}, \text{ for any atom $p$.}
\end{eqnarray*}
We remark that, since $W$ is non-empty, $A$ is non-empty; so this
construction is legitimate.
\begin{lemma}
For all elements $a = \langle w, i \rangle \in A$ and all c-terms
$c$, if $c \in w$, then $a \in c^\fA$.
\label{lma:7}
\end{lemma}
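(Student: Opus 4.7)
The plan is to prove the lemma by cases on the syntactic form of the c-term $c$: there are four possibilities, $c = p$, $c = \bar{p}$, $c = \forall p$, and $c = \overline{\forall p}$. The first two cases are essentially immediate: for $c = p$ the conclusion is just the definition of $p^\fA$; for $c = \bar{p}$ the final clause of Lemma~\ref{lma:4} gives $p \notin w$, so $a \notin p^\fA$ and hence $a \in \bar{p}^\fA = A \setminus p^\fA$.

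For $c = \forall p$, I show that any $b = \langle v, j\rangle \in p^\fA$ must equal $a$. Since $\forall p \in w$, Lemma~\ref{lma:1} produces $o \in w$ with $\Phi \vdash \exists(o, \forall p)$; rule (H2) then gives $\Phi \vdash \forall(p, \forall p)$. Since $p \in v$, closure condition (C1) forces $\forall p \in v$, and Lemma~\ref{lma:6} (with $u := w$ and $q := p$) yields $w = v$. Lemma~\ref{lma:2} applied to $p \in v = w$ gives $\Phi \vdash \exists(p, p)$, so $w$ is special by definition, meaning $j$ must be $0$ and $b = \langle w,0\rangle = a$.

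The main obstacle is the remaining case $c = \overline{\forall p}$, where I must exhibit some $b \in p^\fA$ with $b \neq a$. The construction of $W$ guarantees that $u := \set{p}^* \in W$ (because $\overline{\forall p} \in w$ forces $\set{p}^*$ into the next $W_i$), and since $p \in u$, every element of $A$ with first coordinate $u$ lies in $p^\fA$. Three of the four subcases are easy: (i) if $u$ is non-special and $u \neq w$, either $\langle u,1\rangle$ or $\langle u,-1\rangle$ works; (ii) if $u$ is non-special and $u = w$, then $w$ itself is non-special, $a = \langle w,i\rangle$ with $i \in \set{-1,1}$, and $\langle w,-i\rangle$ is a valid witness; (iii) if $u$ is special and $u \neq w$, then $\langle u,0\rangle \in p^\fA \setminus \set{a}$.

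The hard subcase, and the crux of the whole proof, is when $u$ is special and $u = w$, because then $a$ is the only element of $A$ with first coordinate $w$. I plan to dispose of this subcase by deriving $\Phi \vdash_{\sH} \bot$, contradicting the standing consistency assumption. The relevant hypotheses read off as $\forall q \in \set{p}^*$ with $\Phi \vdash \exists(q,q)$ and $\overline{\forall p} \in \set{p}^*$; Lemma~\ref{lma:3} converts these into $\Phi \vdash \forall(p,\forall q)$ and $\Phi \vdash \forall(p,\overline{\forall p})$. Rule (HH4) yields $\Phi \vdash \forall(p,q)$; Lemma~\ref{lma:2} and (D1) yield $\Phi \vdash \exists(p,p)$ and then $\Phi \vdash \exists(p,\forall q)$; (H1) and (H2) then deliver $\Phi \vdash \forall(q,p)$ and $\Phi \vdash \forall(q,\forall q)$. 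Chaining (B) across $\forall(q,p)$ and $\forall(p,\overline{\forall p})$ produces $\Phi \vdash \forall(q,\overline{\forall p})$; (D1) and (H3) then give $\Phi \vdash \exists(p,\overline{\forall q})$, and two further uses of (D1) through the rewritings $\forall(q,\forall q) = \forall(\overline{\forall q},\bar q)$ and $\forall(p,q) = \forall(\bar q,\bar p)$ finally produce $\Phi \vdash \exists(p,\bar p) = \bot$. The intricate bookkeeping of this final derivation is the only step I expect to be genuinely delicate.
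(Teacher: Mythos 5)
Your proof is correct, and in the only genuinely hard case, $c = \overline{\forall p}$, it takes a different route from the paper's. The cases $c = p$, $c = \bar{p}$ and $c = \forall p$ coincide with the paper's treatment (the paper obtains $p \notin w$ in the second case from Lemma~\ref{lma:5} rather than from the last clause of Lemma~\ref{lma:4}, which is immaterial). For $c = \overline{\forall p}$ the paper never argues by contradiction: splitting on $i \neq 0$ versus $i = 0$, it derives $\Phi \vdash \exists(p,\overline{\forall q})$ (via Lemma~\ref{lma:1} or Lemma~\ref{lma:5} together with (H3)) and draws the witness from the resulting set $\set{p,\overline{\forall q}}^* \in W_0$; in the special case this set contains $\overline{\forall q}$ while $w$ contains $\forall q$, so the final clause of Lemma~\ref{lma:4} forces it to differ from $w$. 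You instead observe that $\set{p}^* \in W$ always supplies a witness except when $\set{p}^* = w$ is special, and then kill that subcase with an explicit syntactic refutation. Both work: the paper's choice of a $W_0$-set buys a uniform witness argument and avoids your nine-step derivation, while your decomposition makes it more transparent where the real obstruction lies. One bookkeeping correction to your refutation: the last two inferences --- from $\exists(p,\overline{\forall q})$ and $\forall(q,\forall q)$ to $\exists(p,\bar{q})$, and from $\exists(p,\bar{q})$ and $\forall(p,q)$ to $\exists(p,\bar{p})$ --- are not instances of (D1), since that rule requires the middle term of its existential premise to be an atom, and no identification rewrites $\exists(p,\overline{\forall q})$ or $\exists(p,\bar{q})$ into that shape. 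They are, however, exactly instances of (D3) (take $c = \overline{\forall q}$ and $c = \bar{q}$ respectively), so your derivation does go through as intended.
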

\begin{proof}
We consider the possible forms of $c$ in turn.
\bigskip

\noindent
1. $c = p$ is an atom: The result is immediate by construction of $\fA$.

\bigskip

\noindent
2. $c = \bar{p}$: If, also, $p \in w$, Lemma~\ref{lma:5} guarantees
that $\Phi \vdash \exists(p,\bar{p})$, contradicting the supposed
consistency of $\Phi$. Hence, $p \not \in w$, whence, by the
construction of $\fA$, $a \in (\bar{p})^\fA$.

\bigskip

\noindent
3. $c = \forall p$: Suppose $b = \langle u, j \rangle \in A$ with $b
\in p^\fA$. By construction of $\fA$, $p \in u$, so that $\Phi \vdash
\exists(p,p)$, by Lemma~\ref{lma:2}. Furthermore, by Lemma~\ref{lma:1}, for
some $o$, $\Phi \vdash \exists(o, \forall q)$, so that we have the derivation
\begin{equation*}
\infer[(\mbox{\small H2}),]{\forall(p, \forall p)}
      {\infer*{\exists(o,\forall p)}{}}
\end{equation*}
whence $(\forall p) \in u$ by~(C1).  By
Lemma~\ref{lma:6}, $w = u$, and therefore, by construction of $A$, $i
= j = 0$. Thus, $b \in p^\fA$ implies $b = a$, so that $a \in c^\fA$.

\bigskip

\noindent
4. $c = \overline{\forall p}$: Suppose $c \in w$, and assume for the
time being that $i \neq 0$. Thus, $w$ is not special. By
Lemma~\ref{lma:1}, there exists an atom $q$ such that $\Phi \vdash
\exists(q,\overline{\forall p})$, so that we have the derivation
\begin{equation*}
\infer[{\mbox{\small (H3)}.}]{\exists (p, \overline{\forall q})}
    {\infer*{\exists (q, \overline{\forall p})}{}}
\end{equation*}
Then there exists $w' \in W_0 \subseteq W$ such that $p \in w'$; and,
by construction of $\fA$, there exists $i' \in \set{-1,0,1}$ such that
both $\langle w', i' \rangle \in q^\fA$ and $\langle w', -i' \rangle
\in p^\fA$. Since $i \neq 0$ we may suppose $i \neq i'$ (transpose $i'$
and $-i'$ if necessary), so that there exists $a' \in A$ with $a' \neq
a$ and $a' \in p^\fA$. Hence $a \in c^\fA$, as required.  Now assume
$i = 0$. Then $w$ is special, so suppose $(\forall q) \in w$, with
$\Phi \vdash \exists(q,q)$.  By (C2), $q \in w$, and by
Lemma~\ref{lma:5}, $\Phi \vdash \exists(q,\overline{\forall
  p})$. Again, therefore, by (H3), $\Phi \vdash
\exists(p,\overline{\forall q})$. By the construction of $W$, there
exists $w', \in W_0$ such that $p \in w'$ and $\overline{\forall q}
\in w'$.  By construction of $\fA$, there exists $i' \in \set{-1,0,1}$
such that $\langle w', i' \rangle \in p^\fA$.  Since $(\forall q) \in
w$ and $\overline{\forall q} \in w'$, we know from the final statement
of Lemma~\ref{lma:4} that $w \neq w'$, and therefore $a \neq a'$. Hence
$a \in c^\fA$, as required.
\end{proof}
\begin{proof}[Proof of Theorem~\ref{theo:refComplete}]
Let $\sH$ be as given above.  Soundness of $\vdash_{\sH}$ is immediate
from the fact each of these rules is valid. For
refutation-completeness, let $\Phi$ be a set of $\cH$-formulas
consistent with respect to $\vdash_\sH$. If $\Phi$ contains no
existential formulas, then $\fA \models \Phi$ for any structure $\fA$
in which $p^\fA = \emptyset$ for all $p \in \bP$. Otherwise, let $\fA$
be constructed as above. It suffices to show that $\fA \models \Phi$.
To see this, let $\phi \in \Phi$.  If $\phi = \exists(p,c)$, then, by
construction of $W$ and $A$, there exist $w \in W$ and $i \in
\set{-1,0,1}$ such that $p, c \in w$, and $a = \langle w,i \rangle \in
A$. By Lemma~\ref{lma:7}, $a \in p^\fA \cap c^\fA$ so that $\fA
\models \phi$. On the other hand, if $\phi = \forall(p,c)$, suppose $a
= \langle w,i \rangle \in p^\fA$. By construction of $\fA$, $p \in w$,
and by Condition~(C1), $c \in w'$, whence, by Lemma~\ref{lma:7}, $a
\in c^\fA$. Thus, $p^\fA \subseteq c^\fA$, so that $\fA \models \phi$.
\end{proof}
\section{\NPTIME-completeness of $\cH^\dagger$ and $\cH^{*\dagger}$}
\label{sec:NPcomplete}
The objective of this section is to prove
\begin{theorem}
The problem of determining whether a set of $\cH^\dagger$-formulas is
satisfiable is \NPTIME-complete, and similarly for the problem of
determining whether a set of $\cH^{*\dagger}$-formulas is
satisfiable. 
\label{theo:NPcomplete}
\end{theorem}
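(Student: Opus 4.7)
The proof splits into a matching NPTIME upper bound and lower bound. Since $\cH^\dagger$ is a sublanguage of $\cH^{*\dagger}$, it suffices to prove NPTIME membership for $\cH^{*\dagger}$ and NPTIME-hardness for $\cH^\dagger$; together these yield the theorem for both.

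For the upper bound, the plan is to establish a polynomial small-model property and then apply guess-and-check. The key observation is that $(\forall \ell)^\fA$ depends only on the ``size status'' of $\ell^\fA$: it equals $A$ if $|\ell^\fA| = 0$, the one-element set $\{a\}$ if $\ell^\fA = \{a\}$, and $\emptyset$ if $|\ell^\fA| \geq 2$. Given any $\fA \models \Phi$, I would select a subset $B \subseteq A$ consisting of (i) a witness in $e^\fA \cap f^\fA$ for each $\exists(e,f) \in \Phi$, (ii) the unique member of $\ell^\fA$ whenever $\ell$ is a literal occurring in $\Phi$ with $|\ell^\fA| = 1$, and (iii) for each atom $p$ appearing in $\Phi$, enough further elements so that $|p^\fB|$ and $|\bar p^\fB|$ each retain the status ($0$, $1$, or $\geq 2$) they had in $\fA$. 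This gives $|B| = O(|\Phi|)$. A case analysis on the forms of $\cH^{*\dagger}$-formulas, using the observation above, shows that any element of $B$ satisfies a given formula in $\fB$ exactly when it does in $\fA$, so that $\fB \models \Phi$. The NPTIME algorithm then simply guesses such a polynomially-sized model and verifies it in polynomial time.

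For the lower bound, the plan is to reduce from an NP-complete Boolean problem---3-SAT being the natural target---to satisfiability in $\cH^\dagger$. The encoding must exploit precisely the combination of features new to $\cH^\dagger$: noun-level negation together with Hamiltonian quantifiers in predicates, neither of which suffices alone, since both $\cS^\dagger$ and $\cH$ admit polynomial-time satisfiability testing. A plausible skeleton introduces an atom $t$ forced to be a singleton via $\forall(t,\forall t)$ together with $\exists(t,t)$, atoms $x_1, \ldots, x_n$ coding propositional variables by the condition $t \in x_i^\fA$ (so that $t \in \bar x_i^\fA$ codes negation), and, for each clause, a collection of formulas that jointly compel $t$ to lie in the extension of at least one of the clause's literals. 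The central difficulty, and what I expect to be the principal obstacle of the whole proof, is engineering this clause-level disjunction in a language with no explicit propositional connectives: one must route it through the peculiar semantics of $\forall\ell$ and $\overline{\forall\ell}$ together with the literal complementation $\ell \mapsto \bar\ell$, and verify that only genuine satisfying assignments give rise to consistent $\cH^\dagger$-theories.
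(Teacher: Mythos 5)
Your upper bound is fine and is essentially the route the paper takes (the paper merely asserts the polynomial small-model property, or alternatively derives \NPTIME{} membership from Theorem~\ref{theo:completeStarDagger} and Proposition~\ref{prop:NPTIME}); your observation that $(\forall \ell)^\fA$ depends only on whether $|\ell^\fA|$ is $0$, $1$, or $\geq 2$, together with preservation of these statuses and of existential witnesses under passage to a substructure, does yield a model of size $O(|\Phi|)$.

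The lower bound, however, has a genuine gap: the clause gadget, which is the entire content of the hardness proof, is never constructed. You correctly identify that simulating a three-way disjunction in a language with no propositional connectives is the principal obstacle, but you leave it unresolved, and the skeleton you propose points in an unpromising direction. Coding the truth of $x_i$ by membership of a distinguished singleton $t$ in $x_i^\fA$ requires you to express ``$\exists(t,L_1)$ or $\exists(t,L_2)$ or $\exists(t,L_3)$'', and there is no apparent mechanism in $\cH^\dagger$ for taking a disjunction of existential facts about a fixed element. The paper instead codes truth values by \emph{emptiness}: for each proposition letter $o$ it introduces atoms $o_t, o_f$ with formulas forcing exactly one of $o_t^\fA, o_f^\fA$ to be empty, and for each clause $\gamma = L_1 \vee L_2 \vee L_3$ it builds a chain $\forall(s_{\gamma,k}, \forall p_{\gamma,k})$, $\forall(p_{\gamma,k}, s_{\gamma,k+1})$ for $k=1,2,3$ together with $\exists(s_{\gamma,1}, \bar{s}_{\gamma,4})$. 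The point is that $\forall(s,\forall p)$ forces $s^\fA \subseteq p^\fA$ \emph{only when} $p^\fA \neq \emptyset$, so the witness of $\exists(s_{\gamma,1},\bar{s}_{\gamma,4})$ would be driven along the chain into $s_{\gamma,4}^\fA$ unless some link is broken, i.e.\ unless some $p_{\gamma,k}^\fA = \emptyset$ --- this is where the disjunction lives. Linking formulas $\forall(o_t, \overline{\forall p_{\gamma,k}})$ then transfer emptiness of $p_{\gamma,k}$ to emptiness of the corresponding truth-value atom. Without a gadget of this kind (and the converse verification that every satisfying assignment yields a model, which the paper does by gluing together disjoint two-element structures), the reduction is only a statement of intent. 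A further small point: the paper's variable gadget uses one formula $\forall(\forall o_t, \overline{\forall o_f})$ lying outside $\cH^\dagger$, which must be replaced equisatisfiably by a pair of $\cH^\dagger$-formulas using a fresh atom to obtain hardness for $\cH^\dagger$ itself; your proposal does not address how the construction stays inside the restricted syntax.
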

From Theorem~\ref{theo:NPcomplete} and Proposition~\ref{prop:PTIME},
it follows that, unless \PTIME{} = \NPTIME, there is no finite set $\sX$
of syllogistic rules in either $\cH^\dagger$ or $\cH^{*\dagger}$ such
that $\vdash_\sX$ is sound and refutation-complete.

Membership of these problems in \NPTIME{} is easily established by
showing that any satisfiable set $\Phi$ of $\cH^{*\dagger}$-formulas
is satisfied in a structure whose size is bounded by a polynomial
function of the number of symbols in $\Phi$. (Alternatively, the same
result is an immediate consequence of
Theorem~\ref{theo:completeStarDagger} together with
Proposition~\ref{prop:NPTIME}.)  Therefore, only the lower bounds need
be considered. We use a variant of a technique from McAllester and
Givan~\cite{logic:mcA+G92}. We remark that our task would be very easy
if we could write a set of $\cH^\dagger$-formulas whose only models
have cardinality 3. However, by Theorem~\ref{theo:largeModels}, this
is impossible.

The proof of \NPTIME-hardness proceeds by reduction of the problem
3SAT to the satisfiability problem for $\cH^\dagger$. In this context,
a {\em clause} is an expression $L_1 \vee L_2 \vee L_3$, where each
$L_k$ ($1 \leq k \leq 3$) is either a proposition letter $o$ or a
negated proposition letter $\neg o$. Given an assignment $\theta$ of
truth-values ($t$ or $f$) to proposition letters, any clause $\gamma$
receives a truth-value $\theta(\gamma)$ in the obvious way.  An
instance of the problem 3SAT is a set $\Gamma$ of clauses; that
instance is positive just in case there exists a $\theta$ such that
$\theta(\gamma) = t$ for every $\gamma \in \Gamma$.  Let $\Gamma$ be a
finite set of clauses. We show how to compute, in logarithmic space, a
set $\Phi$ of $\cH^\dagger$-formulas such that $\Phi$ is satisfiable
if and only if $\Gamma$ is a positive instance of 3SAT. To make the
proof easier, we work first with $\cH^{*\dagger}$-formulas,
strengthening the result at the very end of the proof.

First, we need formulas to represent proposition letters.  For each
proposition letter $o$ occurring in $\Gamma$, let $o_t$ and $o_f$ be
atoms (elements of $\bP$), and let $\Phi_o$ be the set of
$\cH^{*\dagger}$-formulas:
\begin{align*}
& \forall(\forall o_t, \overline{\forall o_f}) 
& &
\forall(o_t,\forall o_f) 
& &
\forall(o_t,\overline{o_f}). 
\end{align*}
Intuitively, if $\fA \models \Phi_o$, we are to interpret the
equation $o_t^\fA = \emptyset$ as stating that $o$ is true, and
$o_f^\fA = \emptyset$ as stating that $o$ is false. The following
lemma justifies this interpretation.  Suppose $\fA$ and $\fB$ are
structures and $p \in \bP$. We say that $\fA$ and $\fB$ {\em agree on}
$p$ if $p^\fA = p^\fB$. Note that if $\fA \subseteq \fB$, then $\fA$
and $\fB$ agree on $p$ just in case $p^\fB \setminus A = \emptyset$.
\begin{lemma}
If $\fA \models \Phi_o$, then $o_t^\fA = \emptyset$ if and only if
$o_f^\fA \neq \emptyset$. Conversely, suppose $A$ is a 2-element
set, and $v \in \set{t,f}$. There exists a structure $\fA_o^v$ over
$A$ such that, if $\fB \supseteq \fA_o^v$ agrees with $\fA_o^v$ on
$o_t$ and $o_f$, then $\fB \models \Phi_o$; furthermore, $(o^v)^\fB =
\emptyset$.
\label{lma:NP1}
\end{lemma}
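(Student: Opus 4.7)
The plan is to unpack the semantics of the three formulas in $\Phi_o$ and then carry out a short case analysis. The key observation is that, under the defined semantics, $(\forall \ell)^\fA$ equals $A$ when $\ell^\fA = \emptyset$, equals $\{a\}$ when $\ell^\fA = \{a\}$, and equals $\emptyset$ when $|\ell^\fA| \geq 2$. Reading the three formulas of $\Phi_o$ in these terms yields, respectively, $(\forall o_t)^\fA \cap (\forall o_f)^\fA = \emptyset$, $o_t^\fA \subseteq (\forall o_f)^\fA$, and $o_t^\fA \cap o_f^\fA = \emptyset$.

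For the forward direction, if $o_t^\fA = \emptyset$ then $(\forall o_t)^\fA = A$, so the first conjunct forces $(\forall o_f)^\fA = \emptyset$, and hence $|o_f^\fA| \geq 2$; in particular $o_f^\fA \neq \emptyset$. Conversely, if $o_f^\fA \neq \emptyset$ and some $a \in o_t^\fA$, then the second conjunct forces $a$ to equal every element of $o_f^\fA$, yielding $a \in o_t^\fA \cap o_f^\fA$ and contradicting the third conjunct.

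For the converse, take $A = \{a_1, a_2\}$ and define $\fA_o^t$ by setting $o_t^{\fA_o^t} = \emptyset$, $o_f^{\fA_o^t} = A$, and all remaining atoms empty; define $\fA_o^f$ symmetrically. Any $\fB \supseteq \fA_o^v$ that agrees with $\fA_o^v$ on $o_t, o_f$ inherits the values of $o_t^\fB$ and $o_f^\fB$ from $\fA_o^v$, so $(o^v)^\fB = \emptyset$ is immediate. Verifying $\fB \models \Phi_o$ then reduces to routine set-theoretic checks driven by the fact that $a_1 \neq a_2$, so that a predicate whose interpretation is exactly $\{a_1, a_2\}$ has empty $\forall$-extension in $\fB$.

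I do not anticipate any real obstacle; the argument is essentially an unpacking of the defined semantics of $\forall \ell$ and $\overline{\forall \ell}$. The one point worth a moment's care is that in the converse, the ambient domain $B$ of $\fB$ may be strictly larger than $A$, so one must use $|A| = 2$ to conclude that $\{b \in B : b = a_1 \text{ and } b = a_2\} = \emptyset$, which is exactly what causes the first formula of $\Phi_o$ to hold when $o_f^\fB = A$ (in the $v = t$ case) or when $o_t^\fB = A$ (in the $v = f$ case).
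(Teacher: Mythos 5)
Your proof is correct and follows essentially the same route as the paper: a direct semantic unpacking of the three formulas for the first statement, and the same witnessing structures ($o_t$ empty and $o_f$ the whole two-element domain for $v=t$, and symmetrically for $v=f$) for the second. Your explicit attention to the case where the domain of $\fB$ properly contains $A$ (using $a_1 \neq a_2$ to empty out the $\forall$-extension) is precisely the content of the paper's ``routine check.''
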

\begin{proof}
For the first statement, suppose $\fA \models \Phi_o$.  From
$\forall(o_t,\forall o_f)$ and $\forall(o_t,\overline{o_f})$, it is
obvious that we cannot have both $o_t^\fA \neq \emptyset$ and $o_t^\fA
\neq \emptyset$. On the other hand, suppose $o_t^\fA =
\emptyset$. Then every element satisfies $\forall o_t$, and so some
element does, whence, from $\forall(\forall o_t, \overline{\forall
  o_f})$, that element is distinct from some $o_f^\fA$, so that
$o_f^\fA \neq \emptyset$.  For the second statement, let $A =
\set{a,b}$. Define the structure $\fA^t_o$ by setting $(o_t)^{\fA^t_o}
= \emptyset$ and $(o_f)^{\fA^t_o} = A$; similarly, define the structure
$\fA^f_o$ by setting $(o_f)^{\fA^f_o} = \emptyset$ and $(o_t)^{\fA^f_o}
= A$. A routine check shows that these structures have the specified
properties.
\end{proof}

Next, we need formulas to represent clauses.  For each clause $\gamma
= L_1 \vee L_2 \vee L_3 \in \Gamma$, let $s_{\gamma,1}$,
$s_{\gamma,2}$, $s_{\gamma,3}$, $s_{\gamma,4}$, $p_{\gamma,1}$,
$p_{\gamma,2}$ and $p_{\gamma,3}$ be atoms (elements of $\bP$); in
addition, let $\Phi_\gamma$ be the set of $\cH^{\dagger}$-formulas:
\begin{align*}
& \forall(s_{\gamma,1}, \forall p_{\gamma,1}) & & 
   \forall(p_{\gamma,1},s_{\gamma,2})\\ 
& \forall(s_{\gamma,2}, \forall p_{\gamma,2}) & & 
   \forall(p_{\gamma,2},s_{\gamma,3})\\ 
& \forall(s_{\gamma,3}, \forall p_{\gamma,3}) & & 
   \forall(p_{\gamma,3},s_{\gamma,4})\\ 
& \exists(s_{\gamma,1}, \bar{s}_{\gamma,4}).
\end{align*}
Intuitively, if $\fA \models \Phi_\gamma$ we are to interpret the
equation $p_{\gamma,k}^\fA = \emptyset$ as stating that $L_k$ is true
($1 \leq k \leq 3$). The next lemma justifies this interpretation.
\begin{lemma}
If $\fA \models \Phi_\gamma$, then the set of numbers $k$ \textup{(}$1 \leq k
\leq 3$\textup{)} such that $(p_{\gamma,k})^\fA = \emptyset$ is
non-empty.  Conversely, suppose $A$ is a 2-element set, and $K$ a
non-empty subset of $\set{1,2,3}$.  There exists a structure
$\fA^K_{\gamma}$ over $A$ such that, if $\fB \supseteq \fA^K_\gamma$
agrees with $\fA^K_{\gamma}$ on the atoms in $\set{s_{\gamma,1},
  s_{\gamma,2}, s_{\gamma,3}, s_{\gamma,4}, p_{\gamma,1},
  p_{\gamma,2}, p_{\gamma,3}}$, then $\fB \models \Phi_\gamma$;
furthermore, for all $k$ \textup{(}$1 \leq k \leq 3$\textup{)},
$(p_{\gamma,k})^\fB = \emptyset$ if and only if $k \in K$.
\label{lma:NP2}
\end{lemma}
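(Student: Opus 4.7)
For the forward direction I argue by contradiction. Assume $\fA \models \Phi_\gamma$ and $(p_{\gamma,k})^\fA \neq \emptyset$ for every $k \in \{1,2,3\}$; using $\exists(s_{\gamma,1}, \bar{s}_{\gamma,4})$ pick $a \in (s_{\gamma,1})^\fA \setminus (s_{\gamma,4})^\fA$. An induction on $j = 1, 2, 3$ then shows that $a \in (s_{\gamma,j})^\fA$ implies $a \in (s_{\gamma,j+1})^\fA$: the formula $\forall(s_{\gamma,j}, \forall p_{\gamma,j})$ forces $a$ to equal every element of the nonempty set $(p_{\gamma,j})^\fA$, so $(p_{\gamma,j})^\fA = \{a\}$; applying $\forall(p_{\gamma,j}, s_{\gamma,j+1})$ then places $a$ in $(s_{\gamma,j+1})^\fA$. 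The conclusion $a \in (s_{\gamma,4})^\fA$ contradicts the choice of $a$.

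For the converse, write $A = \{a,b\}$ and let $k^{*} = \min K$; note $1 \leq k^{*} \leq 3$. I define $\fA^K_\gamma$ over $A$ as follows: set $(s_{\gamma,j})^{\fA^K_\gamma} = \{a\}$ for $j \leq k^{*}$ and $(s_{\gamma,j})^{\fA^K_\gamma} = \{b\}$ for $k^{*} < j \leq 4$; set $(p_{\gamma,j})^{\fA^K_\gamma} = \{a\}$ for $j < k^{*}$, $(p_{\gamma,k^{*}})^{\fA^K_\gamma} = \emptyset$; and for $k^{*} < j \leq 3$, set $(p_{\gamma,j})^{\fA^K_\gamma} = \emptyset$ if $j \in K$ and $(p_{\gamma,j})^{\fA^K_\gamma} = \{b\}$ otherwise. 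All other atoms receive the empty interpretation. A routine case check confirms each of the seven formulas in $\Phi_\gamma$: each $\forall(p_{\gamma,j}, s_{\gamma,j+1})$ is a direct containment between $\emptyset$, $\{a\}$, and $\{b\}$; each $\forall(s_{\gamma,j}, \forall p_{\gamma,j})$ holds because $(p_{\gamma,j})^{\fA^K_\gamma}$ is either empty (making $(\forall p_{\gamma,j})^{\fA^K_\gamma} = A$) or coincides with the singleton $(s_{\gamma,j})^{\fA^K_\gamma}$; and the existential $\exists(s_{\gamma,1}, \bar{s}_{\gamma,4})$ is witnessed by $a$, since $(s_{\gamma,4})^{\fA^K_\gamma} = \{b\}$.

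It remains to check that any $\fB \supseteq \fA^K_\gamma$ agreeing with $\fA^K_\gamma$ on the seven listed atoms also models $\Phi_\gamma$ and satisfies $(p_{\gamma,k})^\fB = \emptyset$ iff $k \in K$. The second claim is immediate from agreement. For the first, since each $(p_{\gamma,j})^\fB = (p_{\gamma,j})^{\fA^K_\gamma}$ is either empty or the same singleton as in $\fA^K_\gamma$, the set $(\forall p_{\gamma,j})^\fB$ is either all of $B$ or that same singleton, so every universal containment in $\Phi_\gamma$ is preserved; the existential is trivially preserved by extension. The main design challenge is producing a structure in which the witness $a$ propagates along the chain only as far as the breakpoint $k^{*}$ and the chain then transitions to the alternative element $b$: without this two-element split, $\exists(s_{\gamma,1}, \bar{s}_{\gamma,4})$ would be incompatible with the universal propagation formulas for any $j > k^{*}$ with $j \notin K$.
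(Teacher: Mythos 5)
Your proof is correct and follows essentially the same strategy as the paper: the forward direction is the identical propagation-of-the-witness argument, and the converse exhibits a two-element structure whose truth in $\Phi_\gamma$ survives extension because each $(p_{\gamma,j})^\fB$ stays empty or remains the same singleton. The only difference is cosmetic: where the paper tabulates a separate structure for each of the seven non-empty $K \subseteq \set{1,2,3}$ and appeals to an exhaustive check, you give a single uniform construction keyed to $\min K$, which verifies all cases at once.
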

\begin{proof}
For the first statement, suppose, for contradiction, that $\fA \models
\Phi_\gamma$, but $(p_{\gamma,k})^\fA \neq \emptyset$, for all $k$ ($1
\leq k \leq 3$). Since $\fA \models \exists(s_{\gamma,1},
\bar{s}_{\gamma,4})$, let $a \in s^\fA_{\gamma,1} \setminus
s^\fA_{\gamma,4}$.  Since $\fA \models \forall(s_{\gamma,1}, \forall
p_{\gamma,1})$, and neither $s_{\gamma,1}^\fA$ nor $p_{\gamma,1}^\fA$
is empty, we have $a \in p_{\gamma,1}^\fA$; moreover, since, $\fA
\models \forall(p_{\gamma,1},s_{\gamma,2})$, $a \in s^\fA_{\gamma,2}
\setminus s^\fA_{\gamma,4}$. Repeating the same reasoning twice over,
$a \in s^\fA_{\gamma,4} \setminus s^\fA_{\gamma,4}$, a contradiction.

For the second statement of the lemma, let $A = \set{a,b}$, and define
$\fA^K_\gamma$ according  to the following table.
\begin{center}
\begin{tabular}{l|l|l}
$K$ & atoms satisfied by $a$ & atoms satisfied by $b$\\
\hline
$\set{1}$ & $s_{\gamma,1}$ & $p_{\gamma,2}$, $p_{\gamma,3}$, $s_{\gamma,3}$, $s_{\gamma,4}$ \\
$\set{2}$ & $p_{\gamma,1}$, $s_{\gamma,1}$, $s_{\gamma,2}$ & $p_{\gamma,3}$, $s_{\gamma,4}$ \\
$\set{3}$ & $p_{\gamma,1}$, $p_{\gamma,2}$, $s_{\gamma,1}$, $s_{\gamma,2}$, $s_{\gamma,3}$ & - \\
$\set{2,3}$ & $p_{\gamma,1}$, $s_{\gamma_1}$, $s_{\gamma_2}$ & - \\
$\set{1,3}$ & $p_{\gamma,2}$, $s_{\gamma_1}$, $s_{\gamma_3}$ & - \\
$\set{1,2}$ & $s_{\gamma,1}$ & $p_{\gamma,3}$, $s_{\gamma,4}$ \\
$\set{1,2,3}$ & $s_{\gamma_1}$ & -
\end{tabular}
\end{center}
An exhaustive check shows that $\fA^K_\gamma$ has the required
properties.
\end{proof}

Finally, we need formulas to link proposition letters and clauses.
For each clause $\gamma = L_1 \vee L_2 \vee L_3 \in \Gamma$, and for
all $k$ ($1 \leq k \leq 3$), let the $\cH^{\dagger}$-formula
$\psi_{\gamma,k}$ be given by
\begin{equation*}
\psi_{\gamma,k} = 
\begin{cases}
\forall(o_t, \overline{\forall p_{\gamma,k}}) & \text{if $L_k = o$}\\
\forall(o_f, \overline{\forall p_{\gamma,k}}) & \text{if $L_k = \neg {o}$},
\end{cases}
\end{equation*}
and let $\Psi_\gamma = \set{\psi_{\gamma,1}, \psi_{\gamma,2}, \psi_{\gamma,3}}$.
\begin{lemma}
Suppose $\fA \models \Psi_\gamma$, and $(p_{\gamma,k})^\fA =
\emptyset$.  If $L_k = o$, then $o_t^\fA = \emptyset$; and if $L_k =
\neg {o}$, then $(o_f)^\fA = \emptyset$.
\label{lma:NP3}
\end{lemma}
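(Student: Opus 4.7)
The plan is a direct semantic unpacking; there is essentially no combinatorial content. The key observation is that the semantics of the term $\overline{\forall p_{\gamma,k}}$ make it collapse to the empty set whenever $(p_{\gamma,k})^\fA$ is empty. Specifically, by definition,
\begin{equation*}
(\overline{\forall p_{\gamma,k}})^\fA = \set{a \in A \mid a \neq b \text{ for some } b \in (p_{\gamma,k})^\fA},
\end{equation*}
so if $(p_{\gamma,k})^\fA = \emptyset$, no element $a$ can witness an inequality to a non-existent $b$, giving $(\overline{\forall p_{\gamma,k}})^\fA = \emptyset$.

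Given this, I would split on the two cases for $L_k$. If $L_k = o$, then $\psi_{\gamma,k} = \forall(o_t, \overline{\forall p_{\gamma,k}}) \in \Psi_\gamma$, so $\fA \models \psi_{\gamma,k}$ means $o_t^\fA \subseteq (\overline{\forall p_{\gamma,k}})^\fA$. Combining with the observation above, $o_t^\fA \subseteq \emptyset$, i.e.\ $o_t^\fA = \emptyset$. The case $L_k = \neg o$ is entirely symmetric, using $\psi_{\gamma,k} = \forall(o_f, \overline{\forall p_{\gamma,k}})$ and concluding $o_f^\fA = \emptyset$.

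There is no real obstacle here: the lemma is a single-line semantic check whose only substance is noticing the vacuity of $(\overline{\forall p_{\gamma,k}})^\fA$ when $(p_{\gamma,k})^\fA$ is empty. The care required is only notational, to keep straight which of $o_t$ or $o_f$ appears in $\psi_{\gamma,k}$ according to whether $L_k$ is the literal $o$ or $\neg o$. This lemma is what will later glue Lemmas~\ref{lma:NP1} and~\ref{lma:NP2} together in the reduction from 3SAT: the hypothesis $(p_{\gamma,k})^\fA = \emptyset$ encodes ``$L_k$ is true under the assignment induced by $\fA$,'' and the conclusion forces the corresponding $o_t^\fA$ or $o_f^\fA$ to be empty, which is precisely the interpretation of truth/falsity given by Lemma~\ref{lma:NP1}.
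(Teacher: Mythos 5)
Your proposal is correct and is exactly the argument the paper has in mind (the paper's proof is simply ``Immediate''): the vacuity of $(\overline{\forall p_{\gamma,k}})^\fA$ when $(p_{\gamma,k})^\fA = \emptyset$, combined with the subset semantics of $\forall$, forces $o_t^\fA$ (resp.\ $o_f^\fA$) to be empty. Nothing further is needed.
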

\begin{proof}
Immediate.
\end{proof}
\begin{proof}[Proof of Theorem~\ref{theo:NPcomplete}]
We need only show \NPTIME-hardness. To this end, let $\Gamma$ be a set
of clauses over the proposition letters occurring $\Gamma$. Let
\begin{equation*}
\Phi = \bigcup \set {\Phi_o \mid o \mbox{ occurs in } \Gamma} \cup 
       \bigcup \set {\Phi_\gamma \cup \Psi_{\gamma} \mid \gamma \in \Gamma }.
\end{equation*}
We claim that $\Phi$ is satisfiable if and only if $\Gamma$ is. For
suppose $\fA \models \phi$. Define the truth-value assignment $\theta$ over the
proposition letters of $\Gamma$ by setting $\theta(o) = t$ just in
case $o_t^\fA = \emptyset$.  It follows from Lemma~\ref{lma:NP1} that,
if $o$ is any proposition letter mentioned in $\Gamma$, then
$\theta(o) = f$ just in case $o_f^\fA = \emptyset$. Now let $\gamma =
L_1 \vee L_2 \vee L_3$ be a clause in $\Gamma$. By
Lemma~\ref{lma:NP2}, for all $\gamma \in \Gamma$, there exists a $k$
($1 \leq k \leq 3$) such that $p^\fA_{\gamma_k} = \emptyset$.  By
Lemma~\ref{lma:NP3}: if $L_k = o$, then $o_t^\fA = \emptyset$, so that
$\theta(\gamma) = \theta(o) = t$; and if $L_k = \neg o$, then $o_f^\fA
= \emptyset$, so that $\theta(\gamma) = \theta(\neg o) = t$. Either
way, $\theta(\gamma) = t$.

Conversely, suppose $\theta$ is a truth-value assignment such that
$\theta(\gamma) = t$ for all $\gamma \in \Gamma$.  For all $o$
occurring in $\Gamma$, let $\fA_o$ be the structure
$\fA^{\theta(o)}_o$ over domain $A_o$ guaranteed by
Lemma~\ref{lma:NP1}. For each $\gamma = L_1 \vee L_2 \vee L_3 \in
\Gamma$, the set $K = \set{k \mid 1 \leq k \leq 3 \mbox{ and }
  \theta(L_k) = t}$ is non-empty; so let $\fA_\gamma$ be the structure
$\fA^K_\gamma$ over domain $A_\gamma$ guaranteed by
Lemma~\ref{lma:NP2}. Assume the domains of all these structures are
disjoint, and let
\begin{equation*}
\fB = \bigcup \set{\fA_o \mid o \mbox{ occurs in } \Gamma }
 \cup \bigcup \set{\fA_\gamma \mid \gamma \in \Gamma}.
\end{equation*}
Thus, for all $o$ occurring in $\Gamma$, $\fB$ agrees with $\fA_o$ on
the atoms $o_t$ and $o_f$, whence $\fB \models \Phi_o$. Likewise, for
all $\gamma \in \Gamma$, $\fB$ agrees with $\fA_\gamma$ on the atoms
in $\set{s_{\gamma,1}, s_{\gamma,2}, s_{\gamma,3}, s_{\gamma,4},
  p_{\gamma,1}, p_{\gamma,2}, p_{\gamma,3}}$, whence $\fB \models
\Phi_\gamma$. It remains to show that $\fB \models \Psi_\gamma$ for
each $\gamma \in \Gamma$. Suppose $\gamma = L_1 \vee L_2 \vee L_3 \in
\Gamma$, and $ 1 \leq k \leq 3$. If $L_k = o$, then $\psi_{\gamma,k} =
\forall(o_t,\overline{\forall p_{\gamma,k}})$. Take any $a \in
o_t^\fB$. By the construction of $\fB$, $a \in A_o$, and $\theta(o) =
f$, whence $\theta(L_k) = f$, whence, by the construction of $\fB$
again, $p_{\gamma,k}^\fB \cap A_\gamma \neq \emptyset$. Since
$A_o$ and $A_\gamma$ are disjoint, $\fB \models \psi_{\gamma,k}$.  On
the other hand, if $L_k = \neg o$, then $\psi_{\gamma,k} =
\forall(o_f,\overline{\forall p_{\gamma,k}})$. Take any $a \in
o_f^\fB$. By the construction of $\fB$, $a \in A_o$, and $\theta(o) =
t$, whence $\theta(L_k) = f$, whence, by the construction of $\fB$
again, $p_{\gamma,k}^\fB \cap A_\gamma \neq \emptyset$. Since
$A_o$ and $A_\gamma$ are disjoint, we again have $\fB \models
\psi_{\gamma,k}$.  Thus, $\fB \models \Phi$.  This establishes the
\NPTIME-hardness of the satisfiability problem for $\cH^{*\dagger}$.

To extend the result to $\cH^{\dagger}$, note that the only formulas
of $\Phi$ not in $\cH^{\dagger}$ are those the forms $\forall(\forall
o_t, \overline{\forall o_f})$ occurring in $\Phi_o$. But we can simply
replace any such formula, equisatisfiably, by the pair of formulas
$\forall(q, \overline{\forall o_f})$, $\forall(\bar{q},
\overline{\forall o_t})$, where $q$ is a fresh atom.
\end{proof}
\section{Complete indirect syllogistic systems for $\cH^\dagger$ and $\cH^{*\dagger}$}
\label{sec:complete}
The objective of this section is to prove
\begin{theorem}
There is a finite set ${\sH^\dagger}$ of syllogistic rules in
$\cH^\dagger$ such that the indirect derivation relation
$\Vdash_{\sH^\dagger}$ is sound and complete.
\label{theo:completeDagger}
\end{theorem}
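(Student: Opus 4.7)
The plan is to define $\sH^\dagger$ by taking the rules of $\sH$ (schematized so that $o,p,q$ range over literals rather than atoms, and $c$ over e-terms), together with whatever additional schemata are needed to handle the interaction of noun-level negation with the new e-terms $\forall\ell$, $\overline{\forall\ell}$ where $\ell$ may be negative. Soundness reduces, as usual, to checking validity of each schema individually. For completeness, since $\Vdash_{\sH^\dagger}$ is sound and permits (RAA), it suffices by the standard observation in Section~\ref{sec:proofTheory} to prove refutation-completeness: every $\Vdash_{\sH^\dagger}$-consistent set $\Phi$ is satisfiable.

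So the first step is to apply Lemma~\ref{lma:lindenbaum} to extend $\Phi$ to a complete $\Vdash_{\sH^\dagger}$-consistent superset $\Psi$. By (the proof of) Proposition~\ref{prop:NPTIME}, for such a complete $\Psi$ the relations $\Vdash_{\sH^\dagger}$ and $\vdash_{\sH^\dagger}$ agree on derivability of absurdities, so $\Psi$ is in fact $\vdash_{\sH^\dagger}$-consistent. Hence the goal collapses to the purely direct problem: construct $\fA\models\Psi$ using only direct derivations from $\Psi$. This is precisely the point at which the expressive limitation exposed by Theorem~\ref{theo:NPcomplete} is circumvented: the detour through Lindenbaum effectively supplies, for free, the polynomially many ``case-split'' assumptions that a bare direct system cannot produce.

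The second step is the model construction, which I would obtain by lifting the argument of Section~\ref{sec:refutationComplete} from c-terms to e-terms and from atoms to literals. Define, for a set $S$ of e-terms, a closure $S^*$ analogous to the one built via conditions (C1)--(C2), using the literal-indexed schemata of $\sH^\dagger$ in place of those of $\sH$; then build a family $W_i$ of closed sets exactly as before, starting from the existential formulas of $\Psi$. The analogues of Lemmas~\ref{lma:1}--\ref{lma:7} should go through more or less mechanically, with one important simplification: whenever the original Section~\ref{sec:refutationComplete} proof would need to derive a negative consequence (e.g.\ to exclude $\bar{c}$ from some $w$, or to choose between $\forall$ and $\exists$ witnesses in Lemma~\ref{lma:4}), one now appeals directly to the completeness of $\Psi$, which places either $\phi$ or $\bar\phi$ in $\Psi$ for every relevant $\phi$. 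The ``special'' vs.\ ``non-special'' distinction, and the indexed copies $\langle w,\pm 1\rangle$, carry over to account for the multiplicities forced by $(\overline{\forall\ell})$-terms.

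The main obstacle will be step one: identifying the correct finite list of rule schemata for $\sH^\dagger$. The rules of $\sH$ generalize straightforwardly when the subject position holds a literal, but noun-level negation introduces new interactions that the original rules do not cover. In particular, schemata (H1)--(H3) and (HH1)--(HH4), which govern $\forall \ell$ and $\overline{\forall\ell}$, must be restated so that the inner literal $\ell$ may itself be negative; and I expect at least one additional rule relating $\forall(\ell,\forall\bar{\ell})$-style formulas to cardinality constraints will be required in order to make the closure argument for Lemma~\ref{lma:6} go through. Verifying that the chosen finite set is simultaneously sufficient for the model construction and sound (so that no spurious entailments slip in) is where the real bookkeeping lies; but once the right rules are in place, the Lindenbaum plus Section~\ref{sec:refutationComplete}-style recipe outlined above should complete the proof.
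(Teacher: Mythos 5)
Your high-level architecture is the paper's: soundness rule-by-rule, completeness reduced to refutation-completeness, a Lindenbaum extension to a complete consistent set (Lemma~\ref{lma:lindenbaum}), collapse to direct consistency (in fact you only need the trivial inclusion $\vdash\,\subseteq\,\Vdash$ here, not Proposition~\ref{prop:NPTIME}), and then a canonical model with special/non-special worlds and $\pm 1$ copies. But two genuine gaps remain. First, the entire content of the theorem is the existence of a concrete finite rule set, and you leave that unspecified, flagging it yourself as ``the main obstacle.'' The paper's $\sH^\dagger$ is not just a relettering of $\sH$: it needs the new rules (N) (from $\forall(\ell,\bar{\ell})$ infer $\exists(\bar{\ell},\bar{\ell})$, exploiting non-emptiness of the domain --- without it one cannot even show the canonical model is non-empty, cf.\ Lemma~\ref{lma:dNonEmpty}), (A) (from $\forall(c,\ell)$ and $\forall(c,\bar{\ell})$ infer $\forall(c,m)$), and (H4) (from $\exists(\ell,\ell)$ infer $\forall(\bar{\ell},\overline{\forall\ell})$). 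Your conjectured extra rule about $\forall(\ell,\forall\bar{\ell})$ and cardinality is not among them. Without exhibiting the rules, the existence claim is not established.

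Second, your model construction would fail as described. You propose to build the worlds bottom-up ``exactly as before,'' as closures $W_i$ generated from the existential formulas of $\Psi$, invoking completeness of $\Psi$ only to settle negative facts. But completeness of $\Psi$ at the formula level does not make the individual worlds decide every atom: from $\exists(\ell,e)\in\Psi$ and $\exists(\ell,p)\in\Psi$ one cannot conclude that the particular world $\set{\ell,e}^*$ contains $p$, and closure alone will leave it containing neither $p$ nor $\bar{p}$. Such an element then lies in $\bar{p}^\fA$ by default, and the truth of a premise $\forall(\bar{p},f)\in\Psi$ --- a formula form that exists only in $\cH^\dagger$, not in $\cH$ --- can no longer be read off from closure of the world. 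This is exactly why the paper abandons the bottom-up construction of Section~\ref{sec:refutationComplete} and instead takes $W$ to be the set of \emph{all} closed, consistent, \emph{literal-complete} sets of c-terms, proving a second, world-level Lindenbaum lemma (Lemma~\ref{lma:dExtension}, which is where rule (A) earns its keep) to show that every seed $\set{\ell,c}^*$ extends to such a set. Your plan is missing this saturation step, and with it the truth lemma for universal formulas with negative subjects.
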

\begin{theorem}
There is a finite set ${\sH^{*\dagger}}$ of syllogistic rules in
$\cH^{*\dagger}$ such that the indirect derivation relation
$\Vdash_{\sH^{*\dagger}}$ is sound and complete.
\label{theo:completeStarDagger}
\end{theorem}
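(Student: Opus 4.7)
The plan is to reduce completeness to the existence of a model for every $\Vdash_{\sH^{*\dagger}}$-consistent set, using the observation (noted in Section~\ref{sec:proofTheory}) that $\Vdash_{\sX}$ is complete exactly when it is refutation-complete. By Lemma~\ref{lma:lindenbaum}, every consistent $\Phi$ extends to a consistent complete set $\Psi$, so it suffices to build a structure $\fA$ with $\fA \models \Psi$. The rule set $\sH^{*\dagger}$ I would propose is obtained by lifting the schemata (I), (T), (B), (D1)--(D3), (H1)--(H3), (HH1)--(HH4) of Section~\ref{sec:refutationComplete} to arbitrary e-terms in place of atoms and c-terms, exploiting the identifications $\forall(e,f) \equiv \forall(\bar f,\bar e)$ and $\exists(e,f) \equiv \exists(f,e)$ to keep the list of schemata finite. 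Soundness is a rule-by-rule check.

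For the model construction I would imitate Section~\ref{sec:refutationComplete}, working now with e-terms throughout. Given complete consistent $\Psi$, define for a set $S$ of e-terms its closure $S^{*}$ under: (C1) $e \in S^{*}$ and $\Psi \vdash \forall(e,f)$ imply $f \in S^{*}$; and (C2) $(\forall \ell) \in S^{*}$ and $\Psi \vdash \exists(\ell,\ell)$ imply $\ell \in S^{*}$. Build a family $W$ of such closed sets, seeded by $\set{e,f}^{*}$ for every $\exists(e,f) \in \Psi$ and closed under the condition that whenever $\overline{\forall \ell} \in w \in W$, the set $\set{\ell}^{*}$ is also in $W$. Call $w$ \emph{special} if it contains some $\forall \ell$ with $\Psi \vdash \exists(\ell,\ell)$. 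The domain of $\fA$ consists of one point $\langle w,0 \rangle$ per special $w$ and two points $\langle w, \pm 1 \rangle$ per non-special $w$, with $\langle w,i \rangle \in p^{\fA}$ iff $p \in w$. The duplication ensures, as in Section~\ref{sec:refutationComplete}, that $\forall \ell$ is falsified at any non-special world containing $\ell$.

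Analogues of Lemmas~\ref{lma:1}--\ref{lma:7} then have to be re-established for the enlarged language: in each $w$, pairwise existentials between members are $\Psi$-derivable; no e-term coexists with its complement in a single $w$; and the uniqueness lemma corresponding to Lemma~\ref{lma:6} holds, namely that if $(\forall \ell) \in u$, $(\forall \ell) \in v$ and $\ell \in w$, then $u = v$. From these the truth lemma follows by cases on $c \in \set{p, \bar p, \forall \ell, \overline{\forall \ell}}$: for every $a = \langle w,i \rangle$ and every e-term $c$, $c \in w$ implies $a \in c^{\fA}$. Satisfaction of both $\forall$- and $\exists$-formulas of $\Psi$ then reads off immediately.

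The hard part will be the uniqueness lemma and, correspondingly, the $\forall \ell$ and $\overline{\forall \ell}$ cases of the truth lemma. The essential new ingredient over Section~\ref{sec:refutationComplete} is that an e-term on the left of a formula may itself be of the form $\forall \ell$ or $\overline{\forall \ell}$, so that the derivations underpinning uniqueness must propagate membership between worlds even when the witnesses are not literals. This is where the completeness of $\Psi$ (obtained via RAA and Lindenbaum) earns its keep: $\Psi$ has already decided, for every relevant $\ell$, whether $\exists(\ell,\ell)$ or $\forall(\ell,\bar\ell)$ holds, so the ``unique realizer'' and ``no realizer'' cases can be treated uniformly. I expect the main effort to go into verifying that the lifted (HH) rules are strong enough to force all worlds containing $(\forall \ell)$ to coincide with the unique world containing $\ell$, and that no further rules beyond the straightforward lifting are needed.
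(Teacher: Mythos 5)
There is a genuine gap, and it comes in two places. First, your rule set is too weak. Lifting only the schemata of Section~\ref{sec:refutationComplete} to e-terms gives you no rule that exploits the non-emptiness of the domain, and in $\cH^{*\dagger}$ (unlike $\cH$) this is indispensable: the set $\set{\forall(p,\bar{p}), \forall(\bar{p},p)}$ is unsatisfiable, since it forces $p^\fA=\emptyset$ and $\bar{p}^\fA=\emptyset$ simultaneously, yet none of your lifted rules can produce an existential formula from purely universal premises, so no absurdity $\exists(e,\bar{e})$ is derivable even with (RAA). The paper's system $\sH^{*\dagger}$ adds exactly the rules (N) ($\forall(e,\bar{e})/\exists(\bar{e},\bar{e})$) and (A) ($\forall(e,\bar{e})/\forall(f,\bar{e})$) to cover this; (N) is also what guarantees that the set $W$ of worlds is non-empty (the analogue of Lemma~\ref{lma:dNonEmpty}), whereas the ``all predicates empty'' escape hatch used at the end of Section~\ref{sec:refutationComplete} is unavailable once negative literals can appear as subjects.

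Second, your truth lemma is stated only in the forward direction ($c \in w$ implies $a \in c^\fA$), but to verify $\fA \models \forall(e,f)$ when the subject $e$ is itself $\forall\ell$ or $\overline{\forall\ell}$ you need the converse: from $a = \langle w,i\rangle \in e^\fA$ you must recover $e \in w$ before closedness can hand you $f \in w$. With worlds built as seeded closures $\set{e,f}^*$ in the style of Section~\ref{sec:refutationComplete}, the converse fails (a world need not record every e-term its element happens to satisfy), so satisfaction of $\forall$-formulas does \emph{not} ``read off immediately'' as you claim --- you flag this difficulty in your last paragraph but do not resolve it. The paper's fix is structural: $W$ is taken to be the set of \emph{all} closed, consistent, \emph{term-complete} sets of e-terms (obtained by a Lindenbaum-style extension at the level of e-term sets, the analogue of Lemma~\ref{lma:dExtension}), which makes the truth lemma a biconditional, $a \in e^\fA$ iff $e \in w$; the paper explicitly notes that this strengthening, resting on term-completeness rather than literal-completeness, is what the $\cH^{*\dagger}$ case requires. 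Your overall architecture (Lindenbaum, special versus duplicated worlds, the uniqueness lemma for $\forall\ell$) is otherwise the right one, but as written the proof would not go through.
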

We present first the proof of Theorem~\ref{theo:completeDagger}.  The
proof of Theorem~\ref{theo:completeStarDagger} proceeds similarly (and
in fact more simply); we indicate merely the differences between the
two proofs.

Let $\sH^{\dagger}$ consist of the following rules:
\begin{enumerate}
\item `little' rules:
\begin{equation*}
\begin{array}{ll}
\infer[(\mbox{\small I})]{\exists(\ell,\ell)}{\exists(\ell,c)}
\hspace{2cm} &
\infer[(\mbox{\small T})]{\forall(\ell,\ell)}{}\\ 
\ \\
\infer[(\mbox{\small A})]{\forall(c, m)}
                        {\forall(c,\ell) & \forall(c,\bar{\ell})}
&
\infer[(\mbox{\small N});]{\exists(\bar{\ell}, \bar{\ell})}
                        {\forall(\ell,\bar{\ell})}
\end{array}
\end{equation*}
\item generalizations of classical syllogisms:
\begin{equation*}
\begin{array}{lll}
\infer[(\mbox{\small B1})]{\forall(\ell, c)}
                        {\forall(\ell,m) & \forall(m,c)} & &
\infer[(\mbox{\small B2})]{\forall(\ell, m)}
                        {\forall(\ell,c) & \forall(c,m)} \\
\ \\
\infer[(\mbox{\small D1})]{\exists(\ell, c)}
                        {\exists(\ell,m) & \forall(m, c)}
& &
\infer[(\mbox{\small D2});]{\exists(\ell, m)}
                        {\exists(\ell,c) & \forall(c, m)}
\end{array}
\end{equation*}
\item the `Hamiltonian' rules:
\begin{equation*}
\begin{array}{lll}
\infer[{\mbox{\small (HH1)}}]{\forall (\ell, c)}
    {\exists (\ell, c) & \exists(m, \forall \ell)}
\hspace{1cm} 
&
\infer[{\mbox{\small (H2)}}]{\forall (\ell, \forall \ell)}
    {\exists (m, \forall \ell)}\\  
\ \\
\infer[{\mbox{\small (H3)}}]{\exists (m, \overline{\forall \ell})}
    {\exists (\ell, \overline{\forall m})}  
& 
\infer[{\mbox{\small (H4)}.}]{\forall(\bar{\ell}, \overline{\forall \ell})}
    {\exists (\ell, \ell)}
\hspace{0.5cm} 
\end{array}
\end{equation*}
\end{enumerate}
To avoid unnecessary proliferation of rule-names, those rules which
are simple generalizations of rules in $\sH$ have been given the same
names. Again, establishing the validity of the rules in
$\sH^{\dagger}$ is straightforward. Rule (A) is valid because its
premises imply that nothing is a $c$; we cannot replace (A) with the
simpler schema $\forall(c,\bar{c})/\forall(c,m)$, because, if $c$
is not a literal, $\forall(c,\bar{c})$ is not in the language
$\cH^\dagger$.  Rule (N)---no analogue of which can be formulated in
the language $\cH$---is valid because of the assumption that domains
are non-empty: if no $\ell$s are $\ell$s, then everything is a
non-$\ell$, and so something is a non-$\ell$. Rule (T) can in fact be
viewed as a special case of the rule (RAA), since we have the
derivation
\begin{equation*}
\infer[\mbox{\small{(RAA)}}^1.]{\forall(\ell,\ell)}{[\exists(\ell, \bar{\ell})]^1}
\end{equation*}
But we retain (T) as a separate rule for clarity.

Let $\Phi$ be a complete set of $\cH^\dagger$-formulas such that
$\Phi$ is consistent with respect to $\vdash_{\sH^\dagger}$.  In the
following lemmas, we build a structure $\fA$, and show that $\fA
\models \Phi$.  Since $\sH^{\dagger}$ is the only set of rules we
shall be concerned with in the ensuing lemmas, until further notice we
write $\vdash$ for the direct proof-relation $\vdash_{\sH^{\dagger}}$.

The elements of $A$ are constructed using sets of c-terms.  Call a set
$S$ of c-terms {\em consistent} if, for every c-term $c$, $c \in S$
implies $\bar{c} \not \in S$, and {\em literal-complete} if, for every
literal $\ell$, $\ell \not \in S$ implies $\bar{\ell} \in S$. Notice
that the notion of consistency for sets of c-terms is not the same as
$\vdash$-consistency for sets of formulas; likewise,
literal-completeness for sets of c-terms is not the same as
completeness for sets of formulas. Let $S$ be any set of
c-terms. Define
\begin{multline*}
S^* = S \cup \set{c \mid \mbox{there exists $\ell \in S$ such that $\Phi
                          \vdash \forall(\ell,c)$}} \cup \\
   \set{\ell \mid \mbox{there exists $c \in S$ such that $\Phi
                          \vdash \forall(c,\ell)$}}.
\end{multline*}
and we call $S$ {\em closed} if $S = S^*$. Trivially, $S \subseteq S^*$.
\begin{lemma}
Let $S$ be a set of c-terms. Then $S^*$ is closed.
\label{lma:dClosure}
\end{lemma}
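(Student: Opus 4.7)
The goal is to prove $(S^*)^* \subseteq S^*$, since the reverse inclusion follows directly from $S \subseteq S^*$. I would take any $x \in (S^*)^*$ and, unless $x$ is already in $S^*$, observe that one of two conditions produced it: either (i) $x$ is a c-term and there is a literal $\ell \in S^*$ with $\Phi \vdash \forall(\ell, x)$, or (ii) $x$ is a literal and there is a c-term $c \in S^*$ with $\Phi \vdash \forall(c, x)$.

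In each case I would then split on how the intermediate witness ($\ell$ or $c$) entered $S^*$: either it was already in $S$ (in which case $x$ lies in $S^*$ by one application of the closure clauses), or it was added by the second or third clause, providing a \emph{prior witness} $s \in S$ together with a derivable formula of the form $\Phi \vdash \forall(s, \cdot)$ or $\Phi \vdash \forall(\cdot, s)$. The plan is then to concatenate this with the derivation linking the intermediate to $x$, using the generalised Barbara rules (B1) and (B2) together with the language's silent identification $\forall(\ell, m) \equiv \forall(\bar{m}, \bar{\ell})$ wherever it applies. In the majority of subcases, one of (B1) or (B2) chains immediately after at most one application of the identification, and the result is a direct derivation from $s$ to $x$ (or its dual) that places $x$ into the appropriate component of $S^*$.

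The hard subcases will be those where the prior witness $s \in S$ is a non-literal c-term of the form $\forall p$ or $\overline{\forall p}$ and the new element $x$ is also non-literal. In such a configuration the two formulas to be chained both have literals in their first position (after applying the identification) but non-literal c-terms in their second position, so neither (B1) nor (B2) links them directly, and the conversion identity provides no further help. I expect this to be the main technical obstacle of the proof. The plan for resolving it is to invoke the completeness of $\Phi$: for each candidate linking formula, either $\Phi$ contains it, in which case $x$ is placed in $S^*$ immediately, or $\Phi$ contains its negation, in which case combining that negation with the derivations already available and applying (D1), (D2), (N), or the Hamiltonian rules (H3), (H4), would yield a derivation of some absurdity, contradicting the $\vdash_{\sH^\dagger}$-consistency of $\Phi$. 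Ruling out the absurd branch forces the linking formula into $\Phi$, whence $x \in S^*$ by a single application of the closure clauses, completing the inductive case analysis.
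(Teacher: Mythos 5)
Your decomposition, and your handling of every case in which the chain from $S$ to $x$ can be closed by (B1) or (B2) after contraposition, coincide with the paper's own argument (which works the case of a literal $x$ in detail and dismisses the remainder as ``similar, in fact more simple''). You have also correctly isolated the one configuration that resists this treatment: $x$ a non-literal c-term reached via a literal $\ell\in S^*\setminus S$ whose own witness $s\in S$ is non-literal, so that the two available formulas $\forall(\bar{\ell},\bar{s})$ and $\forall(\ell,x)$ each have a literal in first position and a non-literal c-term in second position and cannot be composed by (B1) or (B2). That is a sharper reading of the difficulty than the paper's own.

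However, your proposed resolution of that configuration does not work, and this is a genuine gap. By the definition of $S^*$, the only clause that can admit a non-literal $x$ is the middle one, which demands a \emph{literal} $\ell''\in S$ with $\Phi\vdash\forall(\ell'',x)$: if $S$ contains no literal there is no candidate linking formula at all, and when it does contain one, the set $\set{\forall(\bar{\ell},\bar{s}),\ \forall(\ell,x),\ \exists(\ell'',\bar{x})}$ is satisfiable (take $s=\forall p$ and $x=\forall r$ with $p$ and $r$ interpreted by two-element sets and $\ell$ by an empty one), so by soundness no application of (D1), (D2), (N), (H3), (H4) or any other rule extracts an absurdity from it; completeness of $\Phi$ therefore cannot be leveraged as you intend. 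The obstacle is in fact not merely technical. For $S=\set{\forall p}$ and $\Phi$ the complete theory of a structure in which $p^\fA$ and $q^\fA$ are disjoint two-element sets, rule (H4) gives $\Phi\vdash\forall(\forall p,p)$, whence $p\in S^*$ by the third clause, while (H4) followed by (B1) gives $\Phi\vdash\forall(p,\overline{\forall q})$, whence $\overline{\forall q}\in(S^*)^*$; yet $S$ contains no literal, so $\overline{\forall q}\notin S^*$. Thus no appeal to completeness (or anything else) can close your hard subcase as the statement stands: repairing it requires restricting the sets $S$ to which the lemma is applied, or iterating or otherwise strengthening the closure operator, rather than a cleverer reduction to absurdity.
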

\begin{proof}
We suppose $d \in (S^*)^* \setminus S^*$, and derive a contradiction.
We consider first the case where $d = m$ is a literal.  By definition,
there exists $c \in S^*$ such that $\Phi \vdash \forall(c,
m)$. Certainly, $c \not \in S$, for otherwise, we would have $d \in
S^*$.  Suppose first that $c$ is not a literal. Then there exists
$\ell \in S$ such that $\Phi \vdash \forall(\ell,c)$, and we have
the derivation
\begin{equation*}
\infer[\mbox{{\small (B2)}},]{\forall(\ell,m)}
  {\infer*{\forall(\ell, c)}{} & \infer*{\forall(c, m)}{}}
\end{equation*}
so that $m \in S^*$, a contradiction. On the other hand, suppose $c =
\ell$ is a literal. Then there exists a c-term $c_0 \in S$ such that
$\Phi \vdash \forall(c_0,\ell)$. Taking account of the equivalence of
$\forall(e,f)$ and $\forall(\bar{f},\bar{e})$, we have the derivation
\begin{equation*}
\infer[\mbox{{\small (B1)}},]{\forall(\bar{m}, \bar{c}_0)}
  {\infer*{\forall(\bar{m}, \bar{\ell})}{} & \infer*{\forall(\bar{\ell}, \bar{c}_0)}{}}
\end{equation*}
i.e.~$\Phi \vdash \forall(c_0,m)$, so that $m \in S^*$, a
contradiction.  The case where $d$ is not a literal proceeds similarly
(in fact, more simply).
\end{proof}
\begin{lemma}
Every closed, consistent set of c-terms containing at least one
literal has a closed, consistent, literal-complete extension.
\label{lma:dExtension}
\end{lemma}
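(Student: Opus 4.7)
My plan is a Lindenbaum-style extension argument. Enumerate all $\cH^\dagger$-literals as $\ell_0, \ell_1, \dots$. Starting from $S_0 = S$, at stage $i$ I will, if neither $\ell_i$ nor $\bar{\ell_i}$ is already in $S_i$, form $S_{i+1}$ by adjoining one of $\ell_i, \bar{\ell_i}$ to $S_i$ and taking the star closure; otherwise I set $S_{i+1}=S_i$. Letting $S^{\infty} = \bigcup_i S_i$, literal-completeness of $S^\infty$ is immediate by construction. Consistency transfers from the finite stages, since any pair $c, \bar c \in S^\infty$ would lie together in some $S_j$. Closedness transfers likewise: any c-term added to $S^\infty$ via the closure definition is witnessed by a single premise in some $S_j$, and so already lies in $S_j = S_j^*$. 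The whole argument therefore reduces to the inductive step: if $S_i$ is closed and consistent with $\ell_i, \bar{\ell_i} \notin S_i$, then at least one of $(S_i \cup \{\ell_i\})^*$ and $(S_i \cup \{\bar{\ell_i}\})^*$ is consistent.

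For this step I first observe that, since $S_i$ is closed, unfolding the definition of $S^*$ gives $(S_i \cup \{\ell\})^* = S_i \cup \{c : \Phi \vdash \forall(\ell,c)\}$ for any literal $\ell$ (note that $\ell$ itself lies in this set by rule~(T)). Suppose, for contradiction, that both closures contain complementary c-terms, say $c, \bar c \in (S_i \cup \{\ell_i\})^*$ and $d, \bar d \in (S_i \cup \{\bar{\ell_i}\})^*$. I first rule out the mixed case in which, say, $c \in S_i$ but $\bar c \notin S_i$: here $\Phi \vdash \forall(\ell_i, \bar c)$, and under the identification $\forall(e,f) \equiv \forall(\bar f, \bar e)$ this becomes $\Phi \vdash \forall(c, \bar{\ell_i})$. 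Since $c \in S_i$ and $\bar{\ell_i}$ is a literal, closedness of $S_i$ forces $\bar{\ell_i} \in S_i$, contradicting the hypothesis. Therefore both $c$ and $\bar c$ must be added through $\ell_i$, giving $\Phi \vdash \forall(\ell_i, c)$ and $\Phi \vdash \forall(\ell_i, \bar c)$; symmetrically $\Phi \vdash \forall(\bar{\ell_i}, d)$ and $\Phi \vdash \forall(\bar{\ell_i}, \bar d)$.

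The key maneuver is to deduce $\Phi \vdash \forall(\ell_i, \bar{\ell_i})$ (and symmetrically $\Phi \vdash \forall(\bar{\ell_i}, \ell_i)$) regardless of the form of $c$. If $c$ is a literal, rule~(A) applies directly with conclusion $\forall(\ell_i, \bar{\ell_i})$. If $c$ is non-literal, say $c = \forall p$ (the case $\overline{\forall p}$ being symmetric), I rewrite $\forall(\ell_i, \forall p)$ as $\forall(\overline{\forall p}, \bar{\ell_i})$ using the identification and apply rule~(B2) with middle term $\overline{\forall p}$ to this and $\forall(\ell_i, \overline{\forall p})$, again obtaining $\forall(\ell_i, \bar{\ell_i})$. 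Once $\Phi \vdash \forall(\ell_i, \bar{\ell_i})$ and $\Phi \vdash \forall(\bar{\ell_i}, \ell_i)$ are both in hand, rule~(N) on the former yields $\exists(\bar{\ell_i}, \bar{\ell_i})$, and rule~(D2) combined with $\forall(\bar{\ell_i}, \ell_i)$ produces $\exists(\bar{\ell_i}, \ell_i)$---an absurdity contradicting the consistency of $\Phi$. The main obstacle I anticipate is the non-literal subcase, where rule~(A) is unavailable and one must exploit both the identification convention and the fact that (B2) can carry an arbitrary c-term through its middle position.
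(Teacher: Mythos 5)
Your proof is correct, and it follows the same Lindenbaum-style skeleton as the paper (enumerate the literals, adjoin one at each stage, take the $*$-closure, pass to the union), but the crucial inductive step is argued by a genuinely different mechanism. The paper's process is deterministic: it always adjoins $\ell_i$ whenever $\bar{\ell}_i \notin S^{(i)}$, and shows this preserves consistency by deriving, from a putative clash $c, \bar{c}$, the formula $\forall(\ell_i,\bar{\ell}_i)$ via (B2) and then $\forall(m_0,\bar{\ell}_i)$ via (T) and (A), where $m_0$ is the literal guaranteed to lie in $S$; closedness then forces $\bar{\ell}_i \in S^{(i)}$, contradicting the triggering condition. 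Your version instead branches on $\ell_i$ versus $\bar{\ell}_i$, assumes both closures clash, extracts $\Phi \vdash \forall(\ell_i,\bar{\ell}_i)$ and $\Phi \vdash \forall(\bar{\ell}_i,\ell_i)$, and combines them via (N) and (D2) into the absurdity $\exists(\bar{\ell}_i,\overline{\bar{\ell}_i})$, contradicting the $\vdash$-consistency of $\Phi$ directly. All the rule applications check out (your exclusion of the ``mixed case'' via closedness under $\forall(c,\bar{\ell}_i)$ is exactly the paper's move, and your (B2) trick with middle term $\overline{\forall p}$ is sound under the paper's identification convention; note it would also cover the case where $c$ is a literal, making your appeal to (A) there optional). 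What your route buys is that the hypothesis that $S$ contains a literal is never used, so you prove a marginally stronger statement; what it costs is having to run the clash analysis on both branches rather than one.
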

\begin{proof}
Enumerate the literals as $\ell_0$, $\ell_1$, \ldots, and suppose $S$
is closed and consistent. Define $S^{(0)} = S$, and
\begin{equation*}
S^{(i+1)} = 
\begin{cases}
(S^{(i)} \cup \set{\ell_{i}})^* & \text{if $\bar{\ell}_i \not 
    \in S^{(i)}$}\\
S^{(i)} & \text{otherwise},
\end{cases}
\end{equation*}
for all $i \geq 0$.  It follows from Lemma~\ref{lma:dClosure} that
each $S^{(i)}$ is closed; we show by induction that it is also
consistent. From this it follows that $\bigcup_{0 \leq i} S^{(i)}$ is
consistent, thus proving the lemma. The case $i = 0$ is true by
hypothesis; so we suppose that $S^{(i)}$ is consistent, but
$S^{(i+1)}$ inconsistent, and derive a contradiction.  Let $m_0$ be a
literal in $S^{(0)}$, and hence in $S^{(i)}$; and let $c$ be a c-term
such that $c, \bar{c} \in S^{(i+1)}$.  Since $S^{(i)}$ is consistent,
by exchanging $c$ and $\bar{c}$ if necessary, we may assume that $c
\not \in S^{(i)}$. And since $S^{(i)}$ is also closed, we know that
either $c = \ell_i$ or $\Phi \vdash \forall(\ell_i, c)$. Indeed, by
rule (T), the latter case subsumes the former. Therefore, $\bar{c}
\not \in S^{(i)}$, since, otherwise, we would have $d = \bar{c} \in
S^{(i)}$ such that $\Phi \vdash \forall(d, \bar{\ell_i})$, whence
$\bar{\ell}_i \in S^{(i)}$, contrary to assumption. Since $\bar{c} \in
S^{(i+1)}$, it follows---again taking account of rule (T)---that $\Phi
\vdash \forall(\ell_i, \bar{c})$.  But then we have the derivation
\begin{equation*}
\infer[\mbox{{\small (A)}},]{\forall(\ell_i, \bar{m}_0)}
 {\infer[\mbox{{\small (T)}}]{\forall(\ell_i, \ell_i)}{}
  &
  \infer[\mbox{{\small (B2)}}]{\forall(\ell_i, \bar{\ell}_i)}
    {\infer*{\forall(\ell_i, c)}{} & \infer*{\forall(c, \bar{\ell}_i)}{}}}
\end{equation*}
so that $\Phi \vdash \forall(m_0, \bar{\ell}_i)$, again contrary to
the fact that $S^{(i)} \neq S^{(i+1)}$.
\end{proof}

Denote by $W$ the set of all closed, consistent and literal-complete
sets of c-terms.  In the sequel, we use the variables $u$, $v$, $w$ to
range over $W$.
\begin{lemma}
Suppose $\Phi \vdash \exists(\ell,c)$. Then there exists $w \in W$ such
that $\ell, c \in w$.
\label{lma:dL1}
\end{lemma}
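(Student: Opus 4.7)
I would set $S = \{\ell, c\}$, form its closure $S^*$ via Lemma~\ref{lma:dClosure}, show that $S^*$ is consistent as a set of c-terms, and then apply Lemma~\ref{lma:dExtension} to extend $S^*$ to a closed, consistent, literal-complete set $w$. Since $\ell, c \in S \subseteq S^* \subseteq w$, this will give the required $w \in W$. As a preliminary observation, the two clauses in the definition of $S^*$ tell us that every $x \in S^* \setminus S$ arises from some $z \in S$ with $\Phi \vdash \forall(z, x)$, where either $z$ is a literal or $x$ is a literal; combined with the hypothesis $\Phi \vdash \exists(\ell, c)$, rule (I), and an application of (D1) or (D2), this gives $\Phi \vdash \exists(\ell, x)$ for every $x \in S^*$. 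The observation is not used directly but anticipates the shape of the derivations below.

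The main step is consistency of $S^*$. Suppose for contradiction $d, \bar d \in S^*$. If both lie in $S$, then $\{\ell, c\} = \{d, \bar d\}$, so $c = \bar\ell$, and $\Phi \vdash \exists(\ell, \bar\ell)$ immediately contradicts consistency of $\Phi$. Otherwise, at least one of $d, \bar d$ enters by closure, supplying a premise $\Phi \vdash \forall(z, d)$ or $\Phi \vdash \forall(z, \bar d)$ for some $z \in \{\ell, c\}$. I would chain such universals using (B1) or (B2)---appealing to the silent equivalence $\forall(e, f) \equiv \forall(\bar f, \bar e)$ to place a literal in the first argument wherever the rule requires it---to obtain either $\Phi \vdash \forall(\ell, \bar\ell)$ or $\Phi \vdash \forall(c, \bar\ell)$. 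Feeding this into (D1) with $\exists(\ell, \ell)$, or into (D2) with $\exists(\ell, c)$, then produces the desired absurdity $\exists(\ell, \bar\ell)$.

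The hard part will be the subcase in which $c$ is not a literal and both $d, \bar d$ enter $S^*$ via the source $c$. The syntactic restriction in the definition of $S^*$ forces $d$ and $\bar d$ to be literals, so we have $\Phi \vdash \forall(c, d)$ and $\Phi \vdash \forall(c, \bar d)$. Chaining via (B1) or (B2) would require the non-literal $c$ in the first argument of some intermediate conclusion, which is disallowed; but this is precisely what rule (A) was designed for. It collapses $\forall(c, d)$ and $\forall(c, \bar d)$ directly to $\forall(c, \bar\ell)$, after which (D2) with $\exists(\ell, c)$ yields the absurdity. Once $S^*$ is known to be consistent, Lemma~\ref{lma:dExtension} delivers the required $w$, completing the proof.
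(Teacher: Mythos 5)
Your proposal is correct and follows essentially the same route as the paper: reduce the claim to consistency of $\set{\ell,c}^*$ via Lemmas~\ref{lma:dClosure} and~\ref{lma:dExtension}, then rule out an inconsistent pair $d,\bar{d}$ by a case analysis on whether each member comes from $\ell$ or from $c$, deriving an absurdity with (I), (B1)/(B2), (D1)/(D2), and---exactly as in the paper's Case~({\em iv})---rule (A) for the sub-case where $c$ is a non-literal sourcing both $d$ and $\bar{d}$. The only cosmetic difference is that you chain the universals first and apply a single existential rule at the end, whereas the paper propagates the existential step by step; and in the sub-case where $c$ is a literal sourcing both members, the absurdity obtained is $\exists(c,\bar{c})$ rather than $\exists(\ell,\bar{\ell})$, which is equally acceptable.
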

\begin{proof}
By Lemmas~\ref{lma:dClosure} and~\ref{lma:dExtension}, we need only
show that $\set{\ell,c}^*$ is consistent. So suppose otherwise.  Since
$\Phi$ is $\vdash$-consistent, $c \neq \bar{\ell}$. We
therefore have the following possible cases: ({\em i}) $\Phi \vdash
\forall(\ell,\bar{c})$; ({\em ii}) there exists $d$ such that $\Phi
\vdash \forall(\ell,d)$ and $\Phi \vdash \forall(\ell,\bar{d})$; ({\em
  iii}) there exists $d$ such that $\Phi \vdash \forall(\ell,d)$ and
$\Phi \vdash \forall(c,\bar{d})$; ({\em iv}) there exists $d$ such
that $\Phi \vdash \forall(c,d)$ and $\Phi \vdash
\forall(c,\bar{d})$. Note that, in Cases ({\em iii}) and ({\em iv}),
one of $c$ or $d$ must be a literal.  In Case ({\em i}), Rule (D2)
immediately yields $\Phi \vdash \exists(\ell,\bar{\ell})$.  In Case
({\em ii}), we have the derivation:
\begin{equation*}
\infer[{\mbox{\small (D2)}.}]{\exists (\ell, \bar{\ell})}
    {\infer[{\mbox{\small (D1)}}]{\exists (\ell,\bar{d})}
                 {\infer[{\mbox{\small (I)}}]{\exists(\ell,\ell)}
                                               {\infer*{\exists(\ell,c)}{}} & 
                  \infer*{\forall(\ell,\bar{d})}{}} 
           & \infer*{\forall (\bar{d},\bar{\ell})}{}}  
\end{equation*}
Likewise, in case ({\em iii}), we have the derivation:
\begin{equation*}
\infer[{\mbox{\small (D1) or (D2)}.}]{\exists (\ell, \bar{\ell})}
    {\infer[{\mbox{\small (D1) or (D2)}}]{\exists (\ell,\bar{d})}
                 {\infer*{\exists(\ell,c)}{} & 
                  \infer*{\forall(c,\bar{d})}{}} 
           & \infer*{\forall (\bar{d},\bar{\ell})}{}}  
\end{equation*}
In Case ({\em iv}), if $c$ is a literal, we proceed as in Case ({\em
  ii}), but with $\ell$ and $c$ exchanged; and if $d = m$ is a literal, we
have the derivation:
\begin{equation*}
\infer[{\mbox{\small (D2)}.}]{\exists (\ell, \bar{\ell})}
   {\infer*{\exists (\ell, c)}{}
    &
    \infer[{\mbox{\small (A)}}]{\forall (c,\bar{\ell})}
        {\infer*{\forall (c,m)}{}
         &
         \infer*{\forall (c,\bar{m})}{}
         }
    }
\end{equation*}
Since all cases contradict the supposed
$\vdash$-consistency of $\Phi$, the lemma is proved.
\end{proof}
\begin{lemma}
The set $W$ is not empty. 
\label{lma:dNonEmpty}
\end{lemma}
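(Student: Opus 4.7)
The plan is to reduce the claim, by Lemma~\ref{lma:dL1}, to the task of exhibiting some existential formula derivable from $\Phi$ (unless $\Phi$ is completely trivial, in which case we build a $w$ directly). The key leverage is the \emph{completeness} hypothesis on $\Phi$, together with rule (N), which lets us flip a universal commitment into an existential one.

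Concretely, I would fix any atom $p \in \bP$ and split into two cases. If $p$ occurs in $\Phi$, then by completeness, one of the two mutually negating formulas $\exists(p,p)$ or $\forall(p,\bar p)$ lies in $\Phi$. In the first subcase, Lemma~\ref{lma:dL1} applied to $\exists(p,p)$ yields $w \in W$ with $p \in w$. In the second subcase, one application of rule (N) gives $\Phi \vdash \exists(\bar p, \bar p)$, and Lemma~\ref{lma:dL1} again produces some $w \in W$ (now containing $\bar p$). If no atom of $\Phi$ exists, then $\Phi = \emptyset$, since every $\cH^\dagger$-formula mentions some atom; in that case I would observe that $\set{p}^{*} = \set{p}$ (the only derivations available from $\emptyset$ are instances of (T), which add no new c-term to $\set{p}$), so that $\set{p}$ is already closed, trivially consistent, and contains the literal $p$; Lemma~\ref{lma:dExtension} then produces a literal-complete extension lying in $W$.

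I do not expect a real obstacle here: everything needed is already packaged in Lemmas~\ref{lma:dClosure}, \ref{lma:dExtension}, \ref{lma:dL1} and in rule (N). The only subtle point is recognising that completeness on its own does not directly produce an existential formula---one must invoke (N) in the universal subcase---and handling the degenerate possibility that $\Phi$ contains no atoms at all. Both are routine once the case split is made.
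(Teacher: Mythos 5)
Your proof is correct and follows essentially the same route as the paper: reduce via Lemma~\ref{lma:dL1} to deriving some existential formula, and obtain one from completeness of $\Phi$ together with rule (N) in the universal subcase. The only difference is that you explicitly treat the degenerate case where $\Phi$ contains no atoms (so that completeness gives nothing), which the paper's one-line ``pick any $\ell$'' glosses over; your direct construction of a witness from $\set{p}$ there is sound.
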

\begin{proof}
By Lemma~\ref{lma:dL1}, it is necessary only to show that $\Phi \vdash
\exists(\ell,c)$ for some $\ell$ and $c$. Pick any $\ell$. If
$\exists(\ell, \ell) \in \Phi$, we are done. Otherwise, by
completeness of $\Phi$, $\forall(\ell, \bar{\ell}) \in \Phi$, so that,
by Rule (N), $\Phi \vdash \exists(\bar{\ell}, \bar{\ell})$, completing
the proof.
\end{proof}
The following lemma is the analogue, for the system $\sH^\dagger$, of
Lemma~\ref{lma:5}. This time, however, the lemma is trivial, because
we are assuming that $\Phi$ is complete.
\begin{lemma}
Suppose $\ell, c \in w \in W$. Then $\exists(\ell,c) \in \Phi$.
\label{lma:dL2}
\end{lemma}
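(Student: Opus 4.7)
The plan is to leverage the completeness of $\Phi$ together with the closure and consistency properties built into membership in $W$; as the author signals, this makes the argument a one-line contradiction.

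First I would argue by contraposition (or equivalently, reductio). Suppose $\exists(\ell,c) \notin \Phi$. Since $\Phi$ is complete, and since $\overline{\exists(\ell,c)} = \forall(\ell,\bar c)$, we must have $\forall(\ell,\bar c) \in \Phi$. By clause~(1) of the definition of $\vdash_{\sH^\dagger}$, $\Phi \vdash \forall(\ell,\bar c)$.

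Next I would invoke the closure of $w$. Because $\ell \in w$ and $\Phi \vdash \forall(\ell,\bar c)$, the first clause in the definition of $S^*$ places $\bar c \in w^* = w$. But then $w$ contains both $c$ and $\bar c$, contradicting the consistency of $w$ (which holds by virtue of $w \in W$). Hence $\exists(\ell,c) \in \Phi$ after all, and the lemma follows.

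There is essentially no obstacle here: the key work was done in setting up the closure operation $S \mapsto S^*$ and in Lemma~\ref{lma:dExtension}, which together guarantee that $w$ is closed under deducible universal statements while remaining consistent. The present lemma simply reads off the resulting tension with completeness of $\Phi$.
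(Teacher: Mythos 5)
Your proof is correct and is essentially the paper's own argument: completeness of $\Phi$ forces $\forall(\ell,\bar{c})\in\Phi$, closure of $w$ (via the first clause of the definition of $S^*$, using the literal $\ell\in w$) then puts $\bar{c}$ into $w$, and consistency of $w$ gives the contradiction. Nothing to add.
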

\begin{proof}
Suppose $\exists(\ell,c) \not \in \Phi$.  By the completeness of
$\Phi$, $\forall(\ell,\bar{c}) \in \Phi$, whence $\bar{c} \in w$,
because $w$ is closed. This contradicts the consistency of $w$.
\end{proof}
\begin{lemma}
Suppose $w \in W$, and $(\forall \ell) \in w$, where $\Phi \vdash
\exists(\ell,\ell)$. Then $\ell \in w$.
\label{lma:dL2.5}
\end{lemma}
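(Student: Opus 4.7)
The statement should admit a very short proof, essentially a direct application of rule (H4) combined with the closure property of $w$.

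The plan is to show $\Phi \vdash \forall(\forall\ell,\ell)$ and then appeal to the fact that $w$ is closed. Starting from the hypothesis $\Phi \vdash \exists(\ell,\ell)$, one application of rule (H4) yields $\Phi \vdash \forall(\bar{\ell}, \overline{\forall\ell})$. Under the silent identification of $\forall(e,f)$ with $\forall(\bar{f},\bar{e})$ (used throughout the paper), this is the same formula as $\forall(\forall\ell,\ell)$, which is what we want.

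Now observe that the non-literal c-term $c := \forall \ell$ lies in $w$ by hypothesis, and $\Phi \vdash \forall(c,\ell)$ has been established with $\ell$ a literal. The definition of $S^*$ includes the set $\{\ell' : \text{there exists } c' \in S \text{ with } \Phi \vdash \forall(c',\ell')\}$, so since $w = w^*$ (i.e.\ $w$ is closed), we conclude $\ell \in w$.

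There is no real obstacle here: the rule (H4) was designed precisely to internalise the fact that a witness for $\ell$ forces $\forall \ell$ to pick out that unique witness, and the closure condition on $w$ is set up to absorb consequences of $\forall$-formulas derivable from $\Phi$. The only point that requires any care is noting that $\forall(\bar{\ell},\overline{\forall\ell})$ and $\forall(\forall\ell,\ell)$ are identified, so that the derived formula matches the shape $\forall(c,\ell)$ required by the second clause of the definition of $S^*$.
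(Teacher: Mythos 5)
Your proof is correct and hinges on exactly the same ingredient as the paper's: a single application of rule (H4), read modulo the identification of $\forall(e,f)$ with $\forall(\bar{f},\bar{e})$, followed by the closedness of $w$. The only difference is cosmetic --- the paper argues by contradiction (literal-completeness puts $\bar{\ell}$ in $w$, whence the second clause of the definition of $S^*$ puts $\overline{\forall \ell}$ in $w$, violating consistency), whereas you argue directly through the third clause, so your version needs neither literal-completeness nor consistency of $w$.
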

\begin{proof}
Suppose otherwise. By the literal-completeness of $w$, $\bar{\ell} \in
w$.  But we have the derivation
\begin{equation*}
\infer[{\mbox{\small (H4)},}]{\forall(\bar{\ell}, \overline{\forall \ell})}
    {\exists (\ell, \ell)}
\end{equation*}
so that, since $w$ is closed, $\overline{\forall \ell} \in w$,
contradicting the consistency of $w$.
\end{proof}
\begin{lemma}
Suppose $u, v, w \in W$ with $(\forall \ell) \in u$, $(\forall \ell) \in v$
and $\ell \in w$. Then $u = v$.
\label{lma:dL3}
\end{lemma}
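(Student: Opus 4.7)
The plan is to imitate the strategy of Lemma~\ref{lma:6}: establish $\Phi \vdash \forall(\ell,c)$ for every $c \in u$, which together with the closure of $v$ and the presence of $\ell$ in $v$ will force $c \in v$. By symmetry we then obtain $u = v$.

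My first step is to show that the literal $\ell$ itself lies in both $u$ and $v$. Since $\ell \in w$ and $w \in W$, Lemma~\ref{lma:dL2} applied with both arguments equal to $\ell$ gives $\exists(\ell,\ell) \in \Phi$, so in particular $\Phi \vdash \exists(\ell,\ell)$. Since $\forall \ell \in u$, Lemma~\ref{lma:dL2.5} now yields $\ell \in u$, and the identical argument applied to $v$ gives $\ell \in v$.

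The main step is then the derivation of $\Phi \vdash \forall(\ell,c)$ for an arbitrary $c \in u$. Since $\ell,c \in u$ and $\ell, \forall \ell \in u$, two applications of Lemma~\ref{lma:dL2} give both $\exists(\ell,c) \in \Phi$ and $\exists(\ell,\forall \ell) \in \Phi$. Rule (HH1), with its meta-variable $m$ instantiated to $\ell$, then produces the desired formula:
\begin{equation*}
\infer[(\mbox{\small HH1}).]{\forall(\ell,c)}
      {\exists(\ell,c) & \exists(\ell,\forall \ell)}
\end{equation*}
Because $\ell \in v$ and $v$ is closed, the defining clause of the closure operator forces $c \in v$. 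Hence $u \subseteq v$; the converse inclusion follows by interchanging the roles of $u$ and $v$, giving $u = v$.

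The only potential stumbling block is noticing that one is allowed to take $m = \ell$ in (HH1); once this observation is made, everything reduces to two applications of Lemma~\ref{lma:dL2} and the closure condition defining membership in $W$. Unlike Lemma~\ref{lma:6}, there is no case-split on the quantifier $Q \in \set{\forall,\exists}$, because the completeness of $\Phi$ together with rule (HH1) delivers the required universal conclusion directly from a pair of existential premises.
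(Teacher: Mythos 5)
Your proof is correct and follows essentially the same route as the paper's: both derive $\exists(\ell,\ell)\in\Phi$ from $\ell\in w$ via Lemma~\ref{lma:dL2}, use Lemma~\ref{lma:dL2.5} to place $\ell$ in $u$ and $v$, and then apply (HH1) with $m=\ell$ to the two existentials $\exists(\ell,c)$ and $\exists(\ell,\forall\ell)$ supplied by Lemma~\ref{lma:dL2}, concluding $c\in v$ by closedness. Your closing observation that completeness of $\Phi$ removes the case-split of Lemma~\ref{lma:6} is exactly the simplification the paper intends.
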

\begin{proof}
By Lemma~\ref{lma:dL2}, $\Phi \models \exists(\ell, \ell)$. By
Lemma~\ref{lma:dL2.5}, $\ell \in w$ and $\ell \in v$. Suppose also $c
\in u$.  By Lemma~\ref{lma:dL2} again, $\exists(\ell, c) \in \Phi$,
and $\exists(\ell, \forall \ell) \in \Phi$. Therefore, we have the
derivation
\begin{equation*}
\infer[{\mbox{\small (HH1)},}]{\forall(\ell, c)}
      {\exists(\ell, c) &
       \exists (\ell, \forall \ell)} 
\end{equation*}
and $c \in v$. Thus, $u \subseteq v$. The reverse inclusion follows
symmetrically.
\end{proof}

Analogously to Section~\ref{sec:refutationComplete}, we call $w \in W$
{\em special} if it contains a c-term of the form $\forall \ell$ such
that $\Phi \vdash \exists(\ell, \ell)$; and we build the structure
$\fA$ as follows:
\begin{eqnarray*}
A & = &\set{\langle w, 0\rangle \mid w \in W \mbox{ is special}} \cup\\
& & \qquad \set{\langle w, i \rangle \mid w \in W \mbox{ is non-special, } i \in \set{-1,1}}\\ 
p^\fA & = &\set{ \langle w,i \rangle \in A \mid p \in w}, \text{ for any atom $p$.}
\end{eqnarray*}
We remark that, since, by Lemma~\ref{lma:dNonEmpty}, $W$ is non-empty,
$A$ is non-empty; so this construction is legitimate.
\begin{lemma}
Suppose $c$ is a c-term and $a = \langle w,i \rangle$. Then $c \in w$
implies $a \in c^\fA$.  Further, if $\ell$ is a literal, Then $a \in
\ell^\fA$ implies $\ell \in w$.
\label{lma:dL4}
\end{lemma}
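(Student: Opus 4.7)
The plan is to proceed by case analysis on the form of the c-term $c$. The converse assertion is formulated only for literals $\ell$ and is immediate from the definitions: if $\ell = p$ is an atom, then $a \in p^\fA$ forces $p \in w$ by construction of $\fA$; if $\ell = \bar{p}$, then $a \in A \setminus p^\fA$ forces $p \notin w$, whence $\bar{p} \in w$ by literal-completeness of $w$. The forward direction for the literal cases is equally quick: $p \in w$ places $a$ in $p^\fA$ by definition, while $\bar{p} \in w$ together with the consistency of $w$ gives $p \notin w$ and hence $a \in \bar{p}^\fA$.

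For $c = \forall p$, suppose $(\forall p) \in w$ and let $b = \langle u, j\rangle \in p^\fA$, so that $p \in u$. The strategy is to prove $(\forall p) \in u$, invoke Lemma~\ref{lma:dL3} to collapse $u$ and $w$, and then read off $i = j = 0$ from specialness. Pick any literal $\ell' \in w$ (available by literal-completeness); Lemma~\ref{lma:dL2} gives $\exists(\ell', \forall p) \in \Phi$, Rule (H2) then produces $\forall(p, \forall p) \in \Phi$, and closure of $u$ together with $p \in u$ places $(\forall p) \in u$. Lemma~\ref{lma:dL3} now yields $u = w$, and Lemma~\ref{lma:dL2} applied to $p, p \in w$ shows $\Phi \vdash \exists(p, p)$, so $w$ is special. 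The construction of $\fA$ then forces $i = j = 0$ and hence $a = b$.

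The case $c = \overline{\forall p}$ is the genuine obstacle: I must exhibit some $b \in p^\fA$ with $b \neq a$. Picking any literal $\ell' \in w$, Lemma~\ref{lma:dL2} gives $\exists(\ell', \overline{\forall p}) \in \Phi$, Rule (H3) yields $\exists(p, \overline{\forall \ell'}) \in \Phi$, and Lemma~\ref{lma:dL1} produces $v \in W$ with $p, \overline{\forall \ell'} \in v$. I split on whether $i = 0$. If $i \neq 0$, then $w$ is non-special and $\langle w, -i\rangle \in A$; when $v = w$ we have $p \in w$, and this second copy lies in $p^\fA$ and differs from $a$; when $v \neq w$, any element of $A$ arising from $v$ works. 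If $i = 0$, then $w$ is special, with some $(\forall q) \in w$ satisfying $\Phi \vdash \exists(q, q)$, and by Lemma~\ref{lma:dL2.5} also $q \in w$; refining the choice to $\ell' = q$, the relation $\overline{\forall q} \in v$ is incompatible by consistency with $(\forall q) \in w$, forcing $v \neq w$, so any $\langle v, j\rangle \in A$ serves as the required witness. This exhausts all cases.
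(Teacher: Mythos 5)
Your proof is correct and follows essentially the same route as the paper's: immediate verification for literals, the $(\forall p)$ case via (H2), closure, and Lemma~\ref{lma:dL3}, and the $(\overline{\forall p})$ case via (H3) and Lemma~\ref{lma:dL1} with the split on whether $w$ is special. The only cosmetic differences are that you organize the $i \neq 0$ subcase by whether $v = w$ rather than by choice of sign, and you occasionally write ``$\in \Phi$'' where strictly only $\Phi \vdash$ is delivered (harmless, since closure is defined via derivability).
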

\begin{proof}
We consider the possible forms of $c$ in turn.

\bigskip

\noindent
1. $c = p$ is an atom: By construction of $\fA$, $c \in w$ if and only
if $p \in w$.

\bigskip

\noindent
2. $c = \overline{p}$: By consistency and literal-completeness of $w$,
$c \in w$ if and only if $p \not \in w$. The result then follows by
Case 1.

\bigskip

\noindent
3. $c = \forall \ell$: Suppose $c \in w$, and $a' = \langle w', i'
\rangle$ is such that $a' \in \ell^\fA$. By Cases~1 and~2, $\ell \in
w'$. Pick any literal $m \in w$. By Lemma~\ref{lma:dL2},
$\exists(\ell, \ell) \in \Phi$ and $\exists(m, \forall \ell) \in
\Phi$.  Thus, we have the derivation
\begin{equation*}
\infer[{\mbox{\small (H2)},}]{\forall (\ell, \forall \ell)}
    {\exists (m, \forall \ell)}  
\end{equation*}
whence $\forall \ell \in w'$, and therefore, by Lemma~\ref{lma:dL3},
$w = w'$.  Indeed, since $w$ is special, the construction of $A$
ensures that $i = i' = 0$, and hence $a = a'$. Thus, $a' \in \ell^\fA$
implies $a = a'$, whence $a \in c^\fA$, as required.

\bigskip

\noindent
4. $c = \overline{\forall \ell}$: Suppose $c \in w$, and assume for
the time being that $i \neq 0$.  Pick any literal $m \in w$. By
Lemma~\ref{lma:dL2}, $\exists(m,\overline{\forall \ell}) \in \Phi$, so
that we have the derivation
\begin{equation*}
\infer[{\mbox{\small (H3)}.}]{\exists (\ell, \overline{\forall m})}
    {\exists (m, \overline{\forall \ell})}
\end{equation*}
By Lemma~\ref{lma:dL1}, there exists $w' \in W$ such that $\ell \in
w'$, and by construction of $A$ and Cases~1 and~2 above, there exists
$i' \in \set{-1,0,1}$ such that both $\langle w', i' \rangle \in
\ell^\fA$ and $\langle w', -i' \rangle \in \ell^\fA$. Since $i \neq 0$
we may suppose $i \neq i'$, so that there exists $a' \in A$ with $a'
\neq a$ and $a' \in \ell^\fA$. Hence $a \in c^\fA$, as required.  Now
assume $i = 0$. Then $w$ is special, so suppose $(\forall m) \in w$,
with $\Phi \vdash \exists(m,m)$. By Lemma~\ref{lma:dL2.5}, $m \in w$,
so that, by Lemma~\ref{lma:dL2}, $\exists(m,\overline{\forall \ell})
\in \Phi$.  Again, then, by (H3) and Lemma~\ref{lma:dL1}, there
exists $w' \in W$ such that $\ell \in w'$ and also $\overline{\forall
  m} \in w'$.  By construction of $\fA$ and Cases~1 and~2 above, there
exists $i' \in \set{-1,0,1}$ such that $\langle w', i' \rangle \in
\ell^\fA$.  Since $\forall m \in w$, we have $w \neq w'$, and
therefore $a \neq a'$. Hence $a \in c^\fA$, as required.
\end{proof}
\begin{proof}[Proof of Theorem~\ref{theo:completeStarDagger}]
Since we are dealing with an indirect proof relation, it suffices to
show that every $\Vdash_{\sH^\dagger}$-consistent set of formulas is
true in some structure. Let $\Phi$ be
$\Vdash_{\sH^\dagger}$-consistent. By Lemma~\ref{lma:lindenbaum}, we
may further assume without loss of generality that $\Phi$ is
complete. Certainly, $\Phi$ is $\vdash_{\sH^\dagger}$-consistent.  Let
$\fA$ be constructed as described above: we show that $\fA \models
\Phi$.  For suppose $\phi = \exists(\ell,c) \in \Phi$. By
Lemma~\ref{lma:dL1}, there exists $a = \langle w,i \rangle \in A$ such
that $\ell, c \in w$. By Lemma~\ref{lma:dL4}, $a \in \ell^\fA$ and $a
\in c^\fA$; thus, $\fA \models \phi$.  On the other hand, suppose
$\phi = \forall(\ell,c) \in \Phi$. If $a = \langle w,i \rangle \in
\ell^\fA$, then, by (the second statement of) Lemma~\ref{lma:dL4},
$\ell \in w$, whence, by the fact that $w$ is closed, $c \in w$,
whence $a \in c^\fA$, by Lemma~\ref{lma:dL4}; thus, $\fA \models
\phi$.
\end{proof}

Turning now to the language $\cH^{*\dagger}$, let $\sH^{*\dagger}$
consist of the following rules:
\begin{enumerate}
\item `little' rules:
\begin{equation*}
\infer[(\mbox{\small I})]{\exists(e,e)}{\exists(e,f)}
\hspace{1cm}
\infer[(\mbox{\small T})]{\forall(e,e)}{}
\hspace{1cm}
\infer[(\mbox{\small A})]{\forall(f, \bar{e})}
                        {\forall(e,\bar{e})}
\hspace{1cm}
\infer[(\mbox{\small N});]{\exists(\bar{e}, \bar{e})}
                        {\forall(e,\bar{e})}
\end{equation*}
\item generalizations of classical syllogisms:
\begin{equation*}
\infer[(\mbox{\small B})]{\forall(e, g)}
                        {\forall(e,f) & \forall(f,g)}
\hspace{1cm}
\infer[(\mbox{\small D});]{\exists(e, g)}
                        {\exists(e,f) & \forall(f, g)}
\end{equation*}
\item the `Hamiltonian' rules:
\begin{equation*}
\begin{array}{lll}
\infer[{\mbox{\small (HH1)}}]{\forall (\ell, e)}
    {\exists (\ell, e) & \exists(m, \forall \ell)}
\hspace{1cm} 
&
\infer[{\mbox{\small (H2)}}]{\forall (\ell, \forall \ell)}
    {\exists (e, \forall \ell)}\\  
\ \\
\infer[{\mbox{\small (H3)}}]{\exists (m, \overline{\forall \ell})}
    {\exists (\ell, \overline{\forall m})}  
& 
\infer[{\mbox{\small (H4)}.}]{\forall(\bar{\ell}, \overline{\forall \ell})}
    {\exists (\ell, \ell)}
\hspace{0.5cm} 
\end{array}
\end{equation*}
\end{enumerate}
Where rules in $\sH^{*\dagger}$ are obvious generalizations of
counterparts in $\sH^{\dagger}$, we have kept the same names.
Otherwise, $\sH^{*\dagger}$ is simpler than $\sH^{\dagger}$: in
particular, Rule (A) now has only one premise, and Rules (B1) and (B2)
have been subsumed under the more general Rule (B); similarly for (D1)
and (D2).  The proof that $\Vdash_{\sH^{*\dagger}}$ is complete for
$\cH^{*\dagger}$ proceeds as for Theorem~\ref{theo:completeDagger},
the essential difference being that various complications arising from
the restricted syntax of $\sH^{\dagger}$ disappear. Consequently, we
confine ourselves to a proof sketch.

Let $\Phi$ be a complete set of $\cH^{*\dagger}$-formulas such that
$\Phi$ is $\vdash_{\sH^{*\dagger}}$-consistent.  We build a structure
$\fA$, and show that $\fA \models \Phi$, this time writing $\vdash$ to
mean $\vdash_{\sH^{*\dagger}}$.  The elements of $A$ are constructed
using sets of e-terms.  Call a set of e-terms $S$ {\em consistent} if,
for every e-term $e$, $e \in S$ implies $\bar{e} \not \in S$, and {\em
  term-complete} if, for every e-term $e$, $e \not \in S$ implies
$\bar{e} \in S$. If $S$ is a set of e-terms, define
\begin{equation*}
S^* = \set{f \mid \mbox{there exists $e \in S$ such that $\Phi \vdash \forall(e,f)$}},
\end{equation*}
and we call $S$ {\em closed} if $S = S^*$.  Note that the definition
of $S^*$ for the system $\sH^{*\dagger}$ is simpler than the
corresponding definition for $\sH^{\dagger}$.  For any set $S$ of
e-terms, it is immediate from Rule (T) that $S \subseteq S^*$, and
immediate from Rule (B) that $S^*$ is closed (the analogue of
Lemma~\ref{lma:dClosure}).  We now define $W$ to be the set of all
closed, consistent and term-complete sets of e-terms; and we show,
analogously to Lemmas~\ref{lma:dL1}--\ref{lma:dL2}, that $W$ is
non-empty, and that, for any e-terms $e$ and $f$, $\exists(e,f) \in
\Phi$ if and only if there exists $w \in W$ such that $e, f \in
w$. Further, by (HH1) and (H4), we easily show, analogously to
Lemma~\ref{lma:dL3}, that if $u, v, w \in W$ with $(\forall \ell) \in
u$, $(\forall \ell) \in v$ and $\ell \in w$, then $u = v$. Defining
\begin{eqnarray*}
A & = &\set{\langle w, 0\rangle \mid w \in W \mbox{ is special}} \cup\\
& & \qquad \set{\langle w, i \rangle \mid w \in W \mbox{ is non-special, } i \in \set{-1,1}}\\ 
p^\fA & = &\set{ \langle w,i \rangle \in A \mid p \in w}, \text{ for any atom $p$,}
\end{eqnarray*}
we show, analogously to Lemma~\ref{lma:dL4}, that, for any e-term $e$
and any domain element $a = \langle w,i \rangle$, $a \in e^\fA$ if and
only if $e \in w$. Note that this is a stronger statement than
Lemma~\ref{lma:dL4}, and uses the fact that $w$ is term-complete, not
just literal-complete. The remainder of the argument then proceeds as
for Theorem~\ref{theo:completeDagger}, but exploiting the
fact that, if $e$ is an arbitrary e-term (not just a literal) and
$\langle a, i \rangle \in e^\fA$, then $e \in w$.

We finish with a proof of the claim made, in passing, at the end of
Section~\ref{sec:synsem}, regarding models of sets of
$\cH^{*\dagger}$-formulas.
\begin{theorem}
Let $\Phi$ be a set of $\cH^{*\dagger}$-formulas. If $\Phi$ has a model
with three or more elements, then it has arbitrarily large models.
\label{theo:largeModels}
\end{theorem}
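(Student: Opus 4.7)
The plan is to establish a strengthening of the claim: every $\fB \models \Phi$ with $|B| \ge 3$ has an extension $\fB' \models \Phi$ with $|B'| = |B|+1$. Iterating this lemma, starting from the given $\fA$, produces models of every cardinality $\ge |A|$, hence arbitrarily large ones. The crucial combinatorial consequence of $|B| \ge 3$ that I will exploit is that $|\ell^\fB| + |\bar\ell^\fB| = |B|\ge 3$, so for every literal $\ell$ at most one of $|\ell^\fB|$ and $|\bar\ell^\fB|$ can be $\le 1$.

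To construct $\fB'$, fix any $b \in B$ and a fresh $b^*$, and put $B' = B \cup \{b^*\}$. For each atom $p$ occurring in $\Phi$, decide $b^* \in p^{\fB'}$ by three exhaustive, mutually exclusive cases: if $|p^\fB| \le 1$, put $b^*\notin p^{\fB'}$; if $|\bar p^\fB| \le 1$, put $b^* \in p^{\fB'}$; if both cardinalities are $\ge 2$, put $b^* \in p^{\fB'}$ iff $b\in p^\fB$ (so $b^*$ duplicates $b$'s literal type on such atoms). The crucial designed-in property is that whenever $b^*\in \ell^{\fB'}$ for a literal $\ell$, one has $|\ell^\fB|\ge 2$ and hence $|\ell^{\fB'}|\ge 3$, so $(\forall\ell)^{\fB'}=\emptyset$; conversely, whenever $b^*\notin\ell^{\fB'}$, one has $\ell^{\fB'} = \ell^\fB$. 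A short case analysis on the shape of $e$ turns this into the identity $e^{\fB'}\cap B = e^\fB$ for every e-term $e$, which immediately transfers all existential formulas of $\Phi$ from $\fB$ to $\fB'$.

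The bulk of the work is verifying the universal formulas $\forall(e,f)\in \Phi$. The restriction identity already gives $e^{\fB'}\cap B \subseteq f^{\fB'}\cap B$, so the obligation reduces to $b^* \in e^{\fB'} \Rightarrow b^* \in f^{\fB'}$. Membership of $b^*$ is transparent: for a literal $\ell$ it is dictated by the construction, while a direct check shows $b^*\in(\forall\ell)^{\fB'}$ iff $\ell^\fB = \emptyset$ and $b^*\in(\overline{\forall\ell})^{\fB'}$ iff $\ell^\fB\ne\emptyset$. So the verification reduces to the nine combinations of shapes (literal, $\forall\ell$, $\overline{\forall\ell}$) for $e$ and $f$.

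The main obstacle is this nine-case verification. The trickiest subcases are $\forall(\forall\ell,f)$ with $\ell^\fB=\emptyset$ and $\forall(\overline{\forall\ell},f)$ with $|\ell^\fB|=1$, since in each $b^*\in e^{\fB'}$ even though $e^\fB$ gains no new witness on $B$, so $b^*\in f^{\fB'}$ becomes a genuine demand. The saving grace is that the premise $e^\fB\subseteq f^\fB$, together with the enumeration $(\forall\ell)^\fB\in\{\emptyset,\{b'\},B\}$ and $|B|\ge 3$, forces $|f^\fB|\ge |B|-1$ when $f$ is a literal (triggering the second clause of the construction and placing $b^*$ in $f^{\fB'}$), forces $m^\fB=\emptyset$ when $f=\forall m$, and forces $|m^\fB|\ge 2$ when $f=\overline{\forall m}$. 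In each subcase the characterisation of $b^*$'s membership in $f^{\fB'}$ then yields the needed implication.
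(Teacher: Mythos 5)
Your proof is correct, but it takes a genuinely different route from the paper's. The paper first reduces to the case where $\Phi$ is complete and each element $b_i$ of the given model $\fB$ is the unique realizer of some literal $\ell_i$ (otherwise elements can be duplicated freely), then argues \emph{syntactically} -- via rule (B) -- that the type $\set{\bar{\ell}_1,\ldots,\bar{\ell}_n}^*$ is consistent, and finally reads off an extra element from the canonical model built in the proof of Theorem~\ref{theo:completeStarDagger}. Your argument is instead purely semantic and self-contained: you adjoin a single fresh element $b^*$ whose literal type is engineered so that (\emph{i}) any literal satisfied by $b^*$ already had at least two realizers in $\fB$, which freezes the extensions of all terms $\forall \ell$ and $\overline{\forall \ell}$ on $B$ and gives $e^{\fB'}\cap B=e^{\fB}$ for every e-term $e$, and (\emph{ii}) on literals with large extension and large co-extension $b^*$ duplicates a chosen $b\in B$, which is exactly what pushes $b^*$ through universal premises whose subject is a literal. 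The supporting facts you invoke -- that $(\forall\ell)^{\fB}$ lies in $\set{\emptyset,\set{b'},B}$ and can have two or more elements only when $\ell^{\fB}=\emptyset$, and that $|B|\geq 3$ precludes $|\ell^{\fB}|\leq 1$ and $|\bar{\ell}^{\fB}|\leq 1$ holding together -- are sound, and the nine-case verification goes through as described; I checked the delicate subcases, including $\forall(\ell,m)$ with $b^*\in\ell^{\fB'}$ by duplication. (One harmless overstatement: when $e=\overline{\forall\ell}$ with $|\ell^{\fB}|=1$ and $f=\overline{\forall m}$, the premise forces only $m^{\fB}\neq\emptyset$, not $|m^{\fB}|\geq 2$; but $m^{\fB}\neq\emptyset$ is all you need.) What the paper's route buys is economy, since the canonical model is already in hand; what yours buys is independence from the proof theory and the completion step, together with the sharper quantitative conclusion that every model with at least three elements extends to a model with exactly one more element.
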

\begin{proof}
Again, we may assume without loss of generality that $\Phi$ is a
complete set of formulas.  If $S$ is a set of $e$-terms, we use the
notation $S^*$ in the sense of the above sketch proof of
Theorem~\ref{theo:completeStarDagger}.  Suppose $\fB \models \Phi$,
with $\omega > |B| \geq 3$.  Write $B = \set{b_1, \ldots, b_n}$. We
may assume that each $b_i$ is the unique element satisfying some
literal $\ell_i$, since, otherwise, we can add as many duplicate
copies of $b_i$ to $\fB$ as we like without affecting the truth of any
$\cH^{*\dagger}$-formulas. It follows that, for all $i$ ($1 \leq i
\leq n$), the set of e-terms $\set{\bar{\ell}_1, \ldots,
  \bar{\ell}_{i-1}, \ell_i, \bar{\ell}_{i+1}, \ldots,
  \bar{\ell}_{n}}^*$ is consistent.  We claim that $\set{\bar{\ell}_1,
  \ldots, \bar{\ell}_n}^*$ is also consistent. For otherwise, it is
immediate from rule (B) that, for some $j$, $k$ ($1 \leq j \leq k \leq
n$), $\Phi \vdash_{\cH^{*\dagger}} \forall (\bar{\ell}_j, \ell_k)$,
contradicting the assumption that $b_k$ is the unique element of $\fB$
satisfying $\ell_k$ (remember that $n \geq 3$). Now let $\fA$ be the
model constructed in the proof of
Theorem~\ref{theo:completeStarDagger}. Since $\set{\bar{\ell}_1,
  \ldots, \bar{\ell}_{n}}^*$ is consistent, it has a consistent
complete extension, $w$, so that $\fA$ contains some element $a =
\langle w, h \rangle$ satisfying $\bar{\ell}_1$, \ldots,
$\bar{\ell}_n$.  But, by the same token, $\fA$ also contains an
element $a_i$ satisfying $\ell_j$ if and only if $i = j$. Thus, $\fA$
has cardinality at least $n+1$.
\end{proof}
\section*{Acknowledgements}
The author wishes to express his gratitude to Lawrence S.~Moss for
comments on an earlier version of this paper.
\bibliographystyle{plain} \bibliography{hs}
\end{document}